\def\norm #1{\left\|#1\right\|}
\def\twon #1{\left\|#1\right\|_2}
\def\onen #1{\left\|#1\right\|_1}
\def\frobn #1{\left\|#1\right\|_{\text{F}}}
\def\atomn #1{\left\|#1\right\|_{\cA}}
\def\datomn #1{\left\|#1\right\|_{\cA}^*}
\def\abs #1{\left|#1\right|}
\def\inp #1{\left\langle#1\right\rangle}
\def\st{\text{subject to }}
\def\bC{\mathbb{C}}
\def\bR{\mathbb{R}}
\def\bE{\mathbb{E}}
\def\bP{\mathbb{P}}
\def\bT{\mathbb{T}}
\def\bS{\mathbb{S}}
\def\m #1{\boldsymbol{#1}}
\def\cA{\mathcal{A}}
\def\cE{\mathcal{E}}
\def\cF{\mathcal{F}}
\def\cJ{\mathcal{J}}
\def\cK{\mathcal{K}}
\def\cN{\mathcal{N}}
\def\cT{\mathcal{T}}
\def\ccn{\mathcal{CN}}
\def\bee{\begin{equation}}
\def\ene{\end{equation}}
\def\beq{\begin{eqnarray}}
\def\enq{\end{eqnarray}}
\def\lentwo{\setlength\arraycolsep{2pt}}
\newtheorem{lem}{Lemma}
\newtheorem{cor}{Corollary}
\newtheorem{thm}{Theorem}
\newtheorem{prop}{Proposition}
\def\equ #1{\begin{equation}#1\end{equation}}
\def\equa #1{\begin{eqnarray}#1\end{eqnarray}}
\def\sbra #1{\left(#1\right)}
\def\mbra #1{\left[#1\right]}
\def\lbra #1{\left\{#1\right\}}
\def\tr #1{\text{tr}#1}
\def\rank #1{\text{rank}#1}
\def\st {\text{ subject to }}
\title{On the Sample Complexity of Multichannel Frequency Estimation via Convex Optimization}
\author{Zai Yang, {\em Member, IEEE}, Jinhui Tang, {\em Senior Member, IEEE}, Yonina C.~Eldar, {\em Fellow, IEEE},\\ and Lihua Xie, {\em Fellow, IEEE}
\thanks{Part of this work will be presented at the 23rd IEEE International Conference on Digital Signal Processing (DSP 2018), Shanghai, China, November 2018 \cite{yang2018average}.

Z.~Yang is with the School of Automation, Nanjing University of Science and Technology, Nanjing 210094, China (e-mail: yangzai@njust.edu.cn).

J.~Tang is with the School of Computer Science and Engineering, Nanjing University of Science and Technology, Nanjing 210094, China (e-mail: jinhuitang@njust.edu.cn).

Y.~C.~Eldar is with the Department of Electrical Engineering, Technion--Israel Institute of Technology, Haifa 32000, Israel (e-mail: yonina@ee.technion.ac.il).

L.~Xie is with the School of Electrical and Electronic Engineering, Nanyang Technological University, Singapore 639798 (e-mail: elhxie@ntu.edu.sg).}}
\begin{document}
\maketitle

\begin{abstract}
The use of multichannel data in line spectral estimation (or frequency estimation) is common for improving the estimation accuracy in array processing, structural health monitoring, wireless communications, and more. Recently proposed atomic norm methods have attracted considerable attention due to their provable superiority in accuracy, flexibility and robustness compared with conventional approaches. In this paper, we analyze atomic norm minimization for multichannel frequency estimation from noiseless compressive data, showing that the sample size per channel that ensures exact estimation decreases with the increase of the number of channels under mild conditions. In particular, given $L$ channels, order $K\sbra{\log K} \sbra{1+\frac{1}{L}\log N}$ samples per channel, selected randomly from $N$ equispaced samples, suffice to ensure with high probability exact estimation of $K$ frequencies that are normalized and mutually separated by at least $\frac{4}{N}$. Numerical results are provided corroborating our analysis.
\end{abstract}

\textbf{Keywords:} Multichannel frequency estimation, atomic norm, convex optimization, sample complexity, average case analysis, compressed sensing.

\section{Introduction}
%

Line spectral estimation or frequency estimation is a fundamental problem in statistical signal processing\cite{stoica2005spectral}. It refers to the process of estimating the frequency and amplitude parameters of several complex sinusoidal waves from samples of their superposition. In this paper, we consider a compressive multichannel setting in which each channel observes a subset of equispaced samples and the sinusoids among the multiple channels share the same frequency profile.

The full $N\times L$ data matrix $\m{Y}^o$ is composed of equispaced samples $\lbra{y^o_{jl}}$ that are given by
\equ{y_{jl}^o = \sum_{k=1}^K e^{i2\pi(j-1)f_k} s_{kl}, \quad j=1,\dots,N, \; l = 1,\dots,L, \label{eq:model}}
where $N$ is the full sample size per channel, $L$ is the number of channels, $i = \sqrt{-1}$, and $s_{kl}$ denotes the unknown complex amplitude of the $k$th frequency in the $l$th channel. We assume that the frequencies $\lbra{f_k}$ are normalized by the sampling rate and belong to the unit circle $\bT = \mbra{0,1}$, where $0$ and $1$ are identified. Letting $\m{a}\sbra{f} = \mbra{1,e^{i2\pi f},\dots,e^{i2\pi(N-1)f}}^T$ represent a complex sinusoid of frequency $f$, and $\m{s}_k = \mbra{s_{k1},\dots,s_{kL}}$ the coefficient vector of the $k$th component, we can express the full data matrix $\m{Y}^o$ as
\equ{\m{Y}^o = \sum_{k=1}^K \m{a}\sbra{f_k}\m{s}_k. \label{eq:model2}}
In the absence of noise, the goal of compressive multichannel frequency estimation is to estimate the frequencies $\lbra{f_k}$ and the amplitudes $\lbra{s_{kl}}$, given a subset of the rows of $\m{Y}^o$. In the case of $L=1$ this problem is referred to as continuous/off-the-grid compressed sensing \cite{tang2012compressed}, or spectral super-resolution by swapping the roles of frequency and time (or space) \cite{candes2013towards}.

Multichannel frequency estimation appears in many applications such as array processing\cite{krim1996two,stoica2005spectral}, structural health monitoring\cite{heylen2006modal}, wireless communications\cite{barbotin2012estimation}, radar\cite{li2007mimo,baransky2014sub,eldar2015sampling}, and fluorescence microscopy \cite{rust2006sub}. In array processing, for example, one needs to estimate the directions of several electromagnetic sources from outputs of several antennas that form an antenna array. In the common uniform-linear-array case, the rows of $\m{Y}^o$ correspond to (spatial) locations of the antennas, each column of $\m{Y}^o$ corresponds to one (temporal) snapshot, and each frequency has a one-to-one mapping to the direction of one source. Multichannel data are available if multiple data snapshots are acquired. The compressive data case arises when the antennas form a sparse linear array \cite{rossi2014spatial}. In structural health monitoring, a number of sensors are deployed to collect vibration data of a physical structure (e.g., a bridge or building) that are then used to estimate the structure's modal parameters such as natural frequencies and mode shapes. In this case, the rows of $\m{Y}^o$ correspond to the (temporal) sampling instants and each column to one (spatially) individual sensor. Multichannel data can be acquired by deploying multiple sensors. The use of compressive data has potential to relax the requirements on data acquisition, storage and transmission.

Due to its connection to array processing, multichannel frequency estimation has a long history dating back to at least World War II when the Bartlett beamformer was proposed to estimate the directions of enemy planes. The use of multichannel data can improve estimation performance by taking more samples and exploiting redundancy among the data channels. As the number of channels increases, the Fisher information about the frequency parameters increases if the data are independent among channels \cite{schervish2012theory}, and the number of identifiable frequencies increases in general as well \cite{wax1989unique}. However, in the worst case where all vector coefficients $\lbra{\m{s}_k}$ are identical up to scaling factors, referred to as coherent sources in array processing, the multichannel data will be identical up to scaling factors. Therefore, the frequency estimation performance cannot be improved by increasing the number of channels. The possible presence of coherent sources is known to form a difficult scenario for multichannel frequency estimation.

In multichannel frequency estimation, the sample data is a highly nonlinear function of the frequencies. Consequently, (stochastic and deterministic) maximum likelihood estimators \cite{krim1996two} are difficult to compute. Subspace-based methods were proposed to circumvent nonconvex optimization and dominated research on this topic for several decades. However, these techniques have drawbacks such as sensitivity to coherent sources and limited applicability in the case of missing data or when the number of channels is small \cite{krim1996two,stoica2005spectral}. Convex optimization approaches have recently been considered and are very attractive since they can be globally solved in polynomial time. In addition, they are applicable to various kinds of noise and data structures, are robust to coherent sources and have provable performance guarantees \cite{candes2013towards,candes2013super,tang2012compressed, azais2015spike, tang2015near,yang2016exact,li2016off, fernandez2017demixing, li2018atomic}. However, the performance of such techniques under multichannel data has not been thoroughly derived. Extensive numerical results presented in \cite{yang2016exact,li2016off,fernandez2016super} show that the frequency estimation performance improves (in terms of accuracy and resolution) with the number of channels in the case of noncoherent sources; nevertheless, few theoretical results have been developed to support this empirical behavior.

In this paper, we provide a rigorous analysis of one of the empirical findings: The sample size per channel that ensures exact frequency estimation from noiseless data can be reduced as the number of channels increases. Our main result is described in the following theorem.

\begin{thm} Suppose we observe the $N\times L$ data matrix $\m{Y}^o = \sum_{k=1}^K \m{a}\sbra{f_k}\m{s}_k$ on rows indexed by $\Omega\subset\lbra{1,\dots,N}$, where $\Omega$ is of size $M$, selected uniformly at random, and given. Assume that the vector phases $\m{\phi}_k = \frac{\m{s}_k}{\twon{\m{s}_k}}$ are independent random variables that are drawn uniformly from the unit $L$-dimensional complex sphere, and that the frequency support $\cT = \lbra{f_k}$ satisfies the minimum separation condition
\equ{\Delta_{\cT} \coloneqq \min_{p\neq q} \abs{f_p-f_q} > \frac{1}{\left\lfloor (N-1)/4 \right\rfloor}, \label{eq:separation}}
where the distance is wrapped around on the unit circle.
Then, there exists a numerical constant $C$, which is fixed and independent of the parameters $K$, $N$, $L$ and $\delta$, such that
\equ{M\geq C\max\lbra{\log^2\frac{N}{\delta}, K\sbra{\log\frac{K}{\delta}}\sbra{1+\frac{1}{L}\log\frac{N}{\delta}}} \label{eq:AN_bound}}
is sufficient to guarantee that, with probability at least $1-\delta$, the full data $\m{Y}^o$ and its frequency support $\cT$ can be uniquely produced by solving a convex optimization problem (defined in \eqref{eq:SDP_ANM} or \eqref{eq:SDP_dual} below). \label{thm:noiseless}
\end{thm}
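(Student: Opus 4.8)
The plan is to follow the by-now-standard dual certificate route for atomic norm minimization, but to exploit the multichannel structure so that the randomness over the vector phases $\m{\phi}_k$ improves the construction as $L$ grows. By convex duality, $\m{Y}^o$ and its support $\cT$ are the unique solutions of the atomic norm problem on $\Omega$ if and only if there exists a dual polynomial, i.e.\ a trigonometric polynomial $Q(f) = \sum_{j \in \Omega} c_j e^{i2\pi(j-1)f}$ with vector-valued coefficients, such that $Q(f_k) = \m{\phi}_k$ for every $f_k \in \cT$, $\twon{Q(f)} < 1$ for all $f \notin \cT$, and $Q$ is supported on frequencies in $\Omega$ (the last condition encoding the compressive sampling). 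So the whole argument reduces to constructing such a $Q$ with high probability over the joint randomness in $\Omega$ and $\lbra{\m{\phi}_k}$.

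The first step is to build a candidate certificate by the kernel-interpolation ansatz of Cand\`es--Fernandez-Granda, adapted to random sampling as in Tang et al.: take $Q(f) = \sum_{k} \m{\alpha}_k \overline{\cK(f-f_k)} + \sum_k \m{\beta}_k \overline{\cK'(f-f_k)}$, where $\cK$ is (a random, $\Omega$-restricted version of) the squared Fej\'er kernel, whose localization is what forces the minimum separation $\Delta_\cT > 1/\lfloor (N-1)/4\rfloor$ in \eqref{eq:separation}. The coefficients $\m{\alpha}_k, \m{\beta}_k$ are fixed by the interpolation equations $Q(f_k) = \m{\phi}_k$, $Q'(f_k) = \m{0}$. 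In the deterministic full-data case this linear system is a small perturbation of the identity; in the random-sampling case one shows, via matrix Bernstein / matrix concentration inequalities, that the relevant Gram-type matrix concentrates around its expectation once $M$ is large enough, giving invertibility and good control on $\m{\alpha}_k, \m{\beta}_k$ on an event of probability $1 - \delta$. This is where the first term $\log^2(N/\delta)$ in \eqref{eq:AN_bound} enters.

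The heart of the improvement is the bound $\twon{Q(f)} < 1$ off the support, and this is where the $1/L$ saving comes from. Write $Q(f) = \sum_k (\text{scalar kernel terms}) \, \m{\phi}_k + (\text{correction})$. Pointwise, $Q(f)$ is a sum of roughly $K$ vectors, each a (random, bounded) scalar times an independent isotropic unit vector $\m{\phi}_k$ on the complex $L$-sphere; by spherical concentration such a sum has norm concentrated around $\sqrt{\sum_k |\text{scalar}_k|^2}$ with Gaussian-type fluctuations of order $1/\sqrt{L}$ — so the vectorial cancellation effectively reduces the "variance" seen on any single channel by a factor $L$. One then needs this bound uniformly in $f \in \bT$: discretize $\bT$ on a fine grid (polynomially many points suffice since $Q$ is a polynomial of degree $N$, with a Bernstein-inequality argument to pass from the grid to all of $f$), take a union bound, and combine with the concentration of $\cK$ and of $\m{\alpha}_k, \m{\beta}_k$ from the previous step. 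Carrying out the union bound over $O(\mathrm{poly}(N))$ grid points against a tail that decays like $\exp(-cL\,t^2)$ for the phase-randomness and like $\exp(-cM/(K\log K))$ for the sampling-randomness is exactly what produces the $K(\log K)(1 + \frac{1}{L}\log N)$ shape of the second term in \eqref{eq:AN_bound}: the $\log N$ from the union bound is divided by $L$ because it must be beaten by the phase concentration, while the $K\log K$ part is the sampling cost and is $L$-independent.

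The main obstacle is the off-support certificate bound, and specifically decoupling the two sources of randomness. The scalar kernel coefficients and the vectors $\m{\phi}_k$ both appear, and $\m{\alpha}_k,\m{\beta}_k$ depend on \emph{both} $\Omega$ and $\lbra{\m{\phi}_k}$ through the interpolation system, so one cannot naively condition on $\Omega$ and treat the $\m{\phi}_k$ as the only randomness. The clean way is to first establish, on a high-probability event over $\Omega$ alone, deterministic bounds on the random kernel $\cK$ and its derivatives (uniform in $f$) and on the norm of the inverse interpolation matrix; then, conditionally on that event, view $Q(f)$ as a fixed (bounded, $f$-dependent) linear functional of the independent vectors $\m{\phi}_k$ and apply vector concentration on the sphere plus the union bound over the grid. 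Making the conditional bounds strong enough that the phase-concentration genuinely sees "$K$ terms of comparable size" — rather than being dominated by a few large coefficients near $f_k$ — is the delicate quantitative point; near each $f_k$ one handles $f$ in a small neighborhood separately by a Taylor/second-derivative argument (using $Q(f_k)=\m{\phi}_k$, $Q'(f_k)=0$ and $\twon{Q''(f_k)}$ bounded away from the relevant scale), and away from all $f_k$ one uses the concentration bound. Assembling these pieces with the right constants, and verifying that the two regimes in the $\max$ of \eqref{eq:AN_bound} cover all cases, completes the proof.
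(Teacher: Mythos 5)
Your proposal follows essentially the same route as the paper's proof: the dual certificate of Proposition \ref{prop:dualpoly}, the randomized squared-Fej\'er-kernel interpolation construction of Tang et al.\ with coefficients fixed by \eqref{eq:Qfj}--\eqref{eq:Qdfj}, and the key decoupling you describe---conditioning on high-probability events over $\Omega$ and then treating the certificate as a fixed linear functional of the independent sphere vectors $\m{\phi}_k$---is exactly the paper's decomposition into $Q$, $I_1^l$ and $I_2^l$ combined with the Eldar--Rauhut sphere Bernstein inequality (Lemma \ref{lem:bernstein}), whose $e^{-L(\cdot)}$ tail is what divides the grid union-bound $\log$ by $L$, followed by the same grid-to-continuum extension and near/far completion via the second-derivative argument. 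The only minor inaccuracies (the dual certificate is used as a sufficient, not necessary, condition, and the $\log^2(N/\delta)$ term actually arises from the concentration of the random kernel terms rather than from invertibility of the interpolation matrix) do not affect the soundness of the approach.
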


Theorem \ref{thm:noiseless} provides a non-asymptotic analysis for multichannel frequency estimation that holds for arbitrary number of channels $L$. According to \eqref{eq:AN_bound}, the sample size per channel that ensures exact frequency estimation decreases as $L$ increases. Consider the case of interest in which the lower bound is dominated by the second term in it, which is true once the number of frequencies is modestly large such that $K\log{\frac{K}{\delta}}\geq \min\lbra{\log^2 \frac{N}{\delta},\; L\log \frac{N}{\delta}}$. The sample size per channel then becomes
\equ{M\geq C K\sbra{\log\frac{K}{\delta}} \sbra{1+\frac{1}{L}\log\frac{N}{\delta}}, \label{eq:Mbound2}}
where the lower bound is a monotonically decreasing function of $L$. In fact, to solve for $\lbra{f_k}$ and $\lbra{s_{kl}}$ which consist of $(2L+1)K$ real unknowns, the minimum number of complex-valued samples is $\frac{1}{2}(2L+1)K$. Consequently, the sample size per channel for any method must satisfy
\equ{M\geq \frac{1}{2L}(2L+1)K = K\sbra{1+ \frac{1}{2L}}.}
Our sample order in \eqref{eq:Mbound2} nearly reaches this information-theoretic rate up to logarithmic factors of $K$ and $N$. In addition, they share an identical decreasing rate with respect to $L$.

The assumption on the vector phases $\m{\phi}_k = \frac{\m{s}_k}{\twon{\m{s}_k}}$ is made in Theorem \ref{thm:noiseless} to avoid coherent sources. It is inspired by \cite{gribonval2008atoms,eldar2010average} which show performance gains of multiple channels in the context of sparse recovery. This assumption is satisfied by the rotational invariance of Gaussian random vectors if $\lbra{s_{kl}}$ are independent complex Gaussian random variables with
\equ{s_{kl} \sim \ccn\sbra{0, p_k},\notag}
where the variance $p_k$ refers to the power of the $k$th source. This latter assumption is usually adopted in the literature to derive the Cramer-Rao bound or to analyze the theoretical performance of subspace-based methods \cite{krim1996two}.

Note that the $4/N$ minimum separation condition \eqref{eq:separation} comes from the single-channel analysis in \cite{candes2013towards,tang2012compressed}. It is reasonably tight in the sense that, if the separation is under $2/N$, then the data corresponding to two different sets of frequencies can be practically the same and it becomes nearly impossible to distinguish them \cite{fernandez2016super}. This condition is conservative in practice, and a numerical approach was implemented in \cite{yang2016enhancing} to further improve resolution. Such a separation condition is not required for subspace-based methods, at least in theory.

A corollary of Theorem \ref{thm:noiseless} can be obtained when the frequencies lie on a known grid. This scenario may be unrealistic but is often assumed in practice to simplify the algorithm design.

\begin{cor} Consider the problem setup of Theorem \ref{thm:noiseless} when in addition the frequencies $\lbra{f_k}$ lie on a given fixed grid (that can be nonuniform or arbitrarily fine). Under the assumptions of Theorem \ref{thm:noiseless} and given the same number of samples, the frequencies can be exactly estimated with at least the same probability by solving an $\ell_{2,1}$ norm minimization problem (defined in \eqref{eq:l21} below). \label{cor:l21}
\end{cor}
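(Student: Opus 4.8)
The plan is to show that the randomized dual certificate constructed in the proof of Theorem~\ref{thm:noiseless} for the continuous atomic norm problem \eqref{eq:SDP_ANM}/\eqref{eq:SDP_dual} is automatically a (strengthened) dual certificate for the discrete $\ell_{2,1}$ problem \eqref{eq:l21}; consequently the success event of the continuous estimator is contained in that of the grid-based estimator, and since the former has probability at least $1-\delta$ under \eqref{eq:AN_bound}, so does the latter, with the same number of samples $M$.

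First I would observe that restricting the frequencies to a fixed grid $\mathcal{G}\subset\bT$ replaces the atomic set $\cA$ by the finite sub-dictionary $\cA_{\mathcal{G}}=\lbra{\m{a}\sbra{f}\m{\psi}:\ f\in\mathcal{G},\ \m{\psi}\in\bC^{1\times L},\ \twon{\m{\psi}}=1}\subset\cA$, and that the atomic norm induced by $\cA_{\mathcal{G}}$ equals the $\ell_{2,1}$ norm of the coefficient matrix indexed by grid points; thus \eqref{eq:l21} is precisely atomic norm minimization for $\cA_{\mathcal{G}}$ with the rows observed on $\Omega$. Since $\cT\subset\mathcal{G}$, the true matrix $\m{Y}^o$ is feasible for \eqref{eq:l21}, so it remains only to show that it is the unique minimizer.

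I would then use the standard dual characterization of exact and unique recovery for $\ell_{2,1}$ minimization: it suffices to exhibit a dual variable supported on $\Omega$ whose associated vector-valued trigonometric polynomial $\m{Q}\sbra{f}\in\bC^{1\times L}$ satisfies $\m{Q}\sbra{f_k}=\m{\phi}_k$ for each $f_k\in\cT$ and $\twon{\m{Q}\sbra{f}}<1$ for each grid point $f\in\mathcal{G}\setminus\cT$, together with linear independence of $\lbra{\m{a}_\Omega\sbra{f_k}}$ (automatic once $M\geq K$, which \eqref{eq:AN_bound} guarantees). This is \emph{weaker} than the condition certifying the continuous problem, where the strict inequality must hold on all of $\bT\setminus\cT$. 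The crux is then immediate: the certificate produced in the proof of Theorem~\ref{thm:noiseless} already meets $\m{Q}\sbra{f_k}=\m{\phi}_k$ on $\cT$ and $\twon{\m{Q}\sbra{f}}<1$ on all of $\bT\setminus\cT$ on the same probability-$(1-\delta)$ event, hence a fortiori the discrete conditions, so on that event \eqref{eq:l21} recovers $\m{Y}^o$ and $\cT$ exactly and uniquely. I expect no genuine difficulty beyond bookkeeping: checking that \eqref{eq:l21} is indeed the restriction of \eqref{eq:SDP_ANM}/\eqref{eq:SDP_dual} to $\cA_{\mathcal{G}}$, and that the uniqueness argument (strict contractivity of $\m{Q}$ off $\cT$ plus Vandermonde independence of $\lbra{\m{a}_\Omega\sbra{f_k}}$) survives passing to a sub-dictionary — which it does, since both ingredients are inherited when atoms are discarded. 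No new concentration estimates are required.
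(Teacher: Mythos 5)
Your proposal is correct and follows essentially the same route as the paper's own proof: the discrete dual conditions for \eqref{eq:l21} are exactly the continuous ones of Proposition \ref{prop:dualpoly} with the strict bound $\twon{\overline Q(f)}<1$ imposed only at grid points off $\cT$, so the certificate built in Section \ref{sec:proof} certifies \eqref{eq:l21} on the same probability-$(1-\delta)$ event. One small caveat: linear independence of $\lbra{\m{a}_{\Omega}\sbra{f_k}}$ is \emph{not} automatic from $M\geq K$ for a subsampled Vandermonde matrix with nodes on the unit circle, but this does not harm your argument since it already holds on the event $\cE_{1,\tau}$ used in the proof of Theorem \ref{thm:noiseless} (Lemma \ref{lem:invertibility}).
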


It is worth noting that Corollary \ref{cor:l21} holds regardless of the density of the grid, while the grid is used to define the $\ell_{2,1}$ norm minimization problem. In fact, Theorem \ref{thm:noiseless} corresponds to the extreme case of Corollary \ref{cor:l21} in which the grid becomes infinitely fine and contains infinitely many points. In this case, the $\ell_{2,1}$ norm minimization problem in Corollary \ref{cor:l21} becomes computationally intractable. In contrast, the convex optimization problem in Theorem \ref{thm:noiseless} is still applicable.

The rest of the paper is organized as follows. Section \ref{sec:priorart} revisits prior art in multichannel frequency estimation and related topics and discusses their connections to this work. Section \ref{sec:ANM} presents the convex optimization methods mentioned in Theorem \ref{thm:noiseless} and Corollary \ref{cor:l21}. The former method is referred to as atomic norm minimization originally introduced in \cite{yang2016exact,li2016off}. The latter is known as $\ell_{2,1}$ norm minimization \cite{malioutov2005sparse,eldar2010average}. Section \ref{sec:simulation} presents simulation results to verify our main theorem. Section \ref{sec:proof} provides the detailed proof of Theorem \ref{thm:noiseless}. Finally, Section \ref{sec:conclusion} concludes this paper.

Throughout the paper, $\log a\,b$ means $\sbra{\log a} b$ unless we write $\log\sbra{ab}$. The notations $\bR$ and $\bC$ denote the set of real and complex numbers, respectively. The normalized frequency interval is identified with the unit circle $\bT=\mbra{0,1}$. Boldface letters are reserved for vectors and matrices. The
$\ell_1$, $\ell_2$ and Frobenius norms are written as $\onen{\cdot}$, $\twon{\cdot}$ and $\frobn{\cdot}$ respectively, and $\abs{\cdot}$ is the magnitude of a scalar or the cardinality of a set. For matrix $\m{A}$, $\m{A}^T$ is the transpose, $\m{A}^H$ is the conjugate transpose, $\rank\sbra{\m{A}}$ denotes the rank, and $\tr\sbra{\m{A}}$ is the trace. The fact that a square matrix $\m{A}$ is positive semidefinite is expressed as $\m{A}\geq \m{0}$. Unless otherwise stated, $x_j$ is the $j$th entry of a vector $\m{x}$, $A_{jl}$ is the $(j,l)$th entry of a matrix $\m{A}$, and $\m{A}_j$ denotes the $j$th row. Given an index set $\Omega$, $\m{x}_{\Omega}$ and $\m{A}_{\Omega}$ are the subvector and submatrix of $\m{x}$ and $\m{A}$, respectively, that are formed by the rows of $\m{x}$ and $\m{A}$ indexed by $\Omega$. For a complex argument, $\Re$ and $\Im$ return its real and imaginary parts. The big O notation is written as $O\sbra{\cdot}$. The expectation of a random variable is denoted by $\bE\mbra{\cdot}$, and $\bP\sbra{\cdot}$ is the probability of an event.

\section{Prior Art} \label{sec:priorart}
The advantages of using multichannel data have been well understood for parametric approaches for frequency estimation, e.g., maximum likelihood and subspace-based methods. Under proper assumptions on $\lbra{s_{kl}}$, which are similar to those in Theorem \ref{thm:noiseless}, these techniques are asymptotically efficient in the number of channels $L$ and therefore their accuracy improves with the increase of $L$ \cite{stoica1989music}. However, few theoretical results on their accuracy are available with fixed $L$, in contrast to Theorem \ref{thm:noiseless} which holds for any $L$. Note that theoretical guarantees of two common subspace-based methods, MUSIC \cite{schmidt1981signal} and ESPRIT \cite{roy1989esprit}, have recently been developed in \cite{liao2016music,fannjiang2016compressive} in the single-channel full data case. For multichannel sparse recovery in discrete models, MUSIC has been applied or incorporated into other methods, with theoretical guarantees, see \cite{feng1996spectrum,davies2012rank, kim2012compressive,lee2012subspace}.

Theorem \ref{thm:noiseless} is related to the so-called average-case analysis for multichannel sparse recovery in compressed sensing problems \cite{gribonval2008atoms,eldar2010average,eldar2012compressed}. In such settings, the goal is to recover multiple sparse signals sharing the same support from their linear measurements. Average case analysis (as opposed to the worst case) attempts to analyze an algorithm's average performance for generic signals. Under an assumption on the sparse signals similar to that on $\lbra{s_{kl}}$ in Theorem \ref{thm:noiseless}, the papers \cite{gribonval2008atoms,eldar2010average} showed that, if the columns of the sensing matrix are weakly correlated (in terms of mutual coherence \cite{donoho2001uncertainty} or the restricted isometry property (RIP) \cite{candes2006compressive}), then the probability of successful sparse recovery improves as the number of channels increases. While \cite{gribonval2008atoms} used the greedy method of orthogonal matching pursuit (OMP) \cite{pati1993orthogonal,tropp2006algorithms}, the work in \cite{eldar2010average} was focused on $\ell_{2,1}$ norm minimization. The proof of Theorem \ref{thm:noiseless} is partially inspired by these results. However, the frequency parameters are continuously valued so that the weak correlation assumption is not satisfied. In addition to this, Theorem \ref{thm:noiseless} provides an explicit expression of the sample size per channel as a decreasing function of the number of channels.

Multichannel sparse recovery methods have been applied to multichannel frequency estimation using, e.g., $\ell_{2,1}$ norm minimization as in Corollary \ref{cor:l21} \cite{malioutov2005sparse}. These approaches utilize the fact that the number of frequencies is small, and attempt to find the smallest set of frequencies on a grid describing the observed data. It was empirically shown in a large number of publications (see, e.g., \cite{gorodnitsky1997sparse, malioutov2005sparse,stoica2011spice}) that, compared with conventional nonparametric and parametric methods, such sparse techniques have improved resolution and relaxed requirements on the number of channels and the type of sources and noise. However, these approaches can be applied only if the continuous frequency domain is discretized so that the frequencies are restricted to a finite set of grid points. Due to discretization errors and high correlations among the candidate frequency components, few theoretical guarantees were developed to support the good empirical results. Corollary \ref{cor:l21} provides theoretical support under the assumption of no discretization errors.

Assume in Corollary \ref{cor:l21} a uniform grid of size equal to $N$ (without taking into account discretization errors). The case of $L=1$ then degenerates to the compressed sensing problem studied in \cite{candes2006robust}, where the weak correlation assumption can be satisfied with sample complexity $O\sbra{K\log^2 K\log N}$ \cite{haviv2017restricted}. This means that given approximately the same number of samples, the resolution can be increased from $4/N$ to $1/N$ if a stricter assumption is made on the frequencies. In this case, the benefits of using multichannel data have been shown in \cite{gribonval2008atoms,eldar2010average}, as discussed previously.

The gap between the discrete and continuous frequency setups has been closed in the past few years. In the pioneering work of Cand{\`e}s and Fernandez-Granda \cite{candes2013towards}, the authors studied the single-channel full data setting and proposed a convex optimization method (to be specific, semidefinite program (SDP)) in the continuous domain, referred to as total variation minimization, that is a continuous analog of $\ell_1$ minimization in the discrete setup. They proved that the true frequencies can be recovered if they are mutually separated by at least $4/N$---the minimum separation condition \eqref{eq:separation}. The idea and method were then extended by Tang {\em et al.} \cite{tang2012compressed} to compressive data using atomic norm minimization \cite{chandrasekaran2012convex} which is equivalent to the total variation norm. They showed that exact frequency localization occurs with high probability under the $4/N$ separation condition if $M\geq O\sbra{K\log K \log N}$ samples are given. This result is a special case of Theorem \ref{thm:noiseless} in the single-channel case.

Multichannel frequency estimation using atomic norm was studied by Yang and Xie \cite{yang2016exact}, Li and Chi \cite{li2016off}, and Li {\em et al.} \cite{li2018atomic}. The paper \cite{yang2016exact} showed that exact frequency estimation occurs under the same $4/N$ separation condition with full data. In the compressive setting, a sample complexity of $O(K\log K \log (\sqrt{L}N))$ per channel was derived under the assumption of centered independent random phases $\lbra{\m{\phi}_k}$. Compared with Theorem \ref{thm:noiseless}, the assumption on $\lbra{\m{\phi}_k}$ is weaker but the sample complexity is larger. Unlike Theorem \ref{thm:noiseless}, the sources are allowed to be coherent in \cite{yang2016exact} and therefore the above results do not shed any light on the advantage of multichannel data. These results may be considered as a worst-case analysis, if we refer to Theorem \ref{thm:noiseless} as an average-case analysis.

A slightly different compressive setting was considered in \cite{li2016off}, in which the samples are taken uniformly at random from the entries (as opposed to the rows) of $\m{Y}^o$, and an average per-channel sample complexity of $O\sbra{K\log (LK) \log (LN)}$ was stated under an assumption on $\lbra{s_{kl}}$ similar to that in Theorem \ref{thm:noiseless}. This result is weaker than that of Theorem \ref{thm:noiseless} and again fails to show the advantage of multichannel data.\footnote{Note that an explicit proof of this result was not included in \cite{li2016off}, which was also pointed out in \cite[Footnote 8]{li2018atomic}.}


A different means for data compression was considered in \cite{li2018atomic}, in which each column of $\m{Y}^o$ is compressed by using a distinct sensing matrix generated from a Gaussian distribution. In this case $M\geq \max\lbra{\frac{8}{L}\log\frac{1}{\delta}+2, \; CK\log N}$ samples per channel suffice to guarantee exact frequency estimation with probability at least $1-\delta$ by using atomic norm minimization, where $C$ is a constant. The sample size $M$ decreases with the number of channels if the lower bound is dominated by the first term or equivalently if $L\sbra{CK\log N-2}$ is less than a constant depending on $\delta$. Evidently, such a benefit from multichannel data diminishes if any one of $L$, $K$ or $N$ becomes large. This benefit is a result of the different sampling matrices applied on each channel, which we do not require in our analysis.

Another convex optimization method operating on the continuum for multichannel frequency estimation is gridless SPICE (GLS) \cite{yang2014discretization,yang2015gridless, stoica2011spice}. It was shown to be equivalent to an atomic norm method for small $L$ and a weighted atomic norm method for large $L$ \cite{yang2018sparse,yang2016enhancing}. Under an assumption on $\lbra{s_{kl}}$ similar to that in Theorem \ref{thm:noiseless}, GLS is asymptotically efficient in $L$. However, such benefits from multichannel data have not been analyzed for small $L$.



Finally, sparse estimation methods using nonconvex optimization have been proposed in \cite{fang2016super, wu2018two}, but few theoretical guarantees on their estimation accuracy have been derived. Readers are referred to \cite{yang2018sparse} for a review on sparse methods for multichannel frequency estimation.

\section{Multichannel Atomic Norm Minimization} \label{sec:ANM}
Atomic norm \cite{chandrasekaran2012convex} provides a generic approach to finding a sparse representation of a signal by exploiting particular structures in it. It generalizes the $\ell_1$ norm for sparse signal recovery and the nuclear norm for low rank matrix recovery. For multichannel frequency estimation, the set of atoms is defined as \cite{yang2016exact,li2016off}:
\equ{\cA \coloneqq \lbra{\m{a}\sbra{f}\m{\phi}:\; f\in\bT, \; \m{\phi}\in\bS^{2L-1}}, \notag}
where
\equ{\bS^{2L-1} \coloneqq \lbra{\m{\phi}\in\bC^{1\times L}:\; \twon{\m{\phi}}=1} \notag}
denotes the unit $L$-dimensional complex sphere, or equivalently, the unit $2L$-dimensional real sphere. The atomic norm of a multichannel signal $\m{Y}\in\bC^{N\times L}$ is defined as the gauge function of the convex hull of $\cA$:
\equ{\begin{split}\atomn{\m{Y}}
&= \inf \lbra{t>0:\; \m{Y} \in t\text{conv}\sbra{\cA}}\\
&= \inf \left\{\sum_k c_k:\; \m{Y} = \sum_k c_k \m{a}\sbra{f_k}\m{\phi}_k, \;  c_k >0\right\} \\
&= \inf \lbra{\sum_k \twon{\m{s}_k}:\; \m{Y} = \sum_k \m{a}\sbra{f_k}\m{s}_k}. \end{split} \label{eq:AN}}

When only the rows of $\m{Y}^o$ indexed by $\Omega\subset\lbra{1,\dots,N}$ are observed, which form the submatrix $\m{Y}_{\Omega}^o$, the following atomic norm minimization problem was introduced to recover the full data matrix $\m{Y}^o$ and its frequencies \cite{yang2016exact,li2016off}:
\equ{\begin{split}
&\min_{\m{Y}} \atomn{\m{Y}},\\
&\st \m{Y}_{\Omega} = \m{Y}_{\Omega}^o. \label{eq:ANM} \end{split}}
In particular, by \eqref{eq:ANM} we attempt to find the signal that is consistent with the observed data and has the smallest atomic norm. The problem in \eqref{eq:ANM} can then be cast as an SDP in order to solve it, in two ways. The first was proposed in \cite{yang2016exact,li2016off} (and later reproduced in \cite{steffens2018compact}) by writing $\atomn{\m{Y}}$ as:
\equ{\begin{split}\atomn{\m{Y}}
=& \min_{\m{X}, \m{t}}\frac{1}{2}\tr\sbra{\m{X}} +\frac{1}{2}t_0, \\
&\st \begin{bmatrix} \m{X} & \m{Y}^H \\ \m{Y} & \m{T} \end{bmatrix} \geq \m{0}. \label{eq:SDP_AN} \end{split}}
It follows that \eqref{eq:ANM} can be equivalently written as:
\equ{\begin{split}
&\min_{\m{X}, \m{t}, \m{Y}}\frac{1}{2}\tr\sbra{\m{X}} +\frac{1}{2}t_0, \\
&\st \begin{bmatrix} \m{X} & \m{Y}^H \\ \m{Y} & \m{T} \end{bmatrix} \geq \m{0} \text{ and } \m{Y}_{\Omega} = \m{Y}_{\Omega}^o. \label{eq:SDP_ANM} \end{split}}
In \eqref{eq:SDP_AN} and \eqref{eq:SDP_ANM}, $\m{T}$ is an $N\times N$ Hermitian Toeplitz matrix and is defined as $T_{ml} = t_{l-m}$, $1\leq m\leq l\leq N$, and $\m{t}=\mbra{t_j}_{j=0}^{N-1}$.

Alternatively, an SDP can also be provided for the following dual of \eqref{eq:ANM}:
\equ{\begin{split}
&\max_{\m{V}} \inp{\m{V}_{\Omega}, \m{Y}_{\Omega}^o}_{\bR},\\
&\st \datomn{\m{V}}\leq 1 \text{ and } \m{V}_{\Omega^c} = \m{0}, \label{eq:dual} \end{split}}
where $\inp{\m{A}, \m{B}}_{\bR} = \Re\tr\sbra{\m{B}^H\m{A}}$ is the inner product of matrices $\m{A}$ and $\m{B}$.
The dual atomic norm $\datomn{\m{V}}$ is defined as:
\equ{\begin{split}\datomn{\m{V}}
&= \sup_{\m{a}\in\text{conv}\sbra{\cA}} \inp{\m{V},\m{a}}_{\bR} \\
&= \sup_{\m{a}\in\cA} \inp{\m{V},\m{a}}_{\bR} \\
&= \sup_f\twon{\m{a}^H\sbra{f}\m{V}}. \end{split} \notag}
It follows that the constraint $\datomn{\m{V}}\leq 1$ is equivalent to \equ{\twon{\m{a}^H\sbra{f}\m{V}}\leq 1 \text{ for all } f\in\bT. \notag}
Using theory of positive trigonometric polynomials \cite{dumitrescu2007positive}, the above constraint can be cast as a linear matrix inequality (LMI) so that the dual SDP is given by:
\equ{\begin{split}
&\max_{\m{V}} \inp{\m{V}_{\Omega}, \m{Y}_{\Omega}^o}_{\bR},\\
&\st \left\{ \begin{array}{l} \begin{bmatrix} \m{I} & \m{V}^H \\ \m{V} & \m{H} \end{bmatrix}\geq \m{0},\\ \tr\sbra{\m{H}} = 1, \\ \sum_{n=1}^{N-j} H_{n,n+j}=0, \; j=1,\dots,N-1,\\ \m{V}_{\Omega^c} = \m{0}. \end{array} \right. \label{eq:SDP_dual} \end{split}}

Both \eqref{eq:SDP_ANM} and \eqref{eq:SDP_dual} can be solved using off-the-shelf SDP solvers. In fact, when one is solved using a primal-dual algorithm, the solution to the other is given simultaneously. While the solved signal $\m{Y}$ is given by the primal solution, the frequencies in $\m{Y}$ can be retrieved from either the primal or the dual solution. In particular, given the primal solution $\m{T}$, the frequencies can be obtained from its so-called Vandermonde decomposition \cite{stoica2005spectral} given as:
\equ{\m{T} = \sum_{j=1}^{r} c_j\m{a}\sbra{f_j}\m{a}^H\sbra{f_j}, \label{eq:VD}}
where $r=\rank\sbra{\m{T}}$ and $c_j>0$.
This decomposition is unique in the typical case of $\rank\sbra{\m{T}}<N$ and can be computed using subspace-based methods such as ESPRIT \cite{roy1989esprit}. The atomic decomposition of $\m{Y}$ can then be obtained by solving for the coefficients $\lbra{\m{s}_j}_{j=1}^r$ from \eqref{eq:model2} using a least squares method. Interestingly, it holds that $c_j=\twon{\m{s}_j}$. This means that, besides the frequencies, their magnitudes are also given by the Vandermonde decomposition of $\m{T}$.

Given the solution $\m{V}$ in the dual, $Q(f) = \m{a}^H(f)\m{V}$ is referred to as the (vector) dual polynomial. The frequencies in $\m{Y}$ can be identified by those values $f_j$ satisfying
\equ{\twon{Q(f_j)} = \twon{\m{a}^H(f_j)\m{V}} = 1. \notag}
Subsequently, the coefficients in the atomic decomposition can be solved for in the same manner.

The dimensionality of both the primal SDP in \eqref{eq:SDP_ANM} and the dual SDP in \eqref{eq:SDP_dual} increases as the number of channels $L$ increases. To reduce the computational workload when $L$ is large, a dimensionality reduction technique was introduced in \cite{yang2016enhancing} that, in the case of $L>\rank\sbra{\m{Y}_{\Omega}^o}$, reduces the number of channels from $L$ to $\rank\sbra{\m{Y}_{\Omega}^o}$ and at the same time produces the same $\m{T}$, from which both the frequencies and their magnitudes are obtained. In particular, for any $\widetilde{\m{Y}_{\Omega}^o}$ satisfying  $\widetilde{\m{Y}_{\Omega}^o}\widetilde{\m{Y}_{\Omega}^o}^H = \m{Y}_{\Omega}^o\m{Y}_{\Omega}^{oH}$, whose number of columns may be as small as $\rank\sbra{\m{Y}_{\Omega}^o}$, the solution to $\m{T}$ remains unchanged if, in \eqref{eq:SDP_ANM}, we replace $\m{Y}_{\Omega}^o$ with $\widetilde{\m{Y}_{\Omega}^o}$ and properly change the dimensions of $\m{X}$ and $\m{Y}$. This means that the same $\m{T}$ can be obtained by solving the following SDP:
\equ{\begin{split}
&\min_{\m{X}, \m{t}, \m{Y}}\frac{1}{2}\tr\sbra{\m{X}} +\frac{1}{2}t_0,\\
&\st \begin{bmatrix} \m{X} & \m{Y}^H \\ \m{Y} & \m{T} \end{bmatrix} \geq \m{0} \text{ and } \m{Y}_{\Omega} = \widetilde{\m{Y}_{\Omega}^o}. \label{eq:SDP_DR} \end{split}}
Similar techniques were also reported in \cite{steffens2018compact} and \cite{haghighatshoar2017massive}.

Suppose that the source powers are approximately constant across different channels. Then the magnitude of $\widetilde{\m{Y}_{\Omega}^o}$ increases and may even become unbounded as $L$ increases. To render \eqref{eq:SDP_DR} solvable for large $L$, we consider a minor modification by shrinking $\widetilde{\m{Y}_{\Omega}^o}$ by $\sqrt{L}$ so that $\widetilde{\m{Y}_{\Omega}^o}\widetilde{\m{Y}_{\Omega}^o}^H = \frac{1}{L} \m{Y}_{\Omega}^o\m{Y}_{\Omega}^{oH}$ (note that $\frac{1}{L} \m{Y}_{\Omega}^o\m{Y}_{\Omega}^{oH}$ is the sample covariance matrix). Consequently, the solution to $\m{T}$ shrinks by $\sqrt{L}$, resulting in the same frequencies and scaled magnitudes. As $L$ approaches infinity, the sample covariance matrix approaches the data covariance matrix by the law of large numbers and therefore, given the data covariance matrix $\m{R}$, this modification enables us to deal with the case of $L\rightarrow \infty$ by choosing $\widetilde{\m{Y}_{\Omega}^o}$ satisfying $\widetilde{\m{Y}_{\Omega}^o}\widetilde{\m{Y}_{\Omega}^o}^H = \m{R}$. This result will be used in Section \ref{sec:simulation} to study the numerical performance of atomic norm minimization as $L\rightarrow \infty$.

Finally, if we assume that the frequencies lie on a given fixed grid, denoted by $\lbra{\tilde{f}_g}_{g=1}^G \subset \bT$, where $G$ is the grid size, then the atomic norm minimization in \eqref{eq:ANM} and \eqref{eq:SDP_ANM} can be simplified to the following $\ell_{2,1}$ norm minimization problem \cite{malioutov2005sparse}:
\equ{\begin{split}
&\min_{\lbra{\tilde{\m{s}}_g}} \sum_{g=1}^G \twon{\tilde{\m{s}}_g},\\
&\st \sum_{g=1}^G \m{a}_{\Omega}\sbra{\tilde{f}_g} \tilde{\m{s}}_g = \m{Y}_{\Omega}^o. \label{eq:l21} \end{split}}
In \eqref{eq:l21}, $\m{a}_{\Omega}(\cdot)$ is a subvector of $\m{a}(\cdot)$, and the frequency support $\cT$ can be identified from the solutions to $\lbra{\tilde{\m{s}}_g}$ that are nonzero. This $\ell_{2,1}$ norm minimization problem is used in Corollary \ref{cor:l21} for frequency estimation.

\section{Numerical Simulations} \label{sec:simulation}
We now present numerical simulations demonstrating the decrease and its rate of the required sample size per channel with the increase of the number of channels. Numerical results provided in previous publications \cite{yang2016exact,li2016off,fernandez2016super, yang2016enhancing} show the advantages of taking more channel signals in either improving the probability of successful frequency estimation or enhancing the resolution.

\begin{figure}
\centering
  \includegraphics[width=3in]{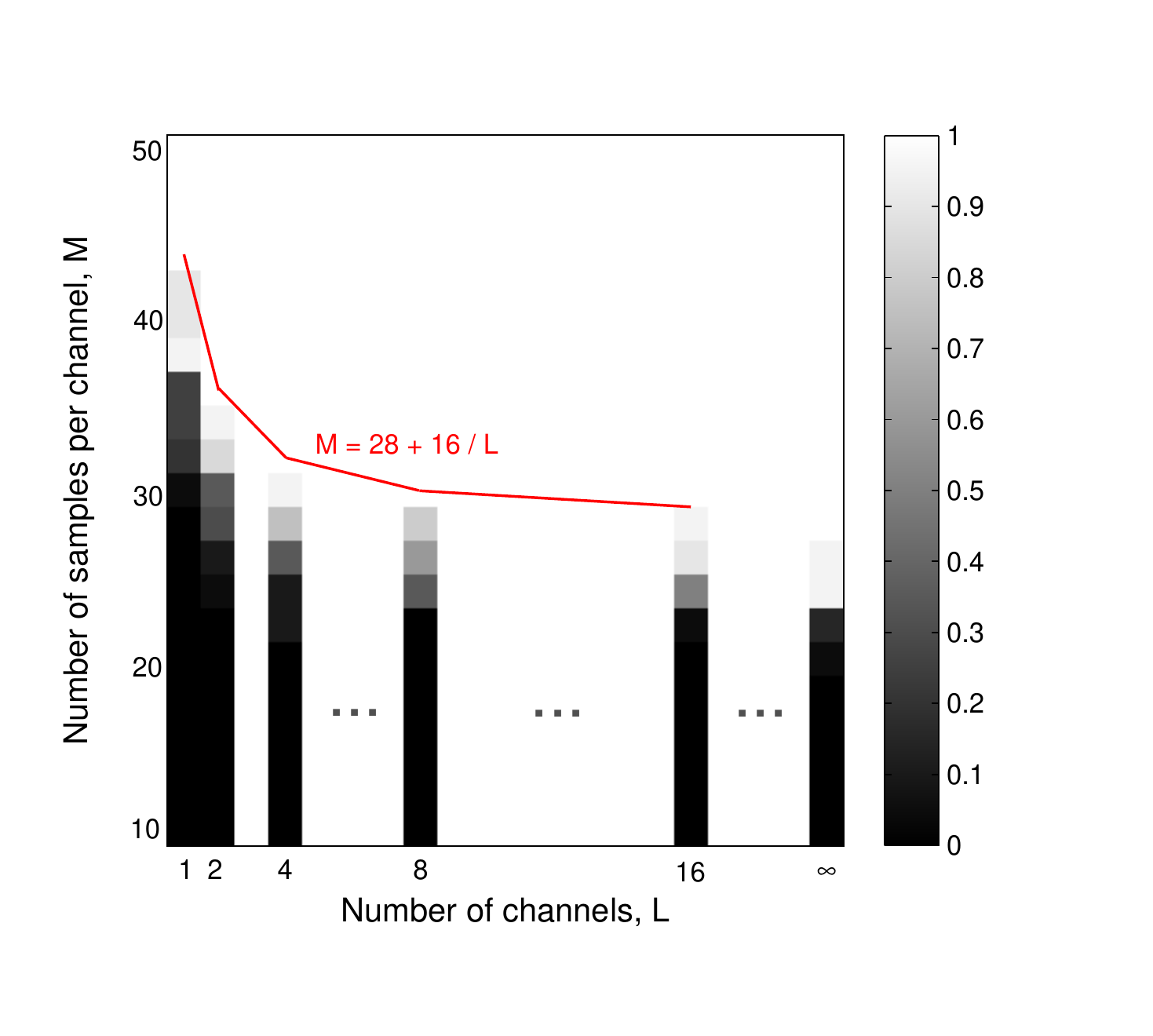}
\centering
\caption{Success rates of atomic norm minimization for multichannel frequency estimation, with $N = 128$, $K=10$, $L=1,2,4,8,16,\infty$ and $M=10,12,\dots,50$, under the minimum separation condition in \eqref{eq:separation}. White means complete success and black means complete failure. The red curve plots $M = 28+16/L$. The decreasing behavior and rate of the sample size with respect to the number of channels match those predicted in Theorem \ref{thm:noiseless}.
} \label{Fig:simulation}
\end{figure}

In our simulations, we let $N=128$, $K=10$, $L\in\lbra{1, 2, 4, 8, 16, \infty}$, and $M\in\lbra{10,12,\dots,50}$. Note that the case of $L=\infty$ can be dealt with by using the dimensionality reduction technique presented in Section \ref{sec:ANM}. For each pair $\sbra{L,M}$, 20 Monte Carlo runs are carried out. In each run, the frequencies $\lbra{f_k}$ are randomly generated and satisfy the minimum separation condition \eqref{eq:separation}, and the amplitudes $\lbra{s_{kl}}$ are independently generated from a standard complex Gaussian distribution. After the full data matrix $\m{Y}^o$ is computed according to \eqref{eq:model}, $M$ rows of $\m{Y}^o$ are randomly selected and fed into atomic norm minimization for frequency estimation implemented using CVX \cite{grant2008cvx}. The frequencies are said to be successfully estimated if the root mean squared error, computed as
\equ{\sqrt{\frac{\sum_{k=1}^K\abs{f_k - \hat f_k}^2}{K}},\notag}
is less than $10^{-4}$, where $\hat f_k$ denotes the estimate of $f_k$. We also calculate the rate of successful estimation over the 20 runs. Our simulation results are presented in Fig.~\ref{Fig:simulation}. It can be seen that the required sample size per channel for exact frequency estimation decreases with the number of channels. Evidently, the decrease rate matches well with that predicted in Theorem \ref{thm:noiseless} even in such a low-dimensional problem setup.

\section{Proof of Theorem \ref{thm:noiseless}} \label{sec:proof}
We now prove Theorem \ref{thm:noiseless}. The proof of Corollary \ref{cor:l21} is similar and is given in Appendix \ref{append:corl21}. Our proof is inspired by the analysis for multichannel frequency estimation in \cite{yang2016exact}, the average case analysis for multichannel sparse recovery in \cite{eldar2010average}, and several earlier publications on compressed sensing and super-resolution \cite{candes2006robust, candes2013towards,tang2012compressed, gribonval2008atoms}. We mainly follow the steps of \cite{yang2016exact}, with non-trivial modifications. To make the proof concise and self-contained, we first revisit the result in \cite{yang2016exact} and then highlight the key differences in our problem.

\subsection{Revisiting Previous Analysis}
The following theorem is proved in \cite{yang2016exact}.

\begin{thm} Suppose we observe the $N\times L$ data matrix $\m{Y}^o=\sum_{k=1}^K c_k \m{a}\sbra{f_k}\m{\phi}_k$
on rows indexed by $\Omega\subset\lbra{1,\dots,N}$, where $\Omega$ is of size $M$, selected uniformly at random, and given. Assume that the vector phases $\lbra{\m{\phi}_k}$ are independent and centered random variables and that the frequencies $\lbra{f_k}$ satisfy the same separation condition as in Theorem \ref{thm:noiseless}. Then, there exists a numerical constant $C$ such that
\equ{M\geq C\max\lbra{\log^2\frac{\sqrt{L}N}{\delta}, K\log\frac{K}{\delta}\log\frac{\sqrt{L}N}{\delta}} \label{eq:AN_bound1}}
is sufficient to guarantee that, with probability at least $1-\delta$, the full data $\m{Y}^o$ and its frequency support $\cT$ can be uniquely produced by solving \eqref{eq:ANM} [or equivalently \eqref{eq:SDP_ANM}]. \label{thm:noiseless1}
\end{thm}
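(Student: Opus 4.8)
The plan, following the dual-certificate (dual polynomial) method of \cite{candes2013towards,tang2012compressed} as adapted to the compressive multichannel setting in \cite{yang2016exact,li2016off}, is as follows. By convex duality for \eqref{eq:ANM} [cf. the dual \eqref{eq:dual}], it suffices to construct a (random) vector trigonometric polynomial $Q\sbra{f} = \m{a}^H\sbra{f}\m{V}$, with $\m{V}$ supported on the observed rows $\Omega$, such that
\equ{Q\sbra{f_k} = \m{\phi}_k \text{ for all } f_k \in \cT, \quad \text{and} \quad \twon{Q\sbra{f}} < 1 \text{ for all } f \in \bT \setminus \cT. \notag}
Such a $\m{V}$ is feasible for \eqref{eq:dual}, and combined with the separation condition \eqref{eq:separation} it certifies that $\m{Y}^o$ is the \emph{unique} minimizer of \eqref{eq:ANM} [equivalently \eqref{eq:SDP_ANM}] and that its atomic decomposition---hence the frequency support $\cT$---is unique. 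I would build $Q$ by the Cand\`es--Fernandez-Granda interpolation construction: set $Q\sbra{f} = \sum_{k} \bar{D}\sbra{f - f_k}\m{\alpha}_k + \sum_k \bar{D}'\sbra{f - f_k}\m{\beta}_k$, where $\bar{D}$ is the random (subsampled) squared-Fej\'er kernel whose expectation over $\Omega$ is the deterministic kernel of \cite{candes2013towards}, and the coefficients $\lbra{\m{\alpha}_k},\lbra{\m{\beta}_k}\subset\bC^{1\times L}$ are fixed by the $2K$ interpolation equations $Q\sbra{f_k}=\m{\phi}_k$ and $Q'\sbra{f_k}=\m{0}$, i.e. a $2K\times 2K$ random linear system whose right-hand side stacks the phases.

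The argument then rests on two probabilistic ingredients. First, concentration of the random interpolation system and of the first three derivatives of the random kernel around their deterministic counterparts: under \eqref{eq:separation} the deterministic CFG system is invertible and well-conditioned and its kernel decays away from the nodes, so a matrix Chernoff/Bernstein bound for the $2K\times 2K$ system (which needs $M\gtrsim K\log\frac{K}{\delta}$) together with a grid union bound plus Bernstein's polynomial inequality for the degree-$O\sbra{N}$ kernels (which needs $M\gtrsim\log^2\frac{\sqrt{L}N}{\delta}$) yields $\twon{\m{\alpha}_k},\twon{\m{\beta}_k}\lesssim 1$ and makes $Q,Q',Q''$ uniformly close to the deterministic certificate. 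This already gives $\twon{Q\sbra{f}}<1$ on the region away from $\cT$, and near each $f_k$ one runs the curvature argument ($\twon{Q\sbra{f_k}}=1$ and the second derivative of $\twon{Q\sbra{f}}^2$ at $f_k$ is negative, uniformly). Second, to control the fluctuation $\twon{Q\sbra{f}-\bE_{\Omega}Q\sbra{f}}$ uniformly in $f$, one uses that the phases are independent and centered: a vector Khintchine/Bernstein inequality bounds $\twon{\sum_k\delta_k\sbra{f}\m{\phi}_k}$ by $\sqrt{\sum_k\abs{\delta_k\sbra{f}}^2}$ up to logarithmic factors, $\delta_k$ being the $\Omega$-fluctuation of the kernel at $f-f_k$; a union bound over an $O\sbra{\text{poly}\sbra{N}}$-point grid in the off-support region, passing through a $\frac{1}{2}$-net of the $2L$-dimensional sphere $\bS^{2L-1}$ that defines the dual constraint $\twon{Q\sbra{f}}\le 1$, then accounts for the $\log\frac{\sqrt{L}N}{\delta}$ factor. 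Intersecting the two good events and taking $C$ large enough gives \eqref{eq:AN_bound1}.

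The main obstacle is the uniform-in-$f$ control of this random certificate: showing that with the stated $M$ the random interpolation operator is invertible and that $\sup_f\twon{Q\sbra{f}-\bar{Q}_{\det}\sbra{f}}$ (and the analogous bounds on $Q'$ and $Q''$) are small, which requires carefully combining the matrix-Chernoff estimate for the $2K\times 2K$ system with chaining/Bernstein estimates for the random kernels and then propagating these deviations into the pointwise bound on $\twon{Q\sbra{f}}$, keeping the $\log^2$ term and the $K\log\frac{K}{\delta}\log\frac{\sqrt{L}N}{\delta}$ term separate and sharp. A further subtlety is that the derivative-matching conditions entangle $\lbra{\m{\beta}_k}$ with the phases, so both the coefficient bounds and the near-node curvature argument must be carried through with this dependence; and the entire scheme hinges on the minimum separation \eqref{eq:separation}, which is exactly what makes the deterministic CFG interpolation---and hence its random perturbation---valid. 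The remaining steps (verifying that the constructed $Q$ certifies uniqueness in \eqref{eq:ANM}, and the reduction from \eqref{eq:ANM} to the SDP \eqref{eq:SDP_ANM}) are routine.
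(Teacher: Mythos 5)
Your overall route is the same as the paper's: this theorem is the result of \cite{yang2016exact}, and the paper treats it exactly through the dual-certificate program you describe --- Proposition \ref{prop:dualpoly}, the randomized squared Fej\'er kernel $\overline\cK$ in \eqref{eq:Kb}, the $2K\times 2K$ interpolation system \eqref{eq:linsys} whose invertibility is Lemma \ref{lem:invertibility}, concentration of $\overline Q$ and its first two derivatives around the full-data certificate $Q$ on a finite grid followed by extension to all of $\bT$, and the curvature argument near each node as in \eqref{eq:Qbnear}. Two points you gloss over but which are routine: uniqueness in Proposition \ref{prop:dualpoly} also requires linear independence of $\lbra{\m{a}_{\Omega}\sbra{f_k}}$, supplied on the event $\cE_{1,\tau}$, and the reduction from uniform-at-random $\Omega$ to the Bernoulli sampling model.

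The one step of your plan that would fail as stated is the accounting for the $\sqrt{L}$ inside the logarithm via a $\tfrac{1}{2}$-net of $\bS^{2L-1}$. Such a net has cardinality exponential in $L$, so a union bound over it contributes an additive term of order $L$ --- not $\log\sqrt{L}=\tfrac{1}{2}\log L$ --- in the exponent, and your argument would only yield a requirement of the form $M\geq CK\log\frac{K}{\delta}\sbra{L+\log\frac{N}{\delta}}$, which grows linearly in $L$ and is strictly weaker than \eqref{eq:AN_bound1}. No discretization of the sphere is needed: the fluctuations to be controlled, $I_1^l\sbra{f}$ and $I_2^l\sbra{f}$, are already vector-valued sums of the form $\sum_k w_k\m{\phi}_k$, and under the ``independent and centered'' assumption of this theorem one bounds their $\ell_2$ norms directly by a (dimension-free) vector Hoeffding-type inequality --- this is what \cite{yang2016exact} does, as the paper itself points out in Appendix \ref{append:I1} (``the use of Lemma \ref{lem:bernstein}, rather than Hoeffding's inequality''); the mild $\sqrt{L}$ degradation arises from the union-bound and grid bookkeeping of that argument, not from covering the dual constraint set $\bS^{2L-1}$. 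Indeed, the whole point of the paper's Theorem \ref{thm:noiseless} is that replacing Hoeffding by the uniform-on-the-sphere Bernstein inequality (Lemma \ref{lem:bernstein}, with its $e^{-L\sbra{\cdot}}$ decay, combined with Lemma \ref{lem:ymlogy}) turns that logarithmic factor into the decreasing-in-$L$ term $1+\frac{1}{L}\log\frac{N}{\delta}$; a net-based treatment of the $L$-dimensional direction would destroy exactly this structure.
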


According to \cite{yang2016exact}, the optimality of a solution to \eqref{eq:ANM} can be validated using a dual certificate that is provided in the following proposition.

\begin{prop} The matrix $\m{Y}^o=\sum_{k=1}^K c_k\m{a}\sbra{f_k}\m{\phi}_k$ is the unique optimizer of \eqref{eq:ANM} if $\lbra{\m{a}_{\Omega}\sbra{f_k}}_{f_k\in\cT}$ are linearly independent and if there exists a vector-valued dual polynomial $\overline Q: \bT\rightarrow \bC^{1\times L}$
\equ{\overline Q(f)=\m{a}^H(f)\m{V} \label{eq:dualpoly1_inc}}
satisfying
\lentwo{\equa{\overline  Q\sbra{f_k}
&=& \m{\phi}_k, \quad f_k\in\cT, \label{eq:cons1}\\ \twon{\overline Q\sbra{f}}
&<& 1, \quad f\in\bT\backslash\cT, \label{eq:cons2}\\ \m{V}_j
&=& \m{0}, \quad j\in \Omega^c, \label{eq:cons3}
}}where $\m{V}$ is an $N\times L$ matrix and $\m{V}_j$ denotes its $j$th row. Moreover, $\m{Y}^o=\sum_{k=1}^K c_k\m{a}\sbra{f_k}\m{\phi}_k$ is the unique atomic decomposition achieving the atomic norm with $\atomn{\m{Y}^o}=\sum_{k=1}^K c_k$. \label{prop:dualpoly}
\end{prop}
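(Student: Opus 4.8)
The plan is the standard dual-certificate (strong-duality) argument. First I would check that the matrix $\m{V}$ supplied by the hypotheses is feasible for the dual problem \eqref{eq:dual}: condition \eqref{eq:cons3} is exactly $\m{V}_{\Omega^c}=\m{0}$, while \eqref{eq:cons1}--\eqref{eq:cons2} give $\datomn{\m{V}}=\sup_f\twon{\m{a}^H(f)\m{V}}=\sup_f\twon{\overline Q(f)}\leq 1$, the value $1$ being attained precisely on $\cT$. Then, for \emph{any} $\m{Y}$ feasible for \eqref{eq:ANM} (that is, $\m{Y}_{\Omega}=\m{Y}_{\Omega}^o$), the generalized H\"older inequality $\inp{\m{V},\m{Y}}_{\bR}\leq\datomn{\m{V}}\atomn{\m{Y}}$ together with $\datomn{\m{V}}\leq 1$ and $\m{V}_{\Omega^c}=\m{0}$ yields
\equ{\atomn{\m{Y}}\;\geq\;\inp{\m{V},\m{Y}}_{\bR}\;=\;\inp{\m{V}_{\Omega},\m{Y}_{\Omega}}_{\bR}\;=\;\inp{\m{V}_{\Omega},\m{Y}_{\Omega}^o}_{\bR}. \notag}
To evaluate this common value I would plug in the given representation $\m{Y}^o=\sum_k c_k\m{a}(f_k)\m{\phi}_k$ and use \eqref{eq:cons1}: since $\inp{\m{V},\m{a}(f_k)\m{\phi}_k}_{\bR}=\Re\tr\sbra{\m{\phi}_k^H\overline Q(f_k)}=\Re\tr\sbra{\m{\phi}_k^H\m{\phi}_k}=\twon{\m{\phi}_k}^2=1$, linearity gives $\inp{\m{V}_{\Omega},\m{Y}_{\Omega}^o}_{\bR}=\sum_k c_k\geq\atomn{\m{Y}^o}$, the last step by the definition \eqref{eq:AN}. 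Concatenating, $\atomn{\m{Y}}\geq\sum_k c_k\geq\atomn{\m{Y}^o}$ for every feasible $\m{Y}$, so $\m{Y}^o$ is optimal and $\atomn{\m{Y}^o}=\sum_k c_k$.

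For uniqueness, let $\widehat{\m{Y}}$ be any optimizer. Since $\cA$ is the continuous image of the compact set $\bT\times\bS^{2L-1}$, $\text{conv}\sbra{\cA}$ is compact and, by Carath\'eodory's theorem, the atomic norm is attained by a \emph{finite} decomposition, so I may write $\widehat{\m{Y}}=\sum_j\widehat{c}_j\,\m{a}(\widehat{f}_j)\widehat{\m{\phi}}_j$ with $\widehat{c}_j>0$, $\widehat{\m{\phi}}_j\in\bS^{2L-1}$ and $\sum_j\widehat{c}_j=\atomn{\widehat{\m{Y}}}=\sum_k c_k$. The computation above, now combined with Cauchy--Schwarz, gives
\beq
\sum_k c_k &=& \inp{\m{V},\widehat{\m{Y}}}_{\bR}=\sum_j\widehat{c}_j\,\Re\tr\sbra{\widehat{\m{\phi}}_j^H\overline Q(\widehat{f}_j)} \nonumber\\
&\leq& \sum_j\widehat{c}_j\twon{\overline Q(\widehat{f}_j)}\;\leq\;\sum_j\widehat{c}_j=\sum_k c_k, \nonumber
\enq
so all of these inequalities are equalities. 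Tightness of the last one forces $\twon{\overline Q(\widehat{f}_j)}=1$ for every $j$, which by \eqref{eq:cons2} forces $\widehat{f}_j\in\cT$; tightness in Cauchy--Schwarz (with $\twon{\widehat{\m{\phi}}_j}=1$) forces $\widehat{\m{\phi}}_j=\overline Q(\widehat{f}_j)=\m{\phi}_k$ whenever $\widehat{f}_j=f_k$. Collecting the atoms located at the same frequency, $\widehat{\m{Y}}=\sum_{f_k\in\cT}\m{a}(f_k)\widehat{\m{s}}_k$ with each $\widehat{\m{s}}_k$ a nonnegative multiple of $\m{\phi}_k$.

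Finally I would invoke the hypothesis that $\lbra{\m{a}_{\Omega}(f_k)}_{f_k\in\cT}$ are linearly independent: the constraint $\widehat{\m{Y}}_{\Omega}=\m{Y}_{\Omega}^o$ then reads $\sum_k\m{a}_{\Omega}(f_k)\sbra{\widehat{\m{s}}_k-c_k\m{\phi}_k}=\m{0}$, and since $\m{\phi}_k\neq\m{0}$ this forces $\widehat{\m{s}}_k=c_k\m{\phi}_k$ for each $k$, i.e. $\widehat{\m{Y}}=\m{Y}^o$; hence the optimizer is unique. Running the identical Cauchy--Schwarz-and-independence argument on an arbitrary atomic decomposition of $\m{Y}^o$ that achieves $\atomn{\m{Y}^o}$ forces it to coincide with $\sum_k c_k\m{a}(f_k)\m{\phi}_k$, which gives the ``moreover'' assertion. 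I do not expect any serious obstacle in this proposition: the only points requiring care are the compactness/Carath\'eodory step that guarantees a finite optimal decomposition and the bookkeeping when several $\widehat{f}_j$ coincide with the same $f_k$. The substantive difficulty of the paper lies elsewhere---in \emph{constructing} a matrix $\m{V}$ that satisfies \eqref{eq:cons1}--\eqref{eq:cons3} within the sample budget \eqref{eq:AN_bound}---not in verifying that such a $\m{V}$ certifies exact recovery.
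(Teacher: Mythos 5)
Your proposal is correct and is exactly the standard dual-certificate argument: dual feasibility of $\m{V}$ plus the H\"older-type bound gives optimality and $\atomn{\m{Y}^o}=\sum_k c_k$, and the strict bound \eqref{eq:cons2} together with linear independence of $\lbra{\m{a}_{\Omega}(f_k)}$ yields uniqueness. The paper itself states Proposition \ref{prop:dualpoly} without proof, citing \cite{yang2016exact}, whose proof proceeds along essentially the same lines as yours (your explicit compactness/Carath\'eodory step for attainment of a finite optimal decomposition is a reasonable way to handle the one technical point).
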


Applying Proposition \ref{prop:dualpoly}, Theorem \ref{thm:noiseless1} is proved in \cite{yang2016exact} by showing the existence of an appropriate dual polynomial $\overline Q(f)$ under the assumptions of Theorem \ref{thm:noiseless1} (note that in this process the condition of linear independence of $\lbra{\m{a}_{\Omega}\sbra{f_k}}_{f_k\in\cT}$ is also satisfied). To explicitly construct the dual polynomial $\overline Q(f)$, it is shown in \cite{yang2016exact} that we may consider the symmetric case where the rows of $\m{Y}^o$ are indexed by $\cJ \coloneqq \lbra{-2n,\dots,2n}$ instead of $\lbra{1,\dots,N}$, where $N = 4n + n_0$ with $n_0=1,2,3,4$. Moreover, we may equivalently consider the Bernoulli observation model following from \cite{candes2006robust,tang2012compressed}, in which each row of $\m{Y}^o$ is observed independently with probability $p=\frac{M}{4n}$, rather than the uniform observation model as in Theorems \ref{thm:noiseless} and \ref{thm:noiseless1}. The remainder of the proof consists of two steps.
\begin{enumerate}
\item Consider the full data case where $\Omega^c$ is empty and construct $\overline Q(f)$, referred to as $Q(f)$, satisfying \eqref{eq:cons1} and
    \lentwo{\equa{
    &&\twon{Q(f)}^2\leq 1-c_1n^2 (f-f_k)^2,\notag \\
    &&\qquad f\in\cN_k\coloneqq \sbra{f_k-\frac{0.16}{n},\; f_k+\frac{0.16}{n}}, \label{eq:Qnear}\\
    &&\twon{Q(f)}\leq 1-c_2, \quad f\in \cF \coloneqq \bT\setminus \bigcup_{k=1}^K \cN_k, \label{eq:Qfar}\\
    && \frac{\text{d}^2\twon{Q(f)}^2}{\text{d}f^2}\leq -2c_1n^2,\quad f\in\cN_k, \label{eq:Qnear1}
    }}where $c_1$ and $c_2$ are positive constants. Note that \eqref{eq:Qnear} and \eqref{eq:Qfar} form a stronger condition than \eqref{eq:cons2}.
\item In the compressive data case of interest, construct a random polynomial $\overline Q(f)$ satisfying \eqref{eq:cons1} and \eqref{eq:cons3} with respect to the Bernoulli sampling scheme. Show that, if $M$ satisfies \eqref{eq:AN_bound1}, then $\overline Q(f)$ (and its first and second derivatives) is close to $Q(f)$ (and its first and second derivatives) on the whole unit circle $\bT$ with high probability under the assumptions of Theorem \ref{thm:noiseless1}, and $\overline Q(f)$ also satisfies \eqref{eq:Qnear} and \eqref{eq:Qfar}. As a result, $\overline Q(f)$ is a polynomial as required in Proposition \ref{prop:dualpoly}.
\end{enumerate}

In the ensuing subsections, we prove Theorem \ref{thm:noiseless} following the aforementioned steps of the proof of Theorem \ref{thm:noiseless1}. The difference is in the second step above, namely, showing that $\overline Q(f)-Q(f)$ and its derivatives are arbitrarily small on the unit circle $\bT$ with high probability under the assumptions of Theorem \ref{thm:noiseless}, with $M$ in \eqref{eq:AN_bound} being a decreasing function of the number of channels $L$. In particular, we show in Subsection \ref{sec:Qf} how the dual certificate $\overline Q(f)$ is constructed, summarize in Subsection \ref{sec:lemma} some useful lemmas shown in \cite{candes2013towards,tang2012compressed,yang2016exact}, analyze in Subsection \ref{sec:diff} the difference $\overline Q(f)-Q(f)$, and complete the proof in Subsection \ref{sec:complete}.

\subsection{Construction of $\overline Q(f)$} \label{sec:Qf}
To construct $\overline Q(f)$, we start with the squared Fej\'{e}r kernel
\equ{\cK\sbra{f}=\mbra{\frac{\sin(\pi (n+1)f)}{(n+1)\sin\sbra{\pi f}}}^4=\sum_{j=-2n}^{2n}g_n\sbra{j}e^{-i2\pi jf} \label{eq:K}}
with coefficients
\equ{\begin{split}
&g_n\sbra{j}\\
&=\frac{1}{n+1}\sum_{k=\max\sbra{j-n-1,-n-1}}^{\min\sbra{j+n+1,n+1}} \sbra{1-\frac{\abs{k}}{n+1}}\sbra{1-\frac{\abs{j-k}}{n+1}} \end{split} \notag}
obeying $0<g_n\sbra{j}\leq1$, $j=-2n,\dots,2n$. This kernel equals unity at the origin and decays rapidly away from it. Let $\lbra{\delta_j}_{j\in\cJ}$ be i.i.d.~Bernoulli random variables with
\equ{\bP\sbra{\delta_j=1} = p = \frac{M}{4n}.\notag }
It follows that
\equ{\Omega = \lbra{j\in\cJ:\;\delta_j = 1} \notag}
with $\bE \abs{\Omega} \approx M$.
This allows us to write a compressive-data analog of the squared Fej\'{e}r kernel as
\equ{\overline \cK\sbra{f}= \sum_{j\in\Omega} g_n\sbra{j}e^{-i2\pi jf} = \sum_{j=-2n}^{2n} \delta_j g_n\sbra{j}e^{-i2\pi jf} \label{eq:Kb}}
and define the vector-valued polynomial $\overline Q(f)$ as
\equ{\overline Q\sbra{f}=\sum_{f_k\in\cT}\m{\alpha}_k \overline\cK\sbra{f-f_k} + \sum_{f_k\in\cT}\m{\beta}_k \overline\cK^{\sbra{1}}\sbra{f-f_k}, \label{eq:randpoly}}
where $\m{\alpha}_k$ and $\m{\beta}_k$ are $1\times L$ vector coefficients to specify and the superscript $l$ denotes the $l$th derivative.

It is evident that $\overline Q\sbra{f}$ in \eqref{eq:randpoly} satisfies the support condition in \eqref{eq:cons3}. According to \cite{tang2012compressed}, $\overline\cK$ and its derivatives are concentrated around their expectations, $p\cK$ and its derivatives, if $M$ is large enough. As a result, $\overline Q\sbra{f}$ is expected to peak near $f_k\in\cT$ if $\overline Q\sbra{f}$ is dominated by the first term and the coefficients $\m{\alpha}_k$ and $\m{\beta}_k$ are appropriately chosen. To make $\overline Q\sbra{f}$ satisfy \eqref{eq:cons1}, we impose for any $f_j\in\cT$,
\equ{\sum_{f_k\in\cT}\m{\alpha}_k \overline\cK\sbra{f_j-f_k} + \sum_{f_k\in\cT} \m{\beta}_k\overline\cK^{\sbra{1}}\sbra{f_j-f_k}=\m{\phi}_j. \label{eq:Qfj}}
To satisfy \eqref{eq:cons2}, a necessary condition is that the derivative of $\twon{\overline Q\sbra{f}}^2$ vanishes at $f_j\in\cT$, leading to
\equ{\begin{split}\frac{\text{d}\twon{\overline Q\sbra{f}}^2}{\text{d}f}\Bigg|_{f=f_j}
&= 2\Re\lbra{\overline Q^{\sbra{1}}\sbra{f_j} \overline Q^{H}\sbra{f_j}}\\
&= 2\Re\lbra{ \overline Q^{\sbra{1}}\sbra{f_j}\m{\phi}_j^H}\\
&= 0. \label{eq:Qdfj1} \end{split}}
Consequently, one feasible choice is to let, for any $f_j\in\cT$,
\equ{\begin{split}
&\overline Q^{\sbra{1}}\sbra{f_j} \\
&= \sum_{f_k\in\cT}\m{\alpha}_k \overline\cK^{\sbra{1}}\sbra{f_j-f_k} + \sum_{f_k\in\cT}\m{\beta}_k\overline\cK^{\sbra{2}}\sbra{f_j-f_k}\\
&=\m{0}. \label{eq:Qdfj} \end{split}}

The conditions in \eqref{eq:Qfj} and \eqref{eq:Qdfj} consist of $2KL$ equations and therefore may determine the coefficients $\lbra{\m{\alpha}_k}$ and $\lbra{\m{\beta}_k}$ (and hence $\overline Q\sbra{f}$). Define the $K\times K$ matrices $\overline {\m{D}}_l$, $l=0,1,2$ such that $\mbra{\overline D_l}_{jk} = \overline \cK^{\sbra{l}}\sbra{f_j-f_k}$ (note that the zeroth derivative is itself). Then, \eqref{eq:Qfj} and \eqref{eq:Qdfj} lead to the following system of linear equations:
\equ{\overline{\m{D}}\begin{bmatrix}\m{\alpha}\\ c_0\m{\beta}\end{bmatrix} = \begin{bmatrix}\overline{\m{D}}_0 & c_0^{-1}\overline{\m{D}}_1 \\ -c_0^{-1}\overline{\m{D}}_1 & -c_0^{-2}\overline{\m{D}}_2\end{bmatrix} \begin{bmatrix}\m{\alpha}\\ c_0\m{\beta}\end{bmatrix} = \begin{bmatrix}\m{\Phi}\\ \m{0}\end{bmatrix}, \label{eq:linsys}}
where $\m{\Phi}$ is the $K\times L$ matrix formed by stacking $\lbra{\m{\phi}_j}$ together, with $\m{\alpha}$ and $\m{\beta}$ similarly defined. The constant $c_0=\sqrt{\abs{\cK^{\sbra{2}}\sbra{0}}}=\sqrt{\frac{4\pi^2n(n+2)}{3}}$ is introduced so that the coefficient matrix $\overline{\m{D}}$ is symmetric and well-conditioned \cite{tang2012compressed}. Note that all compressive-data (random) quantities defined above such as $\overline{Q}$ and $\overline{\m{D}}$ have full-data (deterministic) analogs, denoted by $Q$ and $\m{D}$ by removing the overlines and obtained by replacing $\overline \cK$ in their expressions with $\cK$ [refer $Q$ to \eqref{eq:Qnear} and \eqref{eq:Qfar}]. The remaining task is to show that, if $M$ satisfies \eqref{eq:AN_bound}, then $\overline{\m{D}}$ is invertible and $\overline Q\sbra{f}$ can be uniquely determined under the assumptions of Theorem \ref{thm:noiseless}. In addition, show that $\overline Q\sbra{f}$ satisfies \eqref{eq:cons2}, which together with Proposition \ref{prop:dualpoly} completes the proof.

\subsection{Useful Lemmas} \label{sec:lemma}
To complete the proof we rely on some useful results shown in \cite{candes2013towards,tang2012compressed, yang2016exact} that are summarized below.

First consider the invertibility of $\overline{\m{D}}$. For $\tau\in\left(0,\frac{1}{4}\right]$, define the event
\equ{\cE_{1,\tau}\coloneqq\lbra{\twon{p^{-1}\overline{\m{D}}-\m{D}}\leq\tau}. \notag}
Let $\delta$, $\epsilon$ be small positive numbers and $C$ a constant, which are independent of the parameters $K$, $M$, $N$ and $L$, and may vary from instance to instance. Then we have the following lemma which, in addition to the invertibility of $\overline{\m{D}}$, also guarantees linear independence of $\lbra{\m{a}_{\Omega}\sbra{f_k}}_{f_k\in\cT}$ as required in Proposition \ref{prop:dualpoly}.
\begin{lem}[\cite{candes2013towards,tang2012compressed, yang2016exact}] Assume $\Delta_{\cT}\geq \frac{1}{n}$ and $n\geq 64$ and let $\tau\in\left(0,\frac{1}{4}\right]$. Then, $\m{D}$ is invertible, $\overline{\m{D}}$ is invertible on $\cE_{1,\tau}$, $\lbra{\m{a}_{\Omega}\sbra{f_k}}_{f_k\in\cT}$ are linearly independent on $\cE_{1,\tau}$, and $\bP\sbra{\cE_{1,\tau}}\geq 1-\delta$ if
\equ{M\geq\frac{50}{\tau^2}K\log\frac{2K}{\delta}. \notag}  \label{lem:invertibility}
\end{lem}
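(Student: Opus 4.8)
The plan is to split Lemma~\ref{lem:invertibility} into a deterministic part---invertibility of $\m{D}$ together with a bound on $\twon{\m{D}^{-1}}$---and a probabilistic part---the concentration of $p^{-1}\overline{\m{D}}$ about $\m{D}$ that defines $\cE_{1,\tau}$---and then to read off the remaining assertions on $\cE_{1,\tau}$ by a Neumann-series perturbation argument. First I would handle $\m{D}$. Because $\m{D}$ is assembled from the squared Fej\'er kernel $\cK$ and its first two derivatives evaluated at the pairwise gaps $f_j-f_k$, with the blocks rescaled by powers of $c_0$, the hypotheses $\Delta_{\cT}\geq\frac1n$ and $n\geq64$ feed directly into the pointwise decay estimates for $\cK,\cK^{\sbra{1}},\cK^{\sbra{2}}$ from \cite{candes2013towards,tang2012compressed}. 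These yield $\twon{\m{I}-\m{D}}\leq\eta$ for an explicit numerical constant $\eta$ with $\eta+\frac14<1$ (and, restricted to the leading $K\times K$ block, $\twon{\m{I}-\m{D}_0}\leq\eta$); in particular $\m{D}$ is invertible with $\twon{\m{D}^{-1}}\leq(1-\eta)^{-1}$ and $\twon{\m{D}}\leq1+\eta$.

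Next, for the concentration I would write each sample's contribution to $\overline{\m{D}}$ as a rank-one update. Setting $[\m{u}_j]_k=e^{-i2\pi j f_k}$ for $f_k\in\cT$ and $\m{w}_j=\mbra{\m{u}_j^T,\; c_0^{-1}i2\pi j\,\m{u}_j^T}^T\in\bC^{2K}$, a direct check reproduces the block-sign pattern of \eqref{eq:linsys} and shows $\overline{\m{D}}=\sum_{j\in\cJ}\delta_j\,g_n\sbra{j}\,\m{w}_j\m{w}_j^H$, so $p^{-1}\overline{\m{D}}-\m{D}=\sum_{j\in\cJ}\sbra{p^{-1}\delta_j-1}g_n\sbra{j}\,\m{w}_j\m{w}_j^H$ is a sum of independent, centered, Hermitian matrices. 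Using $0<g_n\sbra{j}\leq1$, $\abs{j}\leq2n$ and $c_0^2=\frac{4\pi^2n(n+2)}{3}$ one gets $\twon{\m{w}_j}^2\leq13K$, so each summand has operator norm at most $13K/p$ and the matrix variance $\twon{\sum_{j\in\cJ}\bE\mbra{\sbra{p^{-1}\delta_j-1}^2}g_n\sbra{j}^2\twon{\m{w}_j}^2\m{w}_j\m{w}_j^H}$ is at most a constant times $K/p$ (using $g_n\sbra{j}^2\leq g_n\sbra{j}$ and $\twon{\m{D}}\leq1+\eta$). The matrix Bernstein inequality then gives $\bP\sbra{\cE_{1,\tau}^c}\leq\delta$ once $M=4np\geq\frac{50}{\tau^2}K\log\frac{2K}{\delta}$; carrying the constants through these two bounds is exactly what produces the factor $50$, and this is the computation performed in \cite{tang2012compressed,yang2016exact}.

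Finally I would collect the consequences on $\cE_{1,\tau}$. There $\twon{p^{-1}\overline{\m{D}}-\m{D}}\leq\tau\leq\frac14$, so $\twon{\m{D}^{-1}}\cdot\twon{p^{-1}\overline{\m{D}}-\m{D}}\leq\frac{\tau}{1-\eta}<1$, and a Neumann series shows $p^{-1}\overline{\m{D}}$---hence $\overline{\m{D}}$---is invertible, so $\overline Q\sbra{f}$ is uniquely determined by \eqref{eq:linsys}. For the linear independence of $\lbra{\m{a}_{\Omega}\sbra{f_k}}$, note that the leading block is the weighted Gram matrix $\overline{\m{D}}_0=\m{A}_{\Omega}^H\,\diag\sbra{\lbra{g_n\sbra{j}}_{j\in\Omega}}\,\m{A}_{\Omega}$ with $\m{A}=\mbra{\m{a}\sbra{f_1},\dots,\m{a}\sbra{f_K}}$ and strictly positive weights $g_n\sbra{j}$; since $\twon{\m{I}-p^{-1}\overline{\m{D}}_0}\leq\twon{\m{I}-\m{D}_0}+\twon{p^{-1}\overline{\m{D}}-\m{D}}\leq\eta+\frac14<1$, the block $p^{-1}\overline{\m{D}}_0$ is invertible, hence so is $\overline{\m{D}}_0$, which forces $\m{A}_{\Omega}$ to have full column rank.

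The step I expect to be the main obstacle is the probabilistic one: extracting clean, fully explicit operator-norm and variance bounds for the rank-one summands (which hinges on the kernel and coefficient estimates of \cite{candes2013towards,tang2012compressed}) and then pushing the matrix Bernstein bound all the way to the advertised constant $50$. Everything else is linear algebra and a routine perturbation estimate.
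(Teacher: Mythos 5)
First, note that the paper never proves Lemma~\ref{lem:invertibility}: it is imported verbatim from \cite{candes2013towards,tang2012compressed,yang2016exact}, so your proposal has to be measured against the proofs in those references. Your architecture is the right one and matches theirs: bound $\twon{\m{I}-\m{D}}$ deterministically via the decay of $\cK,\cK^{(1)},\cK^{(2)}$ under the separation condition, establish $\twon{p^{-1}\overline{\m{D}}-\m{D}}\le\tau$ by viewing $p^{-1}\overline{\m{D}}-\m{D}=\sum_j\sbra{p^{-1}\delta_j-1}g_n\sbra{j}\m{w}_j\m{w}_j^H$ as a sum of independent centered Hermitian rank-one terms and applying matrix Bernstein, then obtain invertibility of $\overline{\m{D}}$ by a Neumann-series perturbation and linear independence of $\lbra{\m{a}_{\Omega}\sbra{f_k}}$ from the leading block $\overline{\m{D}}_0=\m{A}_{\Omega}^H\diag\sbra{\lbra{g_n\sbra{j}}}\m{A}_{\Omega}$ with positive weights. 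The rank-one identification of $\overline{\m{D}}$ and the two linear-algebra consequences are correct.

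The genuine gap is in the quantitative core of the Bernstein step. You bound each summand by $\abs{p^{-1}\delta_j-1}\,g_n\sbra{j}\twon{\m{w}_j}^2\le 13K/p$ and the variance by a constant times $K/p$, using only $g_n\sbra{j}\le1$ and $\twon{\m{w}_j}^2\le13K$. Plugging these into matrix Bernstein gives a failure probability of order $K\exp\sbra{-c\,p\tau^2/K}$, i.e.\ it requires $p\gtrsim K\log\sbra{K/\delta}/\tau^2$, which is impossible (since $p\le1$) and, rewritten via $p=M/(4n)$, demands $M\gtrsim nK\log\sbra{K/\delta}/\tau^2$ --- a factor of $n$ worse than the claimed $M\ge\frac{50}{\tau^2}K\log\frac{2K}{\delta}$, so the advertised constant $50$ cannot be ``carried through'' from your bounds. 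The missing ingredient is the correct normalization of the kernel coefficients: for the squared Fej\'er kernel normalized so that $\cK\sbra{0}=1$ (hence $\m{D}$ has unit diagonal and $\twon{\m{D}}=O(1)$), the Fourier coefficients are $O(1/n)$ --- equivalently, $\tr\sbra{\m{D}}=2K$ is spread over the $\sim 4n$ rank-one terms --- so each observed row contributes only $O\sbra{K/(np)}=O\sbra{K/M}$ in operator norm and the variance is $O\sbra{K/M}$; only with this factor-of-$n$ saving does Bernstein deliver $M\gtrsim K\log\sbra{K/\delta}/\tau^2$. (The paper's displayed formula for $g_n\sbra{j}$, which is inconsistent with $\cK\sbra{0}=1$ by exactly such a factor, may have misled you, but as written your norm and variance estimates do not imply the lemma.) This is precisely the step you flagged as the main obstacle, and it is where the proposal currently falls short; the rest is fine.
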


On $\cE_{1,\tau}$, we introduce the partitions $\overline{\m{D}}^{-1}=\begin{bmatrix}\overline{\m{L}} & \overline{\m{R}}\end{bmatrix}$ and $\m{D}^{-1}=\begin{bmatrix}\m{L} & \m{R}\end{bmatrix}$, where $\overline{\m{L}}$, $\overline{\m{R}}$, $\m{L}$ and $\m{R}$ are all $2K\times K$ matrices. For $l=0,1,2$, let
\equ{\overline{\m{v}}_l\sbra{f}=c_0^{-l}\begin{bmatrix} \overline\cK^{\sbra{l}H}\sbra{f-f_1} \\ \vdots \\ \overline\cK^{\sbra{l}H}\sbra{f-f_K} \\ c_0^{-1}\overline\cK^{\sbra{l+1}H}\sbra{f-f_1} \\ \vdots \\c_0^{-1} \overline\cK^{\sbra{l+1}H}\sbra{f-f_K} \end{bmatrix} \label{eq:vl}}
and similarly define its deterministic analog $\m{v}_l\sbra{f}$, where $f_j\in\cT$ and the superscript $H$ denotes the complex conjugate for a scalar. It follows that on the event $\cE_{1,\tau}$,
{\lentwo\equa{ \begin{bmatrix}\m{\alpha}\\ c_0\m{\beta}\end{bmatrix}
&=&\overline{\m{D}}^{-1}\begin{bmatrix}\m{\Phi}\\ \m{0}\end{bmatrix} =\overline{\m{L}}\m{\Phi}, \\ c_0^{-l}\overline Q^{\sbra{l}}\sbra{f}
&=& \sum_{f_k\in\cT} \m{\alpha}_k c_0^{-l} \overline\cK^{\sbra{l}}\sbra{f-f_k} \notag \\
&& + \sum_{f_k\in\cT} c_0\m{\beta}_k\cdot c_0^{-\sbra{l+1}}\overline\cK^{\sbra{l+1}}\sbra{f-f_k} \notag\\
&=& \overline{\m{v}}_l^H\sbra{f}\overline{\m{L}}\m{\Phi} \notag\\
&\coloneqq& \inp{\m{\Phi}, \overline{\m{L}}^H \overline{\m{v}}_l\sbra{f}}, \label{eq:Qdc}}
where the notation of inner-product is abused since the result of the product is a vector rather than a scalar.

We write $\overline{\m{L}}^H \overline{\m{v}}_l\sbra{f}$ in \eqref{eq:Qdc} in three parts:
\equ{\begin{split}\overline{\m{L}}^H \overline{\m{v}}_l\sbra{f}
&= \m{L}^H\m{v}_l\sbra{f} +\overline{\m{L}}^H\sbra{\overline{\m{v}}_l\sbra{f}-p\m{v}_l\sbra{f} } \\
&\quad + \sbra{\overline{\m{L}}-p^{-1}\m{L}}^Hp\m{v}_l\sbra{f},
\end{split} \notag}
which results in the following decomposition of $c_0^{-l}\overline Q^{\sbra{l}}\sbra{f}$:
\equ{\begin{split}c_0^{-l}\overline Q^{\sbra{l}}\sbra{f}
&= \inp{\m{\Phi}, \overline{\m{L}}^H \overline{\m{v}}_l\sbra{f}} \\
&= \inp{\m{\Phi}, \m{L}^H\m{v}_l\sbra{f}} \\
&\quad + \inp{\m{\Phi}, \overline{\m{L}}^H\sbra{\overline{\m{v}}_l\sbra{f}-p\m{v}_l\sbra{f} }}\\
&\quad + \inp{\m{\Phi}, \sbra{\overline{\m{L}}-p^{-1}\m{L}}^Hp\m{v}_l\sbra{f}}\\
&= c_0^{-l}Q^{\sbra{l}}\sbra{f} +I_1^l\sbra{f} + I_2^l\sbra{f},
\end{split} \label{eq:decomposition}}
where we have defined
\lentwo{\equa{ I_1^l\sbra{f}
&\coloneqq& \inp{\m{\Phi}, \overline{\m{L}}^H\sbra{\overline{\m{v}}_l\sbra{f}-p\m{v}_l\sbra{f} }}, \notag \\ I_2^l\sbra{f}
&\coloneqq& \inp{\m{\Phi}, \sbra{\overline{\m{L}}-p^{-1}\m{L}}^Hp\m{v}_l\sbra{f}}.\notag
}}

As a result of \eqref{eq:decomposition}, a connection between $\overline Q\sbra{f}$ and $Q\sbra{f}$ is established. Our goal is to show that the random perturbations $I_1^l\sbra{f} + I_2^l\sbra{f}$, $l=0,1,2$ can be arbitrarily small when $M$ is sufficiently large, meaning that $c_0^{-l}\overline Q^{\sbra{l}}\sbra{f}$ is concentrated around $c_0^{-l} Q^{\sbra{l}}\sbra{f}$. To this end, we need the following results shown in \cite{tang2012compressed}.

\begin{lem}[\cite{tang2012compressed}] Assume $\Delta_{\cT}\geq \frac{1}{n}$, let $\tau\in\left(0,\frac{1}{4}\right]$, and consider a finite set $\bT_{\text{grid}}=\lbra{f_d}\subset\bT$. Then, we have
\equ{\begin{split}
& \bP\left[\sup_{f_d\in\bT_{\text{grid}}} \twon{\overline{\m{L}}^H\sbra{\overline{\m{v}}_l\sbra{f_d} -p\m{v}_l\sbra{f_d}}}\right.\\
&\quad  \left.\geq 4\sbra{2^{2l+3}\sqrt{\frac{K}{M}}+\frac{n}{M}a\overline{\sigma}_l},\; l=0,1,2 \right] \\
& \leq 64\abs{\bT_{\text{grid}}}e^{-\gamma a^2}+\bP\sbra{\cE_{1,\tau}^c}
\end{split} \notag}
for some constant $\gamma>0$, where $\overline\sigma_l^2= 2^{4l+1}\frac{M}{n^2}\max\lbra{1,\frac{2^4K}{\sqrt{M}}}$ and
\equ{0<a\leq \left\{\begin{array}{ll} \sqrt{2}M^{\frac{1}{4}}, & \text{if } \frac{2^4K}{\sqrt{M}}\geq1, \\ \frac{\sqrt{2}}{4}\sqrt{\frac{M}{K}}, & \text{otherwise.}\end{array}\right.\notag} \label{lem:I1bound}
\end{lem}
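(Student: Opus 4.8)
I would prove this estimate by the concentration-of-measure route of \cite{tang2012compressed}. The quantity at hand, $\overline{\m{L}}^H(\overline{\m{v}}_l(f)-p\m{v}_l(f))$, is the vector that paired with $\m{\Phi}$ yields the fluctuation $I_1^l(f)$ of \eqref{eq:decomposition}, and I would bound it in four steps: (i) realize $\overline{\m{v}}_l(f)-p\m{v}_l(f)$ as a sum over the Bernoulli indices $j\in\cJ$ of independent mean-zero random vectors of controlled norm; (ii) on the event $\cE_{1,\tau}$ — where $\overline{\m{L}}$ is well-defined by Lemma \ref{lem:invertibility} — replace the random matrix $\overline{\m{L}}$ by its deterministic surrogate $p^{-1}\m{L}$ up to a small operator-norm error; (iii) apply a Bernstein-type tail bound at a fixed $f$; (iv) union-bound over the finite grid $\bT_{\text{grid}}$, adding $\bP(\cE_{1,\tau}^c)$ to account for working off this event.

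For step (i), substituting \eqref{eq:Kb} into \eqref{eq:vl} gives $\overline{\m{v}}_l(f)-p\m{v}_l(f)=\sum_{j\in\cJ}(\delta_j-p)\m{w}_j^l(f)$ with $\m{w}_j^l(f)\in\bC^{2K}$ deterministic; since $|j|\leq 2n$ and $c_0=\sqrt{|\cK^{(2)}(0)|}\asymp n$, every entry of $\m{w}_j^l(f)$ is $O(g_n(j))$, so $\twon{\m{w}_j^l(f)}=O(\sqrt K\,g_n(j))$, and the variance proxy $\sum_j\bE(\delta_j-p)^2\twon{\m{w}_j^l(f)}^2\leq p\sum_j\twon{\m{w}_j^l(f)}^2$ is evaluated by Parseval applied to $\cK^{(l)}$ and $\cK^{(l+1)}$; the required kernel $L^2$-estimates are exactly those established in \cite{candes2013towards,tang2012compressed}, and they yield the $M/n^2$ scaling in $\overline\sigma_l^2$. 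For step (ii), write $\overline{\m{L}}=p^{-1}\m{L}+(\overline{\m{L}}-p^{-1}\m{L})$; on $\cE_{1,\tau}$ the matrix $p^{-1}\overline{\m{D}}$ lies within $\tau$ of $\m{D}$, which is symmetric and well-conditioned (indeed close to the identity, because $\Delta_{\cT}\geq 1/n$, $n\geq 64$ and $c_0$ is chosen to symmetrize it), so the standard perturbation bound for matrix inverses gives $\twon{p^{-1}\m{L}}=O(p^{-1})$ and $\twon{\overline{\m{L}}-p^{-1}\m{L}}=O(\tau/p)$.

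Combining, $\overline{\m{L}}^H(\overline{\m{v}}_l-p\m{v}_l)$ decomposes as $(p^{-1}\m{L})^H\sum_j(\delta_j-p)\m{w}_j^l(f)$ — a sum of independent bounded mean-zero vectors whose expected squared norm produces the deterministic $2^{2l+3}\sqrt{K/M}$ term (recall $p=M/(4n)$) and whose Bernstein tail above the mean produces the $\tfrac{n}{M}a\overline\sigma_l$ term — plus the remainder $(\overline{\m{L}}-p^{-1}\m{L})^H(\overline{\m{v}}_l-p\m{v}_l)$, bounded crudely by $O(\tau/p)\twon{\overline{\m{v}}_l-p\m{v}_l}$ and, since $\tau\leq\tfrac14$, absorbed into a constant multiple of the main term. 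The $\max\{1,2^4K/\sqrt M\}$ factor inside $\overline\sigma_l^2$, together with the two admissible ranges for $a$ ($a\leq\sqrt2 M^{1/4}$ when $2^4K\geq\sqrt M$, and $a\leq\tfrac{\sqrt2}{4}\sqrt{M/K}$ otherwise), is precisely the bookkeeping that keeps the Gaussian and sub-exponential regimes of Bernstein's inequality in balance, so that, at the maximal admissible $a$, the deviation term never exceeds the deterministic one. Step (iv) then multiplies the pointwise failure probability by $|\bT_{\text{grid}}|$ and, after absorbing numerical constants, yields the stated $64|\bT_{\text{grid}}|e^{-\gamma a^2}+\bP(\cE_{1,\tau}^c)$.

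The main obstacle is the statistical coupling between $\overline{\m{L}}$ and $\overline{\m{v}}_l-p\m{v}_l$: both depend on the same Bernoulli vector $\{\delta_j\}$, so a concentration inequality cannot be applied to the product as a unit. The resolving device is the splitting of $\overline{\m{L}}$ into the deterministic surrogate $p^{-1}\m{L}$, on which concentration is clean, plus a remainder that is still random but has operator norm $O(\tau/p)$ on $\cE_{1,\tau}$ and hence contributes only a controllable amount — which is exactly why the auxiliary event $\cE_{1,\tau}$, and the sample condition $M\gtrsim K\log(K/\delta)$ of Lemma \ref{lem:invertibility} that makes it likely, appear in the statement. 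What remains is computational: extracting the exact powers of $n$, $c_0$ and $p$ from the $L^2$-norms of the squared Fej\'er kernel and its first three derivatives so that the deviation term lands at $\tfrac{n}{M}a\overline\sigma_l$ with the precise $\overline\sigma_l$; since those kernel estimates are already in hand from the single-channel analyses, this is careful accounting rather than new mathematics.
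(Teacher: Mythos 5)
You cannot really be checked against an in-paper argument here: the paper does not prove this lemma at all but imports it verbatim from \cite{tang2012compressed}, so the benchmark is that reference's proof. At the level of architecture your reconstruction matches it: write $\overline{\m{v}}_l\sbra{f_d}-p\m{v}_l\sbra{f_d}=\sum_j\sbra{\delta_j-p}\m{w}_j^l\sbra{f_d}$ as a sum of independent, mean-zero, bounded vectors; control $\overline{\m{L}}$ on $\cE_{1,\tau}$; apply a Bernstein-type tail at each fixed $f_d$; union bound over $\bT_{\text{grid}}$ and the values of $l$; and pay $\bP\sbra{\cE_{1,\tau}^c}$ for leaving the event. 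One simplification: the surrogate split $\overline{\m{L}}=p^{-1}\m{L}+\sbra{\overline{\m{L}}-p^{-1}\m{L}}$ is unnecessary for this term (it is the device used for $I_2^l$, cf.\ Lemma \ref{lem:I2bound}); on $\cE_{1,\tau}$ one has the deterministic bound $\twon{\overline{\m{L}}}\leq 2\twon{\m{D}^{-1}}p^{-1}$, so submultiplicativity already decouples $\overline{\m{L}}$ from the random sum, and your remainder term still needs the very same concentration bound on $\twon{\overline{\m{v}}_l-p\m{v}_l}$, so the split buys nothing.

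The genuine gap is in the variance bookkeeping, and it is not a constant-chasing issue. You propose the ``strong'' variance $\sum_j\bE\sbra{\delta_j-p}^2\twon{\m{w}_j^l}^2\leq p\sum_j\twon{\m{w}_j^l}^2$ as the Bernstein parameter and assert it yields the $M/n^2$ scaling of $\overline\sigma_l^2$. It does not: with $\twon{\m{w}_j^l}=O\sbra{4^l\sqrt{K}/n}$ and $O\sbra{n}$ summands, this quantity scales as $16^lKM/n^2$, exceeding the stated $\overline\sigma_l^2=2^{4l+1}\frac{M}{n^2}\max\lbra{1,2^4K/\sqrt{M}}$ by roughly a factor $\min\lbra{K,\sqrt{M}}$. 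The stated $\overline\sigma_l$ is attainable only because the deviation above the typical size is governed by a $K$-free, directional variance, $\sup_{\twon{\m{u}}=1}p\sum_j\abs{\m{u}\m{w}_j^{lH}}^2\leq p\twon{\sum_j\m{w}_j^l\m{w}_j^{lH}}=O\sbra{16^lM/n^2}$ (the separation condition keeps this Gram-type matrix well conditioned), while the $\sqrt{K}$ is confined to the mean-size term $2^{2l+3}\sqrt{K/M}$; the $\max\lbra{1,2^4K/\sqrt{M}}$ inflation and the restricted range of $a$ then absorb the bounded-summand regime. If the deviation term instead carried your strong variance, it would be of order $a\,2^{2l}\sqrt{K/M}$, and in the downstream Lemma \ref{lem:I1} the requirement $\epsilon^2/\lambda_l^2\geq 2\sbra{1+\frac{1}{L}\log\frac{\abs{\bT_{\text{grid}}}}{\delta}}$ would force $M\geq C\frac{K}{\epsilon^2}\log\frac{\abs{\bT_{\text{grid}}}}{\delta}\sbra{1+\frac{1}{L}\log\frac{\abs{\bT_{\text{grid}}}}{\delta}}$: the $\log N$ factor would no longer be divided by $L$, and the paper's main bound \eqref{eq:AN_bound} would be lost. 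So the step you defer as ``careful accounting'' is exactly where the content of the lemma sits; it has to be taken from, or reproved along the lines of, \cite{tang2012compressed}.
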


\begin{lem}[\cite{tang2012compressed}] Assume $\Delta_{\cT}\geq \frac{1}{n}$. On the event $\cE_{1,\tau}$, we have
\equ{\twon{\sbra{\overline{\m{L}}-p^{-1}\m{L} }^Hp\m{v}_l\sbra{f} } \leq C\tau \notag}
for some constant $C>0$. \label{lem:I2bound}
\end{lem}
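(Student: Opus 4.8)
The plan is to bound $\sbra{\overline{\m{L}}-p^{-1}\m{L}}^Hp\m{v}_l\sbra{f}$ as the product of a matrix of operator norm $O(\tau)$ and a vector of bounded norm. Note that, unlike the rest of the proof, this lemma does not involve the random phases $\lbra{\m{\phi}_k}$, so it is a purely deterministic, linear‑algebraic statement about the kernel matrices on the (deterministic) event $\cE_{1,\tau}$, and the only inputs I need are exactly those already available from the single‑channel analyses of \cite{candes2013towards,tang2012compressed}.

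First I would set $\widetilde{\m{D}}\coloneqq p^{-1}\overline{\m{D}}$, so that $\bE\widetilde{\m{D}}=\m{D}$ and $\cE_{1,\tau}=\lbra{\norm{\widetilde{\m{D}}-\m{D}}\le\tau}$. Using the fact (established in \cite{candes2013towards,tang2012compressed} under $\Delta_{\cT}\ge\frac1n$) that $\norm{\m{D}-\m{I}}\le\frac14$, we get $\norm{\m{D}^{-1}}\le\frac43$; and on $\cE_{1,\tau}$, since $\tau\le\frac14$, we have $\norm{\widetilde{\m{D}}-\m{I}}\le\tau+\frac14\le\frac12$, so $\widetilde{\m{D}}$ is invertible with $\norm{\widetilde{\m{D}}^{-1}}\le2$ by a Neumann‑series estimate. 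Since $\overline{\m{L}}$ and $\m{L}$ are the first $K$ columns of $\overline{\m{D}}^{-1}$ and $\m{D}^{-1}$ and $p\overline{\m{D}}^{-1}=\widetilde{\m{D}}^{-1}$, the matrix $p\sbra{\overline{\m{L}}-p^{-1}\m{L}}=p\overline{\m{L}}-\m{L}$ is exactly the block formed by the first $K$ columns of $\widetilde{\m{D}}^{-1}-\m{D}^{-1}$. Applying the resolvent identity $\widetilde{\m{D}}^{-1}-\m{D}^{-1}=-\widetilde{\m{D}}^{-1}\sbra{\widetilde{\m{D}}-\m{D}}\m{D}^{-1}$ and passing to operator norms (a column restriction cannot increase the operator norm),
\equ{\norm{p\sbra{\overline{\m{L}}-p^{-1}\m{L}}}\le\norm{\widetilde{\m{D}}^{-1}}\,\norm{\widetilde{\m{D}}-\m{D}}\,\norm{\m{D}^{-1}}\le 2\cdot\tau\cdot\tfrac43=\tfrac83\tau.}
Then, pulling the scalar $p$ out of the norm,
\equ{\twon{\sbra{\overline{\m{L}}-p^{-1}\m{L}}^Hp\m{v}_l\sbra{f}}\le\norm{p\sbra{\overline{\m{L}}-p^{-1}\m{L}}}\;\twon{\m{v}_l\sbra{f}}\le\tfrac83\tau\,\sup_{f\in\bT}\twon{\m{v}_l\sbra{f}}.}

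The only nontrivial ingredient is the uniform bound $\sup_{f\in\bT}\twon{\m{v}_l\sbra{f}}\le C_v$ for an absolute constant $C_v$ (for each $l\in\lbra{0,1,2}$), which is precisely where the minimum‑separation hypothesis $\Delta_{\cT}\ge\frac1n$ enters; this estimate is standard and already used in \cite{candes2013towards,tang2012compressed}. Granting it, the lemma follows with $C=\frac83 C_v$, so I do not expect a genuine obstacle—everything past the resolvent identity is bookkeeping with absolute constants. If a self‑contained derivation of the $\twon{\m{v}_l\sbra{f}}$ bound were wanted, the argument is the usual one: fix $f$, let $f_{k^\star}\in\cT$ be the support point closest to $f$, bound the $k^\star$‑th entry of $\m{v}_l\sbra{f}$ by $\abs{c_0^{-l}\cK^{\sbra{l}}\sbra{0}}=O(1)$, and control the remaining entries using the polynomial decay $\abs{c_0^{-l}\cK^{\sbra{l}}\sbra{t}}=O\sbra{(n\abs{t})^{-2}}$ away from the origin together with $\Delta_{\cT}\ge\frac1n$, which turns the tail into a convergent series summing to an absolute constant.
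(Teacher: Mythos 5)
Since the paper itself does not prove this lemma but simply quotes it from \cite{tang2012compressed}, the relevant comparison is with that source, and your argument is essentially the argument used there: on $\cE_{1,\tau}$ write $p\overline{\m{L}}-\m{L}$ as a column block of $\sbra{p^{-1}\overline{\m{D}}}^{-1}-\m{D}^{-1}$, apply the resolvent identity together with $\norm{p^{-1}\overline{\m{D}}-\m{D}}\leq\tau$ and the boundedness of $\m{D}^{-1}$ and $\sbra{p^{-1}\overline{\m{D}}}^{-1}$, and then use the uniform bound $\sup_{f\in\bT}\twon{\m{v}_l\sbra{f}}\leq C$, which is exactly where the separation condition and the Fej\'er-kernel decay enter and which is established in \cite{candes2013towards,tang2012compressed}. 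The only blemish is numerical rather than structural: the claim $\norm{\m{D}-\m{I}}\leq\frac{1}{4}$ is not the bound actually available (the cited works, and Appendix D of this paper, give $\norm{\m{D}^{-1}}\leq 1.568$, i.e.\ roughly $\norm{\m{I}-\m{D}}\leq 0.363$), but since $0.363+\tau<1$ for $\tau\leq\frac{1}{4}$ the Neumann-series step still yields $\norm{\sbra{p^{-1}\overline{\m{D}}}^{-1}}\leq C$, so only the absolute constants change and the conclusion $C\tau$ is unaffected.
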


\subsection{Analysis of $\overline Q(f)-Q(f)$ and Its Derivatives} \label{sec:diff}
In this subsection we show that the quantities $c_0^{-l}\overline Q^{\sbra{l}}\sbra{f} - c_0^{-l} Q^{\sbra{l}}\sbra{f}$, $l=0,1,2$ can be arbitrarily small on the unit circle. We first consider a set of finite grid points $\bT_{\text{grid}}\subset \bT$ and define the event
\equ{\cE_2\coloneqq\lbra{\sup_{f_d\in\bT_{\text{grid}}} c_0^{-l} \twon{\overline Q^{\sbra{l}}\sbra{f_d}-Q^{\sbra{l}}\sbra{f_d}} \leq \epsilon,\; l=0,1,2}.\notag}
The following result states that $\cE_2$ occurs with high probability if $M$ is sufficiently large.

\begin{prop} Suppose $\bT_{\text{grid}}\subset \bT$ is a finite set of grid points. Under the assumptions of Theorem \ref{thm:noiseless}, there exists a numerical constant $C$ such that if
\equ{\begin{split}M\geq C\frac{1}{\epsilon^2}\max&\lbra{\log^2\frac{\abs{\bT_{\text{grid}}}}{\delta}, \right.\\
&\quad \left.K\log\frac{K}{\delta} \sbra{1+\frac{1}{L}\log\frac{\abs{\bT_{\text{grid}}}}{\delta}}}, \label{eq:Mboundgrid} \end{split}}
then
\equ{\bP\sbra{\cE_2}\geq 1-\delta. \notag} \label{prop:boundgrid}
\end{prop}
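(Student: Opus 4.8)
The plan is to bound the two random perturbations $I_1^l(f)$ and $I_2^l(f)$ from the decomposition \eqref{eq:decomposition} uniformly over $f_d\in\bT_{\text{grid}}$ and $l\in\lbra{0,1,2}$, so that the triangle inequality applied to $c_0^{-l}\sbra{\overline Q^{\sbra{l}}-Q^{\sbra{l}}}=I_1^l+I_2^l$ produces the event $\cE_2$. The observation that organizes everything is that both $I_1^l(f)$ and $I_2^l(f)$ have the form $\m{w}^H(f)\m{\Phi}$, where $\m{w}(f)\in\bC^K$ depends only on the Bernoulli selection $\lbra{\delta_j}$ (through $\overline{\m{L}}$) and is therefore \emph{independent} of the phase matrix $\m{\Phi}$; explicitly $\m{w}_1(f)=\overline{\m{L}}^H\sbra{\overline{\m{v}}_l(f)-p\m{v}_l(f)}$ and $\m{w}_2(f)=\sbra{\overline{\m{L}}-p^{-1}\m{L}}^Hp\m{v}_l(f)$. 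Conditioning on $\lbra{\delta_j}$ and working on $\cE_{1,\tau}$, Lemma \ref{lem:I2bound} gives $\twon{\m{w}_2(f)}\leq C\tau$, while Lemma \ref{lem:I1bound} with the free parameter chosen as $a\asymp\sqrt{\log\sbra{\abs{\bT_{\text{grid}}}/\delta}}$ gives $\sup_{f_d\in\bT_{\text{grid}}}\twon{\m{w}_1(f_d)}\lesssim\sqrt{K/M}$ on an event of probability at least $1-\delta-\bP\sbra{\cE_{1,\tau}^c}$; this choice of $a$ is admissible in both cases of Lemma \ref{lem:I1bound} once $M$ is a constant multiple of the first term in \eqref{eq:Mboundgrid}, which also absorbs a possible lower-order $\sqrt{\log\sbra{\abs{\bT_{\text{grid}}}/\delta}/M}$ contribution. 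After this conditioning, $\m{w}(f_d)$ has a small norm that no longer depends on $\m{\Phi}$.

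The new ingredient --- the source of the $\tfrac1L$ gain over Theorem \ref{thm:noiseless1} --- is a concentration bound for $\twon{\m{w}^H\m{\Phi}}$ with $\m{w}\in\bC^K$ fixed and $\m{\Phi}$ having i.i.d.\ rows uniform on $\bS^{2L-1}$. Here $\bE\twon{\m{w}^H\m{\Phi}}^2=\twon{\m{w}}^2$ (the cross terms vanish in expectation and $\twon{\m{\phi}_k}=1$), so the mean is $L$-free while the fluctuations shrink with $L$; the statement I would use is
\equ{\bP\sbra{\twon{\m{w}^H\m{\Phi}}\geq C\twon{\m{w}}\sqrt{1+\tfrac1L\log\tfrac{\abs{\bT_{\text{grid}}}}{\delta}}}\leq\tfrac{\delta}{\abs{\bT_{\text{grid}}}},\notag}
which is the multichannel concentration phenomenon isolated in the average-case analyses of \cite{gribonval2008atoms,eldar2010average}. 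I would prove the version needed here by writing $\twon{\m{w}^H\m{\Phi}}^2=\twon{\m{w}}^2+\m{w}^H\sbra{\m{\Phi}\m{\Phi}^H-\m{I}}\m{w}$ and applying a Bernstein/Hanson--Wright estimate to the Hermitian quadratic form, whose off-diagonal entries $\inp{\m{\phi}_k,\m{\phi}_{k'}}$ have magnitude at most $1$ and variance $\tfrac1L$ --- the $\tfrac1L$ being precisely what produces the gain; an alternative is a $\tfrac12$-net of $\bS^{2L-1}$ of cardinality $5^{2L}$ with a scalar Bernstein inequality per net point, the $2L\log5$ cost of the net being absorbed into the constant ``$1$'' after division by $L$. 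A union bound over $\bT_{\text{grid}}$ and over $l$, combined with the conditional bounds on $\twon{\m{w}_1}$ and $\twon{\m{w}_2}$ from the previous paragraph, then gives $\sup_{f_d}\twon{I_1^l(f_d)}\lesssim\sqrt{K/M}\,\sqrt{1+\tfrac1L\log\tfrac{\abs{\bT_{\text{grid}}}}{\delta}}$ and $\sup_{f_d}\twon{I_2^l(f_d)}\lesssim\tau\sqrt{1+\tfrac1L\log\tfrac{\abs{\bT_{\text{grid}}}}{\delta}}$.

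Choosing the free constant $\tau\asymp\epsilon\sbra{1+\tfrac1L\log\tfrac{\abs{\bT_{\text{grid}}}}{\delta}}^{-1/2}$ (and $\leq\tfrac14$, as $\epsilon$ is small) makes $\sup_{f_d}\twon{I_2^l(f_d)}\leq\epsilon/2$; by Lemma \ref{lem:invertibility}, $\cE_{1,\tau}$ then has probability at least $1-\delta$ once $M\gtrsim\tau^{-2}K\log\tfrac{K}{\delta}\asymp\tfrac1{\epsilon^2}K\log\tfrac{K}{\delta}\sbra{1+\tfrac1L\log\tfrac{\abs{\bT_{\text{grid}}}}{\delta}}$, which is the second term of \eqref{eq:Mboundgrid}. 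Making $\sup_{f_d}\twon{I_1^l(f_d)}\leq\epsilon/2$ requires in addition $M\gtrsim\tfrac{K}{\epsilon^2}\sbra{1+\tfrac1L\log\tfrac{\abs{\bT_{\text{grid}}}}{\delta}}$, which is dominated by the previous bound, and --- tracing $a$ through the two cases of Lemma \ref{lem:I1bound} --- $M\gtrsim\tfrac1{\epsilon^2}\log^2\tfrac{\abs{\bT_{\text{grid}}}}{\delta}$, the first term. Enlarging $C$ to absorb all implicit constants and union-bounding the failure events ($\cE_{1,\tau}^c$, the Lemma \ref{lem:I1bound} event, and the $\m{\Phi}$-concentration events over the grid and over $l$), after rescaling $\delta$ by an absolute constant, yields $\bP\sbra{\cE_2}\geq1-\delta$.

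The step I expect to be delicate is the concentration bound for $\twon{\m{w}^H\m{\Phi}}$ with the $L$-dependence kept sharp. The net of $\bS^{2L-1}$ (or the operator-norm factor in a Hanson--Wright bound) must be shown harmless, which it is because each direction carries a sub-Gaussian tail $\exp(-cLt^2)$ so that $5^{2L}\exp(-cLt^2)$ is small once $t$ exceeds an absolute constant; and the heavy-tailed Bernstein regime, which becomes active when $L$ is small relative to $K$ --- there $\inftyn{\m{w}}/\twon{\m{w}}$, not $L^{-1/2}$, controls the tail --- must be shown to feed only into the $\log^2\tfrac{\abs{\bT_{\text{grid}}}}{\delta}$ term of \eqref{eq:Mboundgrid} rather than into the leading order; it helps that the relevant $\m{w}=\m{w}_1(f)$ is spread over all $K$ coordinates, since $\overline\cK-p\cK$ has the same variance at every frequency. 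The remaining effort is routine bookkeeping: tracking which case of Lemma \ref{lem:I1bound} is active and propagating the choice of $a$ consistently.
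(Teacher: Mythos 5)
Your proposal is correct and follows essentially the same route as the paper: the paper proves this proposition by combining Lemmas \ref{lem:I1} and \ref{lem:I2}, whose proofs condition on the Bernoulli draw, bound $\twon{\overline{\m{L}}^H\sbra{\overline{\m{v}}_l-p\m{v}_l}}$ and $\twon{\sbra{\overline{\m{L}}-p^{-1}\m{L}}^Hp\m{v}_l}$ via Lemmas \ref{lem:I1bound} and \ref{lem:I2bound}, apply the spherical concentration of Lemma \ref{lem:bernstein} with Lemma \ref{lem:ymlogy} and a union bound over $\bT_{\text{grid}}$, and choose $\tau\asymp\epsilon\sbra{1+\frac{1}{L}\log\frac{\abs{\bT_{\text{grid}}}}{\delta}}^{-1/2}$ exactly as you do. The only difference is presentational: the concentration inequality you propose to rederive via Hanson--Wright or an $\epsilon$-net is precisely Lemma \ref{lem:bernstein} combined with Lemma \ref{lem:ymlogy}, which the paper simply invokes, so your extra derivation (and the worry about a heavy-tailed regime) is unnecessary but harmless.
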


To prove Proposition \ref{prop:boundgrid}, we need to show that both $I_1^l\sbra{f}$ and $I_2^l\sbra{f}$ are small on $\bT_{\text{grid}}$. To make the lower bound on $M$ decrease as $L$ increases, inspired by \cite{eldar2010average}, we use the following lemma that generalizes the Bernstein inequality for Steinhaus sequences in \cite[Proposition 16]{tropp2008conditioning} to higher dimensions.

\begin{lem}[\cite{eldar2010average}] Let $\m{0}\neq \m{w}\in\bC^K$ and $\lbra{\m{\phi}_k}_{k=1}^K$ be a series of independent random vectors that are uniformly distributed on the complex sphere $\bS^{2L-1}$. Then, for all $t> \twon{\m{w}}$,
\equ{\bP\sbra{\twon{\sum_{k=1}^K w_k\m{\phi}_k}\geq t } \leq e^{-L\sbra{\frac{t^2}{\twon{\m{w}}^2}-\log\frac{t^2}{\twon{\m{w}}^2} -1}}. \label{eq:bernprob}} \label{lem:bernstein}
\end{lem}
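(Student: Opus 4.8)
The plan is to recast the bound as a moment-generating-function (MGF) comparison against a (normalized) chi-squared law, whose right tail carries exactly the rate function $s\mapsto s-\log s-1$ appearing on the right-hand side of \eqref{eq:bernprob}, and then to invoke a Chernoff bound.

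First I would make three harmless normalizations. By homogeneity we may assume $\twon{\m{w}}=1$ and set $s\coloneqq t^2>1$. Since scalar multiplication by a fixed unit-modulus complex number is a unitary map and hence preserves the uniform law on $\bS^{2L-1}$, we may write $w_k=\abs{w_k}u_k$ with $\abs{u_k}=1$ and replace each $\m{\phi}_k$ by $u_k\m{\phi}_k$; thus we may assume $w_k\ge0$. Finally, the uniform distribution on $\bS^{2L-1}$ is just the uniform distribution on the real unit sphere $S^{d-1}\subset\bR^{d}$ with $d\coloneqq2L$, so---writing $\m{S}_j\coloneqq\sum_{k=1}^{j}w_k\m{\phi}_k$ and $\m{S}\coloneqq\m{S}_K$---the goal becomes to bound $\bP\sbra{\twon{\m{S}}^2\ge s}$.

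The core of the proof is the MGF estimate
\[
\bE\mbra{\exp\sbra{\lambda\twon{\m{S}}^2}}\le\sbra{1-\tfrac{\lambda}{L}}^{-L},\qquad 0\le\lambda<L,
\]
which I would establish by induction on $K$, in the slightly stronger form allowing arbitrary nonnegative weights with $\sum_k w_k^2=\sigma^2$: namely $\bE\mbra{\exp\sbra{\lambda\twon{\m{S}}^2}}\le\sbra{1-\lambda\sigma^2/L}^{-L}$ for $0\le\lambda<L/\sigma^2$. The base case $K\le1$ follows from $1+x\le e^x$. For the inductive step, condition on $\m{\phi}_1,\dots,\m{\phi}_{K-1}$ and expand $\twon{\m{S}}^2=\twon{\m{S}_{K-1}}^2+w_K^2+2w_K\inp{\m{S}_{K-1},\m{\phi}_K}$ (Euclidean inner product on $\bR^d$); by rotational invariance, conditionally on $\m{S}_{K-1}$ the quantity $\inp{\m{S}_{K-1},\m{\phi}_K}$ has the law of $\twon{\m{S}_{K-1}}\xi$, where $\xi$ is the first coordinate of a uniform point on $S^{d-1}$. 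The one fact I need on $\xi$ is the sharp sub-Gaussian estimate $\bE\mbra{e^{c\xi}}\le e^{c^2/(2d)}=e^{c^2/(4L)}$, proved by comparing the even-moment series $\bE\mbra{e^{c\xi}}=\sum_{m\ge0}\frac{(c^2/4)^m}{m!\,(d/2)_m}$ term by term with the series of $e^{c^2/(4L)}$, using that the rising factorial $(d/2)_m=\tfrac d2\cdot\bigl(\tfrac d2+1\bigr)\cdots\bigl(\tfrac d2+m-1\bigr)$ is at least $(d/2)^m$. Carrying out the $\m{\phi}_K$-expectation then gives $e^{\lambda w_K^2}\exp\sbra{\lambda\bigl(1+\lambda w_K^2/L\bigr)\twon{\m{S}_{K-1}}^2}$; applying the induction hypothesis to $\m{S}_{K-1}$ with the inflated parameter $\lambda'\coloneqq\lambda\bigl(1+\lambda w_K^2/L\bigr)$ and square-weight $\sigma^2-w_K^2$, the target bound collapses to the elementary inequality $e^{\beta a}(1-\beta)\le 1-\beta(1-a)(1+\beta a)$ for $\beta\in[0,1)$ and $a\in[0,1]$ (here $\beta=\lambda/L$, $a=w_K^2$), which one verifies by multiplying through by $1-\beta a>0$, whereupon the right side becomes $1-\beta+\beta^3a^2(1-a)$; the same computation shows $\lambda'$ stays in the admissible range.

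Finally, a Chernoff bound finishes the proof: for $s>1$ take $\lambda=L(1-1/s)\in(0,L)$, so that
\[
\bP\sbra{\twon{\m{S}}^2\ge s}\le e^{-\lambda s}\,\bE\mbra{e^{\lambda\twon{\m{S}}^2}}\le e^{-\lambda s}\sbra{1-\tfrac{\lambda}{L}}^{-L}=e^{-L(s-1)}s^{L}=e^{-L(s-\log s-1)},
\]
and reinstating the normalization through $s=t^2/\twon{\m{w}}^2$ reproduces \eqref{eq:bernprob} verbatim. I expect the induction in the MGF step to be the main obstacle: one must choose the strengthened hypothesis so that both $\sigma^2$ and the MGF parameter are free to move, pin the constant $1/(2d)$ in the spherical sub-Gaussian bound precisely so that the estimate comes out as exactly $(1-\lambda/L)^{-L}$ and not merely up to constants, and dispose of the closing elementary inequality; the normalizations and the Chernoff argument are routine.
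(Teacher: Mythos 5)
Your proposal is correct. Note that this paper does not actually prove Lemma \ref{lem:bernstein}: it is imported from \cite{eldar2010average}, which generalizes the Steinhaus-sequence bound of \cite[Proposition 16]{tropp2008conditioning}, and your argument is essentially a self-contained reconstruction of that cited proof --- an MGF domination $\bE\,\exp\sbra{\lambda\twon{\m{S}}^2}\leq\sbra{1-\lambda\twon{\m{w}}^2/L}^{-L}$ obtained by conditioning on one vector at a time and using the spherical marginal bound $\bE\,e^{c\xi}\leq e^{c^2/(2d)}=e^{c^2/(4L)}$, followed by a Chernoff bound optimized at $\lambda=L\sbra{1-\twon{\m{w}}^2/t^2}$, which yields exactly the rate $s-\log s-1$ with $s=t^2/\twon{\m{w}}^2$. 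The individual steps check out: the phase absorption and real-sphere identification are legitimate since the uniform law on $\bS^{2L-1}$ is invariant under the full orthogonal group of $\bR^{2L}$; the moment identity gives $\bE\,e^{c\xi}=\sum_{m\geq0}\frac{(c^2/4)^m}{m!\,(d/2)_m}$ and $(d/2)_m\geq(d/2)^m$ gives the term-by-term comparison; and your closing inequality is verified as claimed, since multiplying its right side by $1-\beta a$ gives $1-\beta+\beta^3a^2(1-a)$ while the left side is handled by $e^x(1-x)\leq1$, and the positivity $1-\beta(1-a)(1+\beta a)\geq e^{\beta a}(1-\beta)>0$ also certifies that $\lambda'=\lambda\sbra{1+\lambda w_K^2/L}$ stays admissible for the induction hypothesis. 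One small imprecision: inside the induction, where the running square-weight is $\sigma^2$ rather than $1$, the substitution should be $\beta=\lambda\sigma^2/L$ and $a=w_K^2/\sigma^2$ (your stated $\beta=\lambda/L$, $a=w_K^2$ is only the normalized top-level case); with that identification the reduction to your two-parameter inequality goes through verbatim, so this is cosmetic rather than a gap.
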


The following result will be used to move the dependence on $L$ from the upper bound on the probability in \eqref{eq:bernprob} to the sample size $M$.
\begin{lem} Let $y(x)\geq 1$ be the solution to the equation $y-\log y -1 = x$ for $x\geq 0$. Then, $y(x)$ is monotonically increasing in $x$ and $1+x \leq y(x) \leq 2(1+x)$. \label{lem:ymlogy}
\end{lem}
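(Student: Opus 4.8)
The plan is to analyze the function $h(y) = y - \log y - 1$ on $[1,\infty)$ and invert it. First I would verify that $h$ is a legitimate bijection from $[1,\infty)$ onto $[0,\infty)$: compute $h(1) = 0$, note $h'(y) = 1 - \tfrac{1}{y} \geq 0$ on $[1,\infty)$ with equality only at $y = 1$, so $h$ is strictly increasing there, and $h(y)\to\infty$ as $y\to\infty$. Hence for every $x \geq 0$ there is a unique $y(x) \geq 1$ with $h(y(x)) = x$, and $y(x)$ inherits monotonicity from $h$ (the inverse of an increasing function is increasing). This disposes of the first assertion.

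For the two-sided bound, I would check the inequality $1 + x \leq y(x) \leq 2(1+x)$ by comparing it against the known value of $h$ at the endpoints $y = 1+x$ and $y = 2(1+x)$, using monotonicity of $h$. For the lower bound $y(x) \geq 1+x$: since $h$ is increasing, it suffices to show $h(1+x) \leq x$, i.e. $(1+x) - \log(1+x) - 1 \leq x$, which simplifies to $-\log(1+x) \leq 0$, true for $x \geq 0$. For the upper bound $y(x) \leq 2(1+x)$: it suffices to show $h(2(1+x)) \geq x$, i.e. $2(1+x) - \log(2(1+x)) - 1 \geq x$, which rearranges to $x + 1 \geq \log(2(1+x))$, i.e. $x + 1 \geq \log 2 + \log(1+x)$. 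Setting $u = 1+x \geq 1$, this is $u \geq \log 2 + \log u$, i.e. $u - \log u \geq \log 2$; the left side is increasing in $u$ for $u \geq 1$ and equals $1 > \log 2$ at $u = 1$, so the inequality holds for all $u \geq 1$. Combining the two endpoint comparisons with the monotonicity of $h^{-1}$ yields the claimed bounds.

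I do not expect any genuine obstacle here; the lemma is an elementary calculus fact. The only mild care needed is bookkeeping: making sure one works on the correct branch $y \geq 1$ (the equation $y - \log y - 1 = x$ also has a root in $(0,1]$ for $x > 0$, which is not the one intended), and phrasing the endpoint comparisons as "evaluate $h$, use monotonicity" rather than trying to invert $h$ in closed form. The whole argument is two short monotonicity checks plus two one-line inequalities $\log(1+x) \geq 0$ and $u - \log u \geq \log 2$ for $u \geq 1$.
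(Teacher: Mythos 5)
Your proposal is correct and takes essentially the same route as the paper: both parts establish monotonicity by observing that $y\mapsto y-\log y-1$ is increasing on $[1,+\infty)$ and inverting it. The only cosmetic difference is in the two-sided bound, where the paper works directly at the solution point via the identity $y=x+\log y+1$ (using $\log y\geq 0$ for the lower bound and $\log y\leq \frac{1}{2}y$ for $y\geq 1$ for the upper bound), while you evaluate $y-\log y-1$ at the candidate endpoints $1+x$ and $2(1+x)$ and invoke monotonicity; both are equally valid one-line checks.
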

\begin{proof} See Appendix \ref{append:ymlogy}.
\end{proof}

Applying Lemmas \ref{lem:bernstein} and \ref{lem:ymlogy}, we show in the following two lemmas that both $I_1^l\sbra{f}$ and $I_2^l\sbra{f}$ can be arbitrarily small on $\bT_{\text{grid}}$ with $M$ being a decreasing function of $L$.
\begin{lem} Under the assumptions of Theorem \ref{thm:noiseless}, there exists a numerical constant $C$ such that if
\equ{\begin{split}M
\geq C\max&\left\{\frac{1}{\epsilon^2} K\sbra{1+\frac{1}{L}\log\frac{\abs{\bT_{\text{grid}}}}{\delta}}, \right. \\
&\quad \left.\frac{1}{\epsilon^2} \log^2\frac{\abs{\bT_{\text{grid}}}}{\delta}, \; K\log\frac{K}{\delta}\right\}, \end{split} \notag}
then we have
\equ{\bP\lbra{\sup_{f_d\in\bT_{\text{grid}}} \twon{I_1^l\sbra{f_d}}\leq\epsilon,\; l=0,1,2 } \geq 1-9\delta. \notag} \label{lem:I1}
\end{lem}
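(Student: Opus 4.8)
The plan is to bound $\twon{I_1^l(f_d)}$ uniformly over the finite grid $\bT_{\text{grid}}$ by combining the structural estimate of Lemma \ref{lem:I1bound} (which controls $\twon{\overline{\m{L}}^H(\overline{\m{v}}_l(f_d)-p\m{v}_l(f_d))}$ as a deterministic-in-$\m{\Phi}$ quantity) with the concentration of the bilinear form in the random phases $\{\m{\phi}_k\}$ afforded by Lemma \ref{lem:bernstein}. Recall that $I_1^l(f) = \inp{\m{\Phi}, \overline{\m{L}}^H(\overline{\m{v}}_l(f)-p\m{v}_l(f))}$, which is of the form $\sum_k w_k \m{\phi}_k$ with $\m{w} = \m{w}(f) = \overline{\m{L}}^H(\overline{\m{v}}_l(f)-p\m{v}_l(f))$. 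The key point is that, conditionally on $\Omega$ (equivalently, on the Bernoulli variables $\{\delta_j\}$ that determine $\overline{\m{L}}$ and $\overline{\m{v}}_l$), the vector $\m{w}(f)$ is fixed and the randomness in $\m{\Phi}$ is independent of it, so Lemma \ref{lem:bernstein} applies to give a tail bound in terms of $\twon{\m{w}(f)}$.

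The steps I would carry out, in order: (i) Work on the event $\cE_{1,\tau}$ (with $\tau$ a small fixed constant, say $\tau\le 1/4$), which by Lemma \ref{lem:invertibility} has probability $\ge 1-\delta$ provided $M\ge CK\log(K/\delta)$; this absorbs one of the three terms in the max and guarantees $\overline{\m{D}}$ is invertible so everything is well-defined. (ii) Invoke Lemma \ref{lem:I1bound} with a suitable choice of the parameter $a$ (of order $\sqrt{\log(\abs{\bT_{\text{grid}}}/\delta)}$, subject to the stated cap on $a$) to conclude that, with probability $\ge 1-O(\delta)$, $\sup_{f_d}\twon{\m{w}(f_d)} \le \eta$ where $\eta = 4(2^{2l+3}\sqrt{K/M}+\frac{n}{M}a\overline\sigma_l)$; one then checks by substituting $\overline\sigma_l^2 = 2^{4l+1}\frac{M}{n^2}\max\{1,2^4K/\sqrt M\}$ and the given bound on $M$ that $\eta$ is at most a small constant times $\epsilon/\sqrt{L}\cdot\max\{\dots\}$ — more precisely, $\eta^2 \lesssim \frac{1}{L}\big(\epsilon^2 + \text{correction}\big)$, so that $L\eta^2 \le$ something like $\frac{\epsilon^2}{C}$ on the prescribed $M$-range. (iii) Condition on $\cE_{1,\tau}$ and on the high-probability event from (ii); then for each fixed $f_d$, apply Lemma \ref{lem:bernstein} with $t = \epsilon$ and $\m{w} = \m{w}(f_d)$: the exponent becomes $-L(\frac{\epsilon^2}{\eta^2} - \log\frac{\epsilon^2}{\eta^2} - 1)$, and using Lemma \ref{lem:ymlogy} (reading off that if $\epsilon^2/\eta^2 = y(x)$ then $x \le \epsilon^2/\eta^2 \le 2(1+x)$, i.e. the quantity in parentheses is $\ge \frac{1}{2}\frac{\epsilon^2}{\eta^2}-1 \ge \frac{\epsilon^2}{4\eta^2}$ once $\epsilon^2/\eta^2$ is large enough), we get a tail $\le e^{-cL\epsilon^2/\eta^2} \le e^{-cM/K \cdot (\text{log factor})}$ after inserting the $M$-bound; a union bound over $\abs{\bT_{\text{grid}}}$ points and over $l\in\{0,1,2\}$ then yields the claimed $\le 9\delta$ (a handful of $\delta$'s from $\cE_{1,\tau}$, from Lemma \ref{lem:I1bound}, and from the three values of $l$). (iv) Finally, clean up the relation between $t > \twon{\m{w}}$ required by Lemma \ref{lem:bernstein} and the regime $\epsilon > \eta$, which holds on the events already fixed.

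The main obstacle — and the place requiring the most care — is step (ii) together with the bookkeeping in step (iii): one must verify that the prescribed lower bound on $M$ makes $L\eta^2$ small enough (comparable to $\epsilon^2$ up to the constant absorbed into $C$) \emph{simultaneously} in the two regimes $2^4K/\sqrt M \gtrless 1$, since $\overline\sigma_l$ and the admissible range of $a$ both switch form there. In the regime $K \le \sqrt M/2^4$ the $\sqrt{K/M}$ term dominates and one needs $M \gtrsim \frac{K}{\epsilon^2}(1+\frac{1}{L}\log\frac{\abs{\bT_{\text{grid}}}}{\delta})$; in the regime $K \ge \sqrt M/2^4$ the $\frac{n}{M}a\overline\sigma_l$ term must be controlled, which is what forces the $\frac{1}{\epsilon^2}\log^2\frac{\abs{\bT_{\text{grid}}}}{\delta}$ term and the cap $a \le \sqrt 2 M^{1/4}$ into the bound. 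Tracking the factor $\frac{1}{L}$ through the exponent in Lemma \ref{lem:bernstein} — which is the whole source of the improvement over the single-channel bound — is exactly the point where Lemma \ref{lem:ymlogy} is needed, so the delicacy is ensuring the argument of $y(\cdot)$ stays in the range where the linear lower bound $1+x \le y(x)$ is being applied in the correct direction. The analogous bound for $I_2^l$ is easier, since by Lemma \ref{lem:I2bound} the relevant $\m{w}$ has $\twon{\m{w}} \le C\tau$ deterministically on $\cE_{1,\tau}$, so only a direct application of Lemma \ref{lem:bernstein} with a sufficiently small $\tau$ is needed; that is handled in the next lemma.
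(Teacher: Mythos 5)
Your outline follows essentially the same route as the paper's proof: condition on the event of Lemma \ref{lem:I1bound} that bounds $\sup_{f_d}\twon{\overline{\m{L}}^H\sbra{\overline{\m{v}}_l\sbra{f_d}-p\m{v}_l\sbra{f_d}}}$ by $\eta=\lambda_l= 4\sbra{2^{2l+3}\sqrt{K/M}+\frac{n}{M}a\overline{\sigma}_l}$ with $a^2\sim\log\frac{\abs{\bT_{\text{grid}}}}{\delta}$, apply Lemma \ref{lem:bernstein} conditionally at each grid point, use Lemma \ref{lem:ymlogy} to turn the exponent condition into a clean ratio bound, run the two-regime case analysis on $2^4K/\sqrt{M}$ with the cap $a\leq\sqrt{2}M^{1/4}$, bring in $\cE_{1,\tau}$ via Lemma \ref{lem:invertibility}, and union bound over the grid and $l=0,1,2$ to reach $1-9\delta$; this is exactly the paper's argument. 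One calibration in your step (ii) should be corrected, though: the target is \emph{not} $L\eta^2\lesssim\epsilon^2$, which the prescribed $M$ does not deliver (for large $L$ and moderate $\log\frac{\abs{\bT_{\text{grid}}}}{\delta}$ the bound only forces $\eta^2\lesssim\epsilon^2$, so $L\eta^2\gg\epsilon^2$; demanding $L\eta^2\lesssim\epsilon^2$ would require $M\gtrsim LK/\epsilon^2$, destroying the $1/L$ gain). What is actually needed—and what your step (iii) implicitly uses and the paper verifies—is $\frac{\epsilon^2}{\eta^2}\geq C\sbra{1+\frac{1}{L}\log\frac{\abs{\bT_{\text{grid}}}}{\delta}}$, so that $L\sbra{\frac{\epsilon^2}{\eta^2}-\log\frac{\epsilon^2}{\eta^2}-1}\geq\log\frac{\abs{\bT_{\text{grid}}}}{\delta}$ after Lemma \ref{lem:ymlogy}; under the stated $M$ this follows from $\frac{1}{\lambda_l^2}\geq\min\lbra{\frac{M}{4^{2l+5}K},\;\frac{M}{2^{4l+7}a^2}}$ together with $\log\frac{\abs{\bT_{\text{grid}}}}{\delta}\sbra{1+\frac{1}{L}\log\frac{\abs{\bT_{\text{grid}}}}{\delta}}\leq 2\log^2\frac{\abs{\bT_{\text{grid}}}}{\delta}$, which is precisely how the $\frac{1}{\epsilon^2}\log^2$ term enters the max. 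With that fix your proposal is the paper's proof.
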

\begin{proof} See Appendix \ref{append:I1}.
\end{proof}

\begin{lem} Under the assumptions of Theorem \ref{thm:noiseless}, there exists a numerical constant $C$ such that if
\equ{M\geq C\frac{1}{\epsilon^2} K\log\frac{K}{\delta} \sbra{1+\frac{1}{L}\log\frac{\abs{\bT_{\text{grid}}}}{\delta}}, \notag}
then we have
\equ{\bP\lbra{\sup_{f_d\in\bT_{\text{grid}}} \twon{I_2^l\sbra{f_d}}<\epsilon,\; l=0,1,2 } \geq 1-6\delta.\notag} \label{lem:I2}
\end{lem}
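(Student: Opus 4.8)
The plan is to exploit that, in $I_2^l(f)=\inp{\m{\Phi}, \sbra{\overline{\m{L}}-p^{-1}\m{L}}^Hp\m{v}_l(f)}$, the coefficient vector $\m{w}_l(f)\coloneqq \sbra{\overline{\m{L}}-p^{-1}\m{L}}^Hp\m{v}_l(f)\in\bC^K$ is a function of the Bernoulli sampling $\lbra{\delta_j}$ alone and is independent of the phases $\lbra{\m{\phi}_k}$; writing its $k$th entry as $w_{l,k}(f)$, we have $I_2^l(f)=\sum_{k=1}^K \overline{w_{l,k}(f)}\,\m{\phi}_k$, which, conditionally on $\lbra{\delta_j}$, is exactly the Steinhaus-type vector sum controlled by Lemma \ref{lem:bernstein}. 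First I would fix a parameter $\tau\in\left(0,\frac14\right]$ to be chosen later and condition on the event $\cE_{1,\tau}$; on this event Lemma \ref{lem:I2bound} gives the uniform bound $\twon{\m{w}_l(f)}\leq C\tau$ for every $l\in\lbra{0,1,2}$ and every $f$, and the degenerate case $\m{w}_l(f)=\m{0}$ is trivial since then $I_2^l(f)=\m{0}$.

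Next, for a fixed grid point $f_d\in\bT_{\text{grid}}$ and fixed $l$, I would apply Lemma \ref{lem:bernstein} conditionally on $\lbra{\delta_j}\in\cE_{1,\tau}$ with $t=\epsilon$: provided $\tau$ is small enough that $C\tau<\epsilon$, this yields $\bP\sbra{\twon{I_2^l(f_d)}\geq\epsilon\mid\lbra{\delta_j}}\leq e^{-L\sbra{y-\log y-1}}$ with $y=\epsilon^2/\twon{\m{w}_l(f_d)}^2\geq\sbra{\epsilon/(C\tau)}^2$. Taking a union bound over the $3\abs{\bT_{\text{grid}}}$ pairs $(f_d,l)$, the event $\sup_{f_d,l}\twon{I_2^l(f_d)}<\epsilon$ fails, conditionally, with probability at most $3\abs{\bT_{\text{grid}}}e^{-L\sbra{y-\log y-1}}$, which is at most $\delta$ as soon as $L\sbra{y-\log y-1}\geq\log\frac{3\abs{\bT_{\text{grid}}}}{\delta}$, i.e.\ as soon as $y-\log y-1\geq\frac{1}{L}\log\frac{3\abs{\bT_{\text{grid}}}}{\delta}$.

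The crux is to convert this last requirement into a condition on $M$. Setting $x=\frac{1}{L}\log\frac{3\abs{\bT_{\text{grid}}}}{\delta}$, Lemma \ref{lem:ymlogy}, together with the monotonicity of $y\mapsto y-\log y-1$ on $[1,\infty)$, shows that $y-\log y-1\geq x$ holds whenever $y\geq 2(1+x)$; hence it suffices to pick $\tau$ with $\sbra{\epsilon/(C\tau)}^2\geq 2\sbra{1+\frac{1}{L}\log\frac{3\abs{\bT_{\text{grid}}}}{\delta}}$, namely $\tau=\min\lbra{\frac14,\; \frac{c\,\epsilon}{\sqrt{1+\frac{1}{L}\log\frac{\abs{\bT_{\text{grid}}}}{\delta}}}}$ for a suitably small numerical constant $c$. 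For this $\tau$, Lemma \ref{lem:invertibility} ensures $\bP\sbra{\cE_{1,\tau}^c}\leq\delta$ once $M\geq\frac{50}{\tau^2}K\log\frac{2K}{\delta}$; substituting the chosen $\tau$ and absorbing numerical factors turns this into exactly $M\geq\frac{C'}{\epsilon^2}K\log\frac{K}{\delta}\sbra{1+\frac{1}{L}\log\frac{\abs{\bT_{\text{grid}}}}{\delta}}$. Combining the conditional failure bound $\delta$ with $\bP\sbra{\cE_{1,\tau}^c}\leq\delta$, and leaving slack for the degenerate sub-cases and for passing between the strict and non-strict inequalities, gives the claimed probability $1-6\delta$. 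I expect the only genuinely delicate point to be this conversion: handling the conditioning in the right order (freeze $\lbra{\delta_j}$, then randomize over $\lbra{\m{\phi}_k}$), and, above all, inverting $y\mapsto y-\log y-1$ via Lemma \ref{lem:ymlogy} so that the per-sample concentration exponent $L\sbra{y-\log y-1}$ is translated into the $1/L$ reduction in the required sample size, which is precisely where the benefit of multiple channels is realized.
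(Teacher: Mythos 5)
Your proposal is correct and follows essentially the same route as the paper's proof: condition on $\cE_{1,\tau}$ and use Lemma \ref{lem:I2bound} to bound the weight vector by $C\tau$, apply the spherical Bernstein inequality (Lemma \ref{lem:bernstein}) with a union bound over the grid, invert $y-\log y-1$ via Lemma \ref{lem:ymlogy} to choose $\tau^{-2}\propto \frac{1}{\epsilon^2}\sbra{1+\frac{1}{L}\log\frac{\abs{\bT_{\text{grid}}}}{\delta}}$, and then feed this $\tau$ into Lemma \ref{lem:invertibility} to obtain the stated bound on $M$. The only differences are bookkeeping (explicit conditioning order, the $\min$ with $\frac14$, and the factor-of-three union over $l$), which the paper leaves implicit.
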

\begin{proof} See Appendix \ref{append:I2}.
\end{proof}

Proposition \ref{prop:boundgrid} is a direct consequence of combining Lemmas \ref{lem:I1} and \ref{lem:I2}. We next extend Proposition \ref{prop:boundgrid} from the set of finite grid points $\bT_{\text{grid}}$ to the whole unit circle $\bT$. To this end, for any $f\in\bT$, we make the following decomposition:
\equ{\begin{split}
&\overline Q^{\sbra{l}}\sbra{f} - Q^{\sbra{l}}\sbra{f}\\
&= \sbra{\overline Q^{\sbra{l}}\sbra{f} - \overline Q^{\sbra{l}}\sbra{f_d}} + \sbra{\overline Q^{\sbra{l}}\sbra{f_d} - Q^{\sbra{l}}\sbra{f_d}}\\
&\quad + \sbra{Q^{\sbra{l}}\sbra{f_d} - Q^{\sbra{l}}\sbra{f}}. \label{eq:decomp}\end{split}}
Because \eqref{eq:decomp} holds for any $f_d\in\bT_{\text{grid}}$, the inequalities in \eqref{eq:decomp2} then follow (see the next page).

\newcounter{mytempeqncnt}
\begin{figure*}[!t]
\normalsize
\setcounter{mytempeqncnt}{\value{equation}}

\begin{equation}
\setcounter{equation}{38}
\begin{split}
&\sup_{f\in\bT}c_0^{-l}\twon{\overline Q^{\sbra{l}}\sbra{f} - Q^{\sbra{l}}\sbra{f}}\\
&\leq \sup_{f\in\bT}\inf_{f_d\in\bT_{\text{grid}}}\left\{c_0^{-l}\twon{\sbra{\overline Q^{\sbra{l}}\sbra{f} - \overline Q^{\sbra{l}}\sbra{f_d}} + \sbra{Q^{\sbra{l}}\sbra{f_d} - Q^{\sbra{l}}\sbra{f}}} + c_0^{-l}\twon{\overline Q^{\sbra{l}}\sbra{f_d} - Q^{\sbra{l}}\sbra{f_d}}\right\}\\
&\leq \sup_{f\in\bT}\inf_{f_d\in\bT_{\text{grid}}}c_0^{-l}\twon{\sbra{\overline Q^{\sbra{l}}\sbra{f} - \overline Q^{\sbra{l}}\sbra{f_d}} + \sbra{Q^{\sbra{l}}\sbra{f_d} - Q^{\sbra{l}}\sbra{f}}} + \sup_{f_d\in\bT_{\text{grid}}}c_0^{-l}\twon{\overline Q^{\sbra{l}}\sbra{f_d} - Q^{\sbra{l}}\sbra{f_d}}.
\end{split} \label{eq:decomp2}
\end{equation}
\setcounter{equation}{\value{mytempeqncnt}+1}
\hrulefill
\vspace*{4pt}
\end{figure*}

The second term in \eqref{eq:decomp2} can be arbitrarily small according to Proposition \ref{prop:boundgrid}. We next show that the first term can also be arbitrarily small. Recall that $c_0^{-l}\overline Q^{\sbra{l}}\sbra{f}
= \inp{\m{\Phi}, \overline{\m{L}}^H \overline{\m{v}}_l\sbra{f}}$, $c_0^{-l} Q^{\sbra{l}}\sbra{f}
= \inp{\m{\Phi}, \m{L}^H \m{v}_l\sbra{f}}$, and thus
\equ{\begin{split}
&c_0^{-l}\sbra{\overline{Q}^{\sbra{l}}\sbra{f_d} - \overline{Q}^{\sbra{l}}\sbra{f}} + c_0^{-l}\sbra{Q^{\sbra{l}}\sbra{f_d} - Q^{\sbra{l}}\sbra{f}}\\
&= \inp{\m{\Phi}, \overline{\m{L}}^H \sbra{\overline{\m{v}}_l\sbra{f_d} - \overline{\m{v}}_l\sbra{f}} + \m{L}^H \sbra{\m{v}_l\sbra{f_d} - \m{v}_l\sbra{f}}}. \label{eq:Qdc2} \end{split}}
The magnitudes of $\m{L}^H\sbra{\m{v}_l\sbra{f} - \m{v}_l\sbra{f_d}}$ and $\overline{\m{L}}^H\sbra{\overline{\m{v}}_l\sbra{f} - \overline{\m{v}}_l\sbra{f_d}}$ in \eqref{eq:Qdc2} are controlled in the following lemma.
\begin{lem} Assume $\Delta_{\cT}\geq \frac{1}{n}$. On the event $\cE_{1,\tau}$, we have
{\lentwo\equa{\twon{\m{L}^H\sbra{\m{v}_l\sbra{f} - \m{v}_l\sbra{f_d}}}
&\leq& Cn^2\abs{f-f_d}, \\ \twon{\overline{\m{L}}^H\sbra{\overline{\m{v}}_l\sbra{f} - \overline{\m{v}}_l\sbra{f_d}}}
&\leq& Cn^3\abs{f-f_d} \label{eq:Lvrand}}
}for $f,f_d\in\bT$ and some constant $C>0$. \label{lem:extendcontin}
\end{lem}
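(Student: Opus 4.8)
The plan is to bound each difference $\twon{\m{v}_l(f)-\m{v}_l(f_d)}$ (and its random analog) by a Lipschitz-type estimate, then multiply by the operator norm of $\m{L}^H$ (resp.\ $\overline{\m{L}}^H$), which is controlled on $\cE_{1,\tau}$ by Lemma \ref{lem:invertibility}. First I would recall that $\m{v}_l(f)$ stacks the conjugated values $c_0^{-l}\cK^{(l)}(f-f_k)$ and $c_0^{-(l+1)}\cK^{(l+1)}(f-f_k)$ over $f_k\in\cT$. Hence the $m$-th coordinate of $\m{v}_l(f)-\m{v}_l(f_d)$ equals $c_0^{-l}\bigl(\cK^{(l)}(f-f_k)-\cK^{(l)}(f_d-f_k)\bigr)$ (or the analogous $(l+1)$-st derivative term), which by the mean value theorem is bounded in magnitude by $|f-f_d|\cdot\sup_{g}|\cK^{(l+1)}(g)|$. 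Standard estimates on the squared Fej\'er kernel (already used in \cite{candes2013towards,tang2012compressed} and invoked implicitly to prove Lemmas \ref{lem:I1bound}--\ref{lem:I2bound}) give $\sup_g |\cK^{(l+1)}(g)| \le C n^{l+1}$, and after dividing by $c_0^l \asymp n^l$ each coordinate of $c_0^{-l}(\m{v}_l(f)-\m{v}_l(f_d))$ is at most $Cn|f-f_d|$. Summing the $2K$ squared coordinates and taking square roots, $\twon{\m{v}_l(f)-\m{v}_l(f_d)}\le C\sqrt{K}\,n\,|f-f_d|$ — or, absorbing $\sqrt K$ and being generous with constants as the paper does, $\le Cn^2|f-f_d|$. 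Then $\twon{\m{L}^H(\m{v}_l(f)-\m{v}_l(f_d))}\le \twon{\m{L}}\cdot Cn^2|f-f_d|$, and since $\m{D}$ is invertible with $\twon{\m{D}^{-1}}\le C$ (a consequence of the well-conditioning in Lemma \ref{lem:invertibility}, the same bound used throughout \cite{tang2012compressed}), $\twon{\m{L}}\le\twon{\m{D}^{-1}}\le C$, yielding the first inequality $\twon{\m{L}^H(\m{v}_l(f)-\m{v}_l(f_d))}\le Cn^2|f-f_d|$.

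For the random version I would proceed identically: $\overline\cK$ differs from $\cK$ only in that the coefficients $g_n(j)$ are reweighted by the Bernoulli variables $\delta_j$, which lie in $[0,1]$, so the same mean value estimate gives $|\overline\cK^{(l+1)}(g)|\le \sum_{j=-2n}^{2n}|2\pi j|^{l+1}g_n(j)\le Cn^{l+1}\cdot n = Cn^{l+2}$ when we bound crudely (the extra factor $n$ coming from summing $2n{+}1$ terms without the cancellation that the deterministic kernel enjoys), hence $\twon{\overline{\m{v}}_l(f)-\overline{\m{v}}_l(f_d)}\le Cn^3|f-f_d|$ after the $c_0^{-l}$ normalization. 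On $\cE_{1,\tau}$ we have $\twon{p^{-1}\overline{\m{D}}-\m{D}}\le\tau\le\frac14$, so $\twon{p^{-1}\overline{\m{D}}}\le\twon{\m{D}}+\frac14\le C$ and, since $\m{D}$ is invertible and well-conditioned, $p^{-1}\overline{\m{D}}$ is invertible with $\twon{p\,\overline{\m{D}}^{-1}}\le C$ by a Neumann-series argument; thus $\twon{\overline{\m{L}}}\le\twon{\overline{\m{D}}^{-1}}=p^{-1}\twon{p\,\overline{\m{D}}^{-1}}\le Cp^{-1}$. Absorbing the harmless $p^{-1}=4n/M$ factor (or more carefully keeping track that it is bounded whenever $M\gtrsim n$, but since the lemma only needs a bound polynomial in $n$ this is not delicate) gives $\twon{\overline{\m{L}}^H(\overline{\m{v}}_l(f)-\overline{\m{v}}_l(f_d))}\le Cn^3|f-f_d|$, which is the second claimed inequality.

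The main obstacle — or rather the only point requiring care — is pinning down the correct powers of $n$ in the kernel-derivative bounds, particularly for the random kernel $\overline\cK$ where one loses the summation-by-parts cancellation that keeps $\cK^{(l+1)}$ at size $n^{l+1}$ rather than $n^{l+2}$; this is exactly why the deterministic bound carries $n^2$ while the random bound carries $n^3$. Beyond that, everything is routine: a mean value theorem, the uniform control $\twon{\m{D}^{-1}},\twon{\overline{\m{D}}^{-1}}=O(1)$ (times $p^{-1}$ in the random case) from Lemma \ref{lem:invertibility}, and submultiplicativity of the operator norm. I would relegate the detailed kernel estimates to an appendix, citing \cite{candes2013towards,tang2012compressed} for the deterministic ones, and present the operator-norm bookkeeping inline. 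Note also that these bounds need hold only for $|f-f_d|$ small (the grid spacing), so no wrap-around subtleties on $\bT$ arise.
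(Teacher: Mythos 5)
Your deterministic bound is fine and is essentially the paper's argument: the paper expands $\m{v}_l\sbra{f}-\m{v}_l\sbra{f_d}$ in its Fourier sum and bounds it termwise using $\abs{e^{i2\pi fj}-e^{i2\pi f_d j}}\le 2\pi\abs{j}\abs{f-f_d}$, $0<g_n\sbra{j}\le 1$, $\abs{2\pi j/c_0}\le 4$, $\twon{\m{e}\sbra{j}}^2\le 14K$ and $K\le n$, then applies $\twon{\m{L}}\le\twon{\m{D}^{-1}}\le 1.568$; your mean-value route delivers the same $Cn^2\abs{f-f_d}$. The random case, however, contains a genuine bookkeeping error in both halves. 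First, the random kernel does \emph{not} cost an extra power of $n$: since $\delta_j\in\lbra{0,1}$ and the coefficients are nonnegative with $\sum_{j=-2n}^{2n}g_n\sbra{j}=\cK\sbra{0}=1$, the crude termwise bound already gives $\sup_g\abs{\overline\cK^{\sbra{m}}\sbra{g}}\le\sum_j\sbra{2\pi\abs{j}}^m\delta_j g_n\sbra{j}\le\sbra{4\pi n}^m$, the same order as the deterministic kernel---no summation-by-parts cancellation is lost, because none is needed. Hence $\twon{\overline{\m{v}}_l\sbra{f}-\overline{\m{v}}_l\sbra{f_d}}\le Cn^2\abs{f-f_d}$, not $Cn^3$. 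Second, the factor you dismiss as ``harmless,'' $p^{-1}=4n/M$, is exactly where the extra power of $n$ comes from: on $\cE_{1,\tau}$ one has $\twon{\overline{\m{L}}}\le\twon{\overline{\m{D}}^{-1}}\le 2p^{-1}\twon{\m{D}^{-1}}\le Cn/M\le Cn$ (the bound the paper quotes from \cite{tang2012compressed}), and $Cn\cdot Cn^2\abs{f-f_d}$ is precisely the claimed $Cn^3\abs{f-f_d}$.

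As written, your chain yields $\twon{\overline{\m{L}}^H\sbra{\overline{\m{v}}_l\sbra{f}-\overline{\m{v}}_l\sbra{f_d}}}\le Cp^{-1}n^3\abs{f-f_d}$, i.e.\ up to $Cn^4\abs{f-f_d}$, and neither escape route works: $p^{-1}=O(1)$ requires $M\gtrsim n$, which is exactly the compressive regime the theorem excludes, and ``any polynomial in $n$ suffices'' is not what the lemma asserts---the $n^3$ is what fixes the grid size \eqref{Tgrid} used in Lemma \ref{lem:Qfdf} (a worse power would only perturb log factors downstream, but then you would be proving a different statement). The fix is to swap your two estimates: keep the $Cn^2$ Lipschitz bound for $\overline{\m{v}}_l$ (via $\sum_j g_n\sbra{j}=1$) and charge the single extra factor of $n$ to $\twon{\overline{\m{L}}}\le Cn$ through $p^{-1}\le 4n$.
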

\begin{proof} See Appendix \ref{append:extendcontin}.
\end{proof}

Applying Lemmas \ref{lem:extendcontin}, \ref{lem:bernstein} and \ref{lem:ymlogy}, we show in the following lemma that the first term on the right hand side of \eqref{eq:decomp2} can be arbitrarily small if $\bT_{\text{grid}}$ is properly chosen, where the grid size $\abs{\bT_{\text{grid}}}$ decreases with $L$.

\begin{lem} Suppose $\bT_{\text{grid}}$ is a finite set of uniform grid points. Under the assumptions of Theorem \ref{thm:noiseless}, there exists a constant $C$ such that if $M\geq CK\log \frac{K}{\delta}$ and
\equ{\abs{\bT_{\text{grid}}} = \left\lceil C\frac{n^3}{\epsilon}\sqrt{1+\frac{1}{L}\log \frac{1}{\delta}} \right\rceil, \label{Tgrid}}
then
\equ{\begin{split}&\bP\Bigg(
\sup_{f\in\bT} \inf_{f_d\in\bT_{\text{grid}}} c_0^{-l}\left\|\sbra{\overline{Q}^{\sbra{l}}\sbra{f} - \overline{Q}^{\sbra{l}}\sbra{f_d}} \right. \\
&\phantom{Q^{\sbra{l}}\sbra{f_d}}\left.+ \sbra{Q^{\sbra{l}}\sbra{f_d} - Q^{\sbra{l}}\sbra{f}} \right\|_2 < \epsilon, \; l=0,1,2 \Bigg) \\
&\geq 1-6\delta. \end{split} \notag}\label{lem:Qfdf}
\end{lem}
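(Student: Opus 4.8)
The plan is to reduce the uniform-over-$\bT$ statement in Lemma \ref{lem:Qfdf} to a union bound over the finite grid $\bT_{\text{grid}}$ by exploiting a Lipschitz-type bound (in terms of the grid spacing) for the quantity in \eqref{eq:Qdc2}, and then to handle the resulting deviation probability for each fixed $f$ via the higher-dimensional Bernstein inequality of Lemma \ref{lem:bernstein}, converting the $L$-dependence in the exponent into an $L$-dependence in the grid size via Lemma \ref{lem:ymlogy}. The key observation is that for fixed $f$ and $f_d$, the left side of \eqref{eq:Qdc2} is an inner product $\inp{\m{\Phi}, \m{w}}$ with $\m{w} = \overline{\m{L}}^H(\overline{\m{v}}_l(f_d) - \overline{\m{v}}_l(f)) + \m{L}^H(\m{v}_l(f_d) - \m{v}_l(f))$, and $\m{\Phi}$ has rows $\m{\phi}_k$ that are independent and uniform on $\bS^{2L-1}$; thus $\twon{\inp{\m{\Phi},\m{w}}} = \twon{\sum_k w_k \m{\phi}_k}$ is exactly the object controlled by Lemma \ref{lem:bernstein}.

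\medskip

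\noindent\textbf{Step 1 (reduction to the grid via continuity).} Work on the event $\cE_{1,\tau}$ (which by Lemma \ref{lem:invertibility} has probability at least $1-\delta$ once $M\geq CK\log\frac{K}{\delta}$). For an arbitrary $f\in\bT$, choose $f_d\in\bT_{\text{grid}}$ nearest to $f$, so $\abs{f-f_d}\leq \frac{1}{2\abs{\bT_{\text{grid}}}}$ since the grid is uniform. By Lemma \ref{lem:extendcontin}, the two difference terms in $\m{w}$ are bounded in norm by $Cn^2\abs{f-f_d}$ and $Cn^3\abs{f-f_d}$ respectively, so $\twon{\m{w}}\leq C'n^3\abs{f-f_d}\leq \frac{C'n^3}{2\abs{\bT_{\text{grid}}}}$ on $\cE_{1,\tau}$. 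With the choice of $\abs{\bT_{\text{grid}}}$ in \eqref{Tgrid}, this gives $\twon{\m{w}}\leq \frac{\epsilon}{2}\big(1+\frac{1}{L}\log\frac{1}{\delta}\big)^{-1/2}$ (up to adjusting the constant $C$ in \eqref{Tgrid}), which will serve as the scale parameter in Lemma \ref{lem:bernstein}.

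\medskip

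\noindent\textbf{Step 2 (Bernstein bound for a fixed pair, then union bound).} For each fixed $f$ with nearest grid point $f_d$, apply Lemma \ref{lem:bernstein} with $t=\epsilon$ and $\twon{\m{w}}$ as above; since $t/\twon{\m{w}}\geq 2\big(1+\frac{1}{L}\log\frac{1}{\delta}\big)^{1/2}\geq \sqrt{2}>1$, the hypothesis $t>\twon{\m{w}}$ is met, and with $x := \frac{t^2}{\twon{\m{w}}^2}-\log\frac{t^2}{\twon{\m{w}}^2}-1$ Lemma \ref{lem:ymlogy} (used in the direction $x\geq \frac{1}{2}\cdot\frac{t^2}{\twon{\m{w}}^2}$ when the ratio is large, or more simply $x\geq \frac{t^2}{2\twon{\m{w}}^2}$ after checking the elementary inequality $y-\log y-1\geq y/2$ for $y\geq 4$) yields $\bP(\twon{\inp{\m{\Phi},\m{w}}}\geq\epsilon)\leq e^{-L x}\leq \exp\!\big(-\frac{1}{2}\cdot 4\big(1+\frac{1}{L}\log\frac{1}{\delta}\big)\cdot L\big)\leq \delta\cdot e^{-2L}$. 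Since the supremum over $f\in\bT$ of $\twon{\m{w}}$ is attained (up to the continuity estimate) over the finitely many nearest-point regions — equivalently one may grid $\bT$ finely enough and invoke the continuity bound again — a union bound over the $\abs{\bT_{\text{grid}}}$ grid points (each grid point ``covers'' its neighborhood, and within a neighborhood $\twon{\m{w}}$ only shrinks the relevant bound) costs a factor $\abs{\bT_{\text{grid}}}$; with $\abs{\bT_{\text{grid}}}$ polynomial in $n/\epsilon$ and $\sqrt{1+\frac1L\log\frac1\delta}$, the factor $\abs{\bT_{\text{grid}}}e^{-2L}$ is absorbed, and summing the three values $l=0,1,2$ gives total failure probability at most $6\delta$ (the factor $6$ rather than $3$ leaving room for the $\cE_{1,\tau}^c$ contribution and for the continuity step). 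This yields the claimed bound $\bP(\,\cdot\,)\geq 1-6\delta$.

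\medskip

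\noindent\textbf{Main obstacle.} The delicate point is the \emph{order of quantifiers}: the statement needs a bound on $\sup_{f\in\bT}\inf_{f_d\in\bT_{\text{grid}}}(\cdots)$, and a naive union bound is over the infinitely many $f$. The resolution — which I expect to be the technically fussy step — is to combine the deterministic Lipschitz estimate of Lemma \ref{lem:extendcontin} (which holds on $\cE_{1,\tau}$ for \emph{all} $f,f_d$ simultaneously, with no probability cost) with a union bound over only the grid: for any $f$, its nearest grid point $f_d$ makes $\twon{\m{w}}$ small \emph{surely} on $\cE_{1,\tau}$, so the randomness enters only through the fixed finite collection indexed by $\bT_{\text{grid}}$. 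One must be careful that the bound used inside Lemma \ref{lem:bernstein} is the worst-case (largest) $\twon{\m{w}}$ over each grid cell, i.e.\ $\frac{C'n^3}{2\abs{\bT_{\text{grid}}}}$, which is exactly what \eqref{Tgrid} is calibrated to control; getting the constants in \eqref{Tgrid} to line up with the $\sqrt{2}$-slack needed for $t>\twon{\m{w}}$ and with the exponent $Lx$ dominating $\log\abs{\bT_{\text{grid}}}$ is the only real computation, and it is routine given Lemmas \ref{lem:bernstein} and \ref{lem:ymlogy}. The condition $M\geq CK\log\frac{K}{\delta}$ appears only to secure $\cE_{1,\tau}$ via Lemma \ref{lem:invertibility}; no further lower bound on $M$ is needed here because the continuity step is deterministic and the Bernstein step does not involve $M$.
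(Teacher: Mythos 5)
Your overall route is the same as the paper's: work on $\cE_{1,\tau}$ (secured by Lemma \ref{lem:invertibility}, which is the only place $M\geq CK\log\frac{K}{\delta}$ is used), bound $\twon{\m{w}}=\twon{\overline{\m{L}}^H\sbra{\overline{\m{v}}_l\sbra{f}-\overline{\m{v}}_l\sbra{f_d}}-\m{L}^H\sbra{\m{v}_l\sbra{f}-\m{v}_l\sbra{f_d}}}\leq Cn^3\abs{f-f_d}$ by Lemma \ref{lem:extendcontin}, apply Lemma \ref{lem:bernstein} with $t=\epsilon$, and use Lemma \ref{lem:ymlogy} to turn the requirement $e^{-Lx}\leq\delta$ into the grid-density condition that yields \eqref{Tgrid}. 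The paper's accounting, however, differs from yours in the last step: it calibrates \eqref{Tgrid} so that for each fixed $f$ paired with its nearest grid point the Bernstein tail is at most $\delta$, adds $\bP\sbra{\cE_{1,\tau}^c}\leq\delta$, and takes the union bound only over $l=0,1,2$, which is exactly where $6\delta$ comes from; no factor $\abs{\bT_{\text{grid}}}$ is ever paid.

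The genuine gap is in your Step 2. First, your union bound over the $\abs{\bT_{\text{grid}}}$ grid points does not close with the grid size \eqref{Tgrid}: your per-point tail is $e^{-Lx}\leq \delta^2e^{-2L}$, so the union costs $\abs{\bT_{\text{grid}}}\delta^2e^{-2L}$, and with $\abs{\bT_{\text{grid}}}\sim Cn^3\epsilon^{-1}\sqrt{1+\frac{1}{L}\log\frac{1}{\delta}}$ this is not $\leq\delta$ for small $L$ (e.g.\ $L=1$, $e^{-2L}\approx 0.14$) and large $n/\epsilon$; to pay for a grid union you would need $1+\frac{1}{L}\log\frac{\abs{\bT_{\text{grid}}}}{\delta}$ inside the calibration (as in Lemmas \ref{lem:I1} and \ref{lem:I2}), i.e.\ a different grid-size formula than \eqref{Tgrid}. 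Second, even granting the union, it does not deliver the claimed uniformity over $f\in\bT$: within a grid cell the vector $\m{w}_l\sbra{f,f_d}$ changes direction with $f$ (Lemma \ref{lem:extendcontin} controls only its norm), so a single application of Lemma \ref{lem:bernstein} at one representative per cell does not control $\twon{\inp{\m{\Phi},\m{w}_l\sbra{f,f_d}}}$ for all $f$ in that cell — the "each grid point covers its neighborhood" claim is exactly what is not justified, and "gridding finely and invoking continuity again" does not terminate without a bound that is deterministic in $\m{\Phi}$ (e.g.\ Cauchy--Schwarz $\twon{\inp{\m{\Phi},\m{w}}}\leq\sqrt{K}\twon{\m{w}}$, which would work but forces a $\sqrt{K}$ into the grid size). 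In other words, the obstacle you correctly single out is not actually resolved by your argument; the paper itself treats this point lightly — it fixes $f$ with its nearest $f_d$, bounds the per-pair failure probability by $2\delta$, and unions over $l$ only — so if you follow that accounting your extra grid union (and its faulty absorption step) is unnecessary, but as written your proof neither reproduces the paper's bookkeeping nor supplies a valid substitute for the continuum extension.
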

\begin{proof} See Appendix \ref{append:Qfdf}.
\end{proof}

Combining Proposition \ref{prop:boundgrid} and Lemma \ref{lem:Qfdf}, we have the following proposition, where the bound on $M$ in \eqref{eq:Mboundcontinu} is obtained by modifying \eqref{eq:Mboundgrid} using \eqref{Tgrid} with the relaxation
\equ{\begin{split}\abs{\bT_{\text{grid}}}
&= \left\lceil C\frac{n^3}{\epsilon}\sqrt{1+\frac{1}{L}\log \frac{1}{\delta}} \right\rceil \\ &\leq C\frac{n^3}{\epsilon}\sqrt{2\log\frac{1}{\delta}} \\
&\leq \sqrt{2}C\frac{n^3}{\epsilon\sqrt{\delta}}. \end{split} \notag}
\begin{prop} Under the assumptions of Theorem \ref{thm:noiseless}, there exists a numerical constant $C$ such that if
\equ{M\geq C\frac{1}{\epsilon^2}\max\lbra{\log^2\frac{n}{\epsilon\delta}, K\log\frac{K}{\delta} \sbra{1+\frac{1}{L}\log\frac{n}{\epsilon\delta}} }, \label{eq:Mboundcontinu}}
then with probability $1-\delta$, we have
\equ{\sup_{f\in\bT}c_0^{-l}\twon{\overline Q^{\sbra{l}}\sbra{f} - Q^{\sbra{l}}\sbra{f}}\leq \epsilon, \quad l=0,1,2. \label{eq:QbQl}} \label{prop:appendcontinuous}
\end{prop}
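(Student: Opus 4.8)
The plan is to assemble Proposition \ref{prop:appendcontinuous} from the grid estimate of Proposition \ref{prop:boundgrid} and the interpolation estimate of Lemma \ref{lem:Qfdf}, glued through the triangle-inequality bound \eqref{eq:decomp2} that is already in place. The first move is to fix, once and for all, a \emph{uniform} grid $\bT_{\text{grid}}\subset\bT$ whose cardinality is exactly the one prescribed in \eqref{Tgrid}, $\abs{\bT_{\text{grid}}} = \left\lceil C\frac{n^3}{\epsilon}\sqrt{1+\frac{1}{L}\log\frac{1}{\delta}} \right\rceil$. The key observation is that this quantity depends only on $n$, $\epsilon$, $L$ and $\delta$, not on $M$; hence there is no circularity, and choosing the grid first simply turns \eqref{eq:Mboundgrid} into a lower bound on $M$ in terms of a now-fixed number. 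A uniform grid is used because Lemma \ref{lem:Qfdf} requires it, whereas Proposition \ref{prop:boundgrid} holds for any finite grid.

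Next, for each $l\in\lbra{0,1,2}$ I would invoke \eqref{eq:decomp2}, which bounds $\sup_{f\in\bT}c_0^{-l}\twon{\overline Q^{\sbra{l}}\sbra{f} - Q^{\sbra{l}}\sbra{f}}$ by the sum of an interpolation term, $\sup_{f\in\bT}\inf_{f_d\in\bT_{\text{grid}}}c_0^{-l}\twon{\sbra{\overline Q^{\sbra{l}}\sbra{f} - \overline Q^{\sbra{l}}\sbra{f_d}} + \sbra{Q^{\sbra{l}}\sbra{f_d} - Q^{\sbra{l}}\sbra{f}}}$, and a grid term, $\sup_{f_d\in\bT_{\text{grid}}}c_0^{-l}\twon{\overline Q^{\sbra{l}}\sbra{f_d} - Q^{\sbra{l}}\sbra{f_d}}$. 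It suffices to make each of these at most $\epsilon/2$ with failure probability at most $\delta/2$; the resulting bounds differ from the claimed ones only in the value of the absolute constant $C$ (from replacing $\epsilon$ by $\epsilon/2$ and $\delta$ by a fixed fraction of $\delta$). For the grid term I apply Proposition \ref{prop:boundgrid} with the chosen $\bT_{\text{grid}}$, which gives the bound once $M\geq C\epsilon^{-2}\max\lbra{\log^2\frac{\abs{\bT_{\text{grid}}}}{\delta},\; K\log\frac{K}{\delta}\sbra{1+\frac{1}{L}\log\frac{\abs{\bT_{\text{grid}}}}{\delta}}}$; for the interpolation term I apply Lemma \ref{lem:Qfdf}, which needs only $M\geq CK\log\frac{K}{\delta}$ --- a requirement absorbed into the second entry of the above maximum since $\epsilon\leq 1$.

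It then remains to substitute the chosen $\abs{\bT_{\text{grid}}}$ into the $M$-bound. Using the relaxation $\abs{\bT_{\text{grid}}} \leq \sqrt{2}C\frac{n^3}{\epsilon\sqrt{\delta}}$ recorded just before the statement, one gets $\log\frac{\abs{\bT_{\text{grid}}}}{\delta} = O\sbra{\log\frac{n}{\epsilon\delta}}$, so $\log^2\frac{\abs{\bT_{\text{grid}}}}{\delta} = O\sbra{\log^2\frac{n}{\epsilon\delta}}$ and $\frac{1}{L}\log\frac{\abs{\bT_{\text{grid}}}}{\delta} = O\sbra{\frac{1}{L}\log\frac{n}{\epsilon\delta}}$; after adjusting $C$ this is precisely \eqref{eq:Mboundcontinu}. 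A union bound over the two events (which already fold in the contributions of $\cE_{1,\tau}$ and $\cE_2$) then yields \eqref{eq:QbQl} with probability at least $1-\delta$.

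I do not expect a real obstacle here: this proposition is the bookkeeping step, and every piece of probabilistic work --- the higher-dimensional Bernstein inequality of Lemma \ref{lem:bernstein}, the $I_1^l$ and $I_2^l$ estimates of Lemmas \ref{lem:I1} and \ref{lem:I2}, and the Lipschitz-in-$f$ control of Lemmas \ref{lem:extendcontin} and \ref{lem:Qfdf} --- has already been done. The only items demanding attention are (i) that the grid cardinality \eqref{Tgrid} is independent of $M$, so the choice-of-grid step is not circular; (ii) that the auxiliary requirement $M\geq CK\log\frac{K}{\delta}$ from Lemma \ref{lem:Qfdf} is dominated by \eqref{eq:Mboundcontinu}; and (iii) careful tracking of the halved $\epsilon$, the split failure probability, and the constants through the union bound.
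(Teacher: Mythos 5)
Your proposal matches the paper's own argument: the paper proves Proposition \ref{prop:appendcontinuous} exactly by combining Proposition \ref{prop:boundgrid} and Lemma \ref{lem:Qfdf} through the decomposition \eqref{eq:decomp2}, then substituting the grid size \eqref{Tgrid} via the relaxation $\abs{\bT_{\text{grid}}}\leq \sqrt{2}C\frac{n^3}{\epsilon\sqrt{\delta}}$ to turn \eqref{eq:Mboundgrid} into \eqref{eq:Mboundcontinu}, with constants and the split of $\epsilon$ and $\delta$ absorbed into $C$. Your bookkeeping points (grid size independent of $M$, the $M\geq CK\log\frac{K}{\delta}$ requirement being dominated, union bound over the two events) are precisely the implicit steps of the paper's one-paragraph proof, so the proposal is correct and essentially identical in approach.
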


\subsection{Completion of the Proof} \label{sec:complete}
We have shown by Proposition \ref{prop:appendcontinuous} that $\overline Q(f)$ (and its derivatives) can be arbitrarily close to $Q(f)$ (and its derivatives) with high probability provided that $M$ satisfies \eqref{eq:Mboundcontinu}. Following the same steps as in \cite{yang2014exact1}, we can show that $\overline Q(f)$ also satisfies \eqref{eq:Qnear}, \eqref{eq:Qfar} and \eqref{eq:Qnear1}, if $\epsilon$ is taken to be a small value. In particular, it follows from \eqref{eq:Qfar} and \eqref{eq:QbQl} that for $f\in \cF$,
\equ{\begin{split}\twon{\overline Q(f)}
&\leq \twon{Q(f)} + \twon{\overline Q(f)-Q(f)} \\
&\leq 1-c_2 + \epsilon. \label{eq:Qbarbound} \end{split}}
For $f\in\cN_k$, it follows from \eqref{eq:QbQl} and the proof of \cite[Lemma 6.8]{yang2014exact1} that
\equ{\begin{split}
&\abs{{\frac{{\text{d}^2}\twon{\overline Q\sbra{f}}^2}{\text{d}f^2} - \frac{{\text{d}^2} \twon{Q\sbra{f}}^2}{\text{d}f^2} }} \\
&\leq \sbra{8C_1\epsilon + 4\epsilon^2}c_0^2\\
&\leq \frac{8}{3}\pi^2\sbra{8C_1\epsilon + 4\epsilon^2}n^2, \end{split} \notag}
where $C_1$ is a constant. Then,
\equ{\begin{split}
&\frac{{\text{d}^2}\twon{\overline Q\sbra{f}}^2}{\text{d}f^2}\\
&\leq \frac{{\text{d}^2} \twon{Q\sbra{f}}^2}{\text{d}f^2} +
\abs{{\frac{{\text{d}^2}\twon{\overline Q\sbra{f}}^2}{\text{d}f^2} - \frac{{\text{d}^2} \twon{Q\sbra{f}}^2}{\text{d}f^2} }} \\
&\leq \sbra{-2c_1 + \frac{8}{3}\pi^2\sbra{8C_1\epsilon + 4\epsilon^2}}n^2. \label{eq:Qbar2d} \end{split}}

Letting
\equ{\epsilon = \frac{1}{2}\min\lbra{c_2,\; \frac{3c_1}{4\pi^2\sbra{8C_1+4}}} \label{eq:epsilon}}
and substituting \eqref{eq:epsilon} into \eqref{eq:Qbarbound} and \eqref{eq:Qbar2d}, we have
\lentwo{\equa{\twon{\overline Q(f)}
    &\leq& 1-c'_2, \quad f\in \cF, \label{eq:Qbfar}\\ \frac{\text{d}^2\twon{\overline Q(f)}^2}{\text{d}f^2}
    &\leq& -2c'_1n^2,\quad f\in\cN_k, \label{eq:Qbnear1}
    }}where $c'_1$ and $c'_2$ are positive constants as well.
By consecutively applying  $\twon{\overline Q(f_k)}^2=1$ (according to \eqref{eq:Qfj}), $\frac{\text{d}\twon{\overline Q(f)}^2}{\text{d}f}\Big|_{f=f_k}=0$ (according to \eqref{eq:Qdfj1}) and \eqref{eq:Qbnear1}, we have for $f\in\cN_k$,
\equ{\begin{split} \twon{\overline Q(f)}^2
&= 1 + \int_{f_k}^f \frac{\text{d}\twon{\overline Q(s)}^2}{\text{d}s} \text{d}s \\
&= 1 + \int_{f_k}^f \text{d}s \int_{f_k}^s \frac{\text{d}^2\twon{\overline Q(t)}^2}{\text{d}t^2} \text{d}t \\
&\leq 1 + \int_{f_k}^f \text{d}s \int_{f_k}^s -2c'_1n^2 \text{d}t \\
&= 1 - c'_1n^2 (f-f_k)^2. \end{split} \label{eq:Qbnear}}
It follows from \eqref{eq:Qbfar} and \eqref{eq:Qbnear} that $\overline Q(f)$ satisfies \eqref{eq:cons2}. Therefore, $\overline Q(f)$ is a polynomial satisfying the constraints in \eqref{eq:cons1}--\eqref{eq:cons3}, as required in Proposition \ref{prop:dualpoly}. Finally, substituting \eqref{eq:epsilon} into \eqref{eq:Mboundcontinu} gives the bound on $M$ in \eqref{eq:AN_bound}, completing the proof.

\section{Conclusion} \label{sec:conclusion}
A rigorous analysis was performed in this paper to confirm the observation that the frequency estimation performance of atomic norm minimization improves as the number of channels increases. The sample size per channel that ensures exact frequency estimation from noiseless data was derived as a decreasing function of the number of channels. Numerical results were provided that agree with our analysis.

While we have shown the performance gain of the use of multichannel data in reducing the sample size per channel, future work is to answer the question as to whether the resolution can be improved by increasing the number of channels. A positive answer to the question was suggested by empirical evidence presented in \cite{yang2016exact,li2016off,fernandez2016super,yang2016enhancing}. Another interesting and important future work is to analyze the noise robustness of atomic norm minimization with compressive data.

\appendix

\subsection{Proof of Lemma \ref{lem:ymlogy}} \label{append:ymlogy}
Consider $x$ as a function of $y\geq 1$:
\equ{x(y) = y-\log y -1.\notag}
Since the derivative $x'(y) = 1-1/y>0$, as $y>1$, we have that $x(y)\geq 0$ is monotonically increasing on $\left[1,+\infty\right)$. As a result, $y(x)$ is the inverse function of $x(y)$ and is monotonically increasing.

To show the second part of the lemma, we first note that
\equ{y = x + \log y +1 \geq x+1.\notag}
Then, using the fact that $\log y \leq \frac{1}{2}y$ for any $y\geq 1$, we have
\equ{x = y-\log y -1 \geq y - \frac{1}{2}y - 1 = \frac{1}{2}y - 1.\notag}
It follows that $y \leq 2(1+x)$, completing the proof.

\subsection{Proof of Lemma \ref{lem:I1}} \label{append:I1}
The proof of this lemma follows similar steps as those of the proofs of \cite[Lemma IV.8]{tang2012compressed} and \cite[Lemma 6.6]{yang2014exact1}. A main difference is the use of Lemma \ref{lem:bernstein}, rather than Hoeffding's inequality.

Recall that $I_1^l\sbra{f}
= \inp{\m{\Phi}, \overline{\m{L}}^H\sbra{\overline{\m{v}}_l\sbra{f}-p\m{v}_l\sbra{f} }}$, where the rows of $\m{\Phi}$ are independent random vectors that are uniformly distributed on the unit sphere. Conditioned on a particular realization $\omega\in\cE$ where
\equ{\begin{split}\cE= \Bigg\{\omega:&\;
\sup_{f_d\in \bT_{\text{grid}}} \twon{\overline{\m{L}}^H\sbra{\overline{\m{v}}_l\sbra{f_d}-p\m{v}_l\sbra{f_d} }} <\lambda_l,\\
&\phantom{\twon{\overline{\m{L}}^H\sbra{\overline{\m{v}}_l\sbra{f_d}-p\m{v}_l\sbra{f_d} }}} l=0,1,2,3 \Bigg\}, \end{split} \notag}
Lemma \ref{lem:bernstein} and the union bound then imply, for $\epsilon>\lambda_l$,
\equ{\begin{split}
&\bP\sbra{\sup_{f_d\in \bT_{\text{grid}}} \twon{\inp{\m{\Phi},\;\overline{\m{L}}^H\sbra{\overline{\m{v}}_l\sbra{f_d}-p\m{v}_l\sbra{f_d} }}}> \epsilon \middle\vert \omega} \\
&\leq \abs{\bT_{\text{grid}}} e^{-L\sbra{\frac{\epsilon^2}{\lambda_l^2} - \log \frac{\epsilon^2}{\lambda_l^2} - 1}}. \label{eq:epsilonglambda} \end{split}}
It immediately follows that
\equ{\begin{split}
&\bP\sbra{\sup_{f_d\in \bT_{\text{grid}}} \twon{\inp{\m{\Phi},\;\overline{\m{L}}^H\sbra{\overline{\m{v}}_l\sbra{f_d}-p\m{v}_l\sbra{f_d} }}} > \epsilon} \\
&\leq \abs{\bT_{\text{grid}}} e^{-L\sbra{\frac{\epsilon^2}{\lambda_l^2} - \log \frac{\epsilon^2}{\lambda_l^2} - 1}} + \bP\sbra{\cE^c}. \end{split} \notag}
Setting
\equ{\lambda_l = 4\sbra{2^{2l+3}\sqrt{\frac{K}{M}}+\frac{n}{M}a\bar \sigma_l} \label{eq:lambdal}}
in $\cE$ and applying Lemma \ref{lem:I1bound} yields
\equ{\begin{split}
&\bP\sbra{\sup_{f_d\in \bT_{\text{grid}}} \twon{\inp{\m{\Phi},\;\overline{\m{L}}^H\sbra{\overline{\m{v}}_l\sbra{f_d}-p\m{v}_l\sbra{f_d} }}} > \epsilon} \\
&\leq \abs{\bT_{\text{grid}}} e^{-L\sbra{\frac{\epsilon^2}{\lambda_l^2} - \log \frac{\epsilon^2}{\lambda_l^2} - 1}} + 64\abs{\bT_{\text{grid}}}e^{-\gamma a^2}+\bP\sbra{\cE_{1,\tau}^c}. \label{eq:PI1} \end{split}}

For the second term to be no greater than $\delta$, $a$ is chosen such that
\equ{a^2 = \gamma^{-1}\log \frac{64\abs{\bT_{\text{grid}}}}{\delta} \notag}
and it will be fixed from now on. The first term is no greater than $\delta$ if
\equ{\frac{\epsilon^2}{\lambda_l^2} - \log\frac{\epsilon^2}{\lambda_l^2} - 1 \geq \frac{1}{L}\log \frac{\abs{\bT_{\text{grid}}}}{\delta}.\notag}
Applying Lemma \ref{lem:ymlogy}, this holds if
\equ{\frac{\epsilon^2}{\lambda_l^2} \geq 2 \sbra{1+\frac{1}{L}\log\frac{\abs{\bT_{\text{grid}}}}{\delta} }. \label{eq:ratioepstolam}}
Note that \eqref{eq:ratioepstolam} implies $\epsilon>\lambda_l$, which is required to verify \eqref{eq:epsilonglambda}.

We next derive a bound for $M$ given \eqref{eq:ratioepstolam}. First consider the case of $2^4K/\sqrt{M}\geq 1$. The condition in Lemma \ref{lem:I1bound} is $a\leq \sqrt{2}M^{1/4}$ or equivalently
\equ{M\geq \frac{1}{4}a^4 = \frac{1}{4}\gamma^{-2}\log^2 \frac{64\abs{\bT_{\text{grid}}}}{\delta}. \label{eq:Ma}}
In this case, we have $a\bar{\sigma}_l\leq 2^{2l+3}\frac{\sqrt{MK}}{n}$, which inserting into \eqref{eq:lambdal} results in
\equ{\frac{1}{\lambda_l^2} = \frac{1}{16\sbra{2^{2l+3}\sqrt{\frac{K}{M}}+\frac{n}{M}a\bar \sigma_l}^2} \geq \frac{1}{4^{2l+5}}\frac{M}{K}.\notag}
Now consider the other case of $2^4K/\sqrt{M}<1$, which will be discussed in two scenarios. If $32s\geq a^2$, then $a\bar{\sigma}_l\leq 2^{2l+3}\frac{\sqrt{MK}}{n}$ which again gives the above lower bound on $\frac{1}{\lambda_l^2}$. Otherwise if $32s \leq a^2$, then $\lambda_l \leq 2^{2l+3}\sqrt{2}\frac{a}{\sqrt{M}}$ and
\equ{\frac{1}{\lambda_l^2} \geq \frac{1}{2^{4l+7}} \frac{M}{a^2}.\notag}
Therefore, to make \eqref{eq:ratioepstolam} hold true, it suffices to take $M$ satisfying \eqref{eq:Ma} and
\equ{M\min\sbra{\frac{1}{4^{2l+5}}\frac{1}{K}, \frac{1}{2^{4l+7}} \frac{1}{a^2}} \geq \frac{2}{\epsilon^2}\sbra{1+\frac{1}{L}\log\frac{\abs{\bT_{\text{grid}}}}{\delta} }.\notag}
By the arguments above, the first term on the right hand side of \eqref{eq:PI1} is no greater than $\delta$ if
\equ{\begin{split} M\geq \max\bigg\{
&\frac{2}{\epsilon^2}4^{2l+5} K \sbra{1+\frac{1}{L}\log\frac{\abs{\bT_{\text{grid}}}}{\delta}},\\
&\frac{2}{\epsilon^2} 2^{4l+7}\gamma^{-1}\log\frac{64\abs{\bT_{\text{grid}}}}{\delta}  \sbra{1+\frac{1}{L}\log\frac{\abs{\bT_{\text{grid}}}}{\delta}},\\
&\frac{1}{4}\gamma^{-2}\log^2 \frac{64\abs{\bT_{\text{grid}}}}{\delta} \bigg\}. \label{eq:MI1_1} \end{split}}
According to Lemma \ref{lem:invertibility}, the last term on the right hand side of \eqref{eq:PI1} is no greater than $\delta$ if
\equ{M\geq \frac{50}{\tau^2}K\log \frac{2K}{\delta}. \label{eq:MI1_2}}

Setting $\tau=\frac{1}{4}$, combining \eqref{eq:MI1_1} and \eqref{eq:MI1_2} together, absorbing all constants into one, and using the inequality $\log\frac{\abs{\bT_{\text{grid}}}}{\delta}  \sbra{1+\frac{1}{L}\log\frac{\abs{\bT_{\text{grid}}}}{\delta}}\leq 2\log^2\frac{\abs{\bT_{\text{grid}}}}{\delta}$, we have
\equ{\begin{split}M\geq C\max\bigg\{
&\frac{1}{\epsilon^2}K \sbra{1+\frac{1}{L}\log\frac{\abs{\bT_{\text{grid}}}}{\delta}},\\
&\frac{1}{\epsilon^2} \log^2 \frac{\abs{\bT_{\text{grid}}}}{\delta},\; K\log \frac{K}{\delta}\bigg\} \end{split}\notag}
is sufficient to guarantee
\equ{\sup_{f_d\in\bT_{\text{grid}}} \twon{I_1^l\sbra{f_d}}\leq \epsilon\notag}
with probability at least $1-3\delta$. Applying the union bound then completes the proof.

\subsection{Proof of Lemma \ref{lem:I2}} \label{append:I2}
Recall that $I_2^l\sbra{f}
= \inp{\m{\Phi}, \sbra{\overline{\m{L}}-p^{-1}\m{L}}^Hp\m{v}_l\sbra{f}}$. According to Lemma \ref{lem:I2bound}, we have on the set $\cE_{1,\tau}$
\equ{\twon{\sbra{\overline{\m{L}}-p^{-1}\m{L} }^Hp\m{v}_l\sbra{f} } \leq C\tau \notag}
for some constant $C>0$. Applying Lemma \ref{lem:bernstein} and the union bound gives for $\epsilon > C\tau$,
\equ{\begin{split}
&\bP\sbra{\sup_{f_d\in\bT_{\text{grid}}} \twon{I_2^l\sbra{f_d}} >\epsilon} \\
&\leq \abs{\bT_{\text{grid}}} e^{-L\sbra{\frac{\epsilon^2}{C\tau^2} - \log\frac{\epsilon^2}{C\tau^2} -1 }} + \bP\sbra{\cE_{1,\tau}^c}.\end{split}\notag}
To make the first term no greater than $\delta$, we take $\tau$ such that
\equ{\frac{\epsilon^2}{C\tau^2} - \log\frac{\epsilon^2}{C\tau^2} -1 \geq \frac{1}{L}\log \frac{\abs{\bT_{\text{grid}}}}{\delta}.\notag}
According to Lemma \ref{lem:ymlogy}, it then suffices to fix $\tau$ such that
\equ{\frac{1}{\tau^2} = C\frac{2}{\epsilon^2}\sbra{1+ \frac{1}{L}\log \frac{\abs{\bT_{\text{grid}}}}{\delta}}.\notag}
To make the second term no greater than $\delta$, it suffices by Lemma \ref{lem:invertibility} to take
\equ{\begin{split}M
&\geq \frac{C}{\tau^2}K\log \frac{2K}{\delta} \\
&= C\frac{1}{\epsilon^2}K\log \frac{2K}{\delta}\sbra{1+ \frac{1}{L}\log \frac{\abs{\bT_{\text{grid}}}}{\delta}}. \end{split}\notag}
Application of the union bound proves the lemma.

\subsection{Proof of Lemma \ref{lem:extendcontin}} \label{append:extendcontin}
Recall by inserting \eqref{eq:K} into $\m{v}_l(f)$ as given in \eqref{eq:vl} that
\equ{\m{v}_l(f) = \frac{1}{n}\sum_{j=-2n}^{j=2n} \sbra{\frac{i2\pi j}{c_0}}^l g_n(j) e^{i2\pi f j}\m{e}(j)\notag}
where
\equ{\m{e}(j) = \begin{bmatrix} e^{-i2\pi f_1 j} \\ \vdots \\ e^{-i2\pi f_K j} \\ \frac{i2\pi j}{c_0^2}e^{-i2\pi f_1 j} \\ \vdots \\ \frac{i2\pi j}{c_0^2}e^{-i2\pi f_K j}\end{bmatrix}.\notag}
It follows that
\equ{\begin{split}
&\m{v}_l\sbra{f} - \m{v}_l\sbra{f_d} \\
&= \frac{1}{n}\sum_{j=-2n}^{j=2n} \sbra{\frac{i2\pi j}{c_0}}^l g_n(j)\sbra{e^{i2\pi f j} - e^{i2\pi f_d j}}\m{e}(j) . \end{split}\notag}
Using the following bounds shown in \cite{candes2013towards,tang2012compressed}:
{\lentwo\equa{ \norm{g_n}_{\infty}
&\leq& 1, \notag\\ \abs{\frac{i2\pi j}{c_0}}
&\leq& 4 \text{ when } n\geq 2, \notag\\ \twon{\m{e}(j)}^2
&\leq& 14K \text{ when } n\geq 4
\notag}}and the bound
\equ{\begin{split}\abs{e^{i2\pi f j} - e^{i2\pi f_d j}}
&= \abs{e^{i\pi (f+f_d) j}\cdot 2i\sin\sbra{\pi(f-f_d)j} } \\
&= 2\abs{\sin\sbra{\pi(f-f_d)j}} \\
&\leq 2\pi\abs{f-f_d}\cdot\abs{j} \\
&\leq 4n\pi\abs{f-f_d}, \end{split}\notag}
we have
\equ{\begin{split}
&\twon{\m{v}_l\sbra{f} - \m{v}_l\sbra{f_d}}\\
&\leq \frac{1}{n}\sum_{j=-2n}^{j=2n} \abs{\frac{i2\pi j}{c_0}}^l \abs{g_n(j)} \abs{e^{i2\pi f j} - e^{i2\pi f_d j}}\twon{\m{e}(j)} \\
&\leq\frac{1}{n}\cdot (4n+1)\cdot 4^l \cdot 4n\pi\abs{f-f_d} \cdot 14K \\
&\leq CnK\abs{f-f_d}\\
&\leq Cn^2\abs{f-f_d} . \end{split}\notag}
We then obtain
\equ{\begin{split}\twon{\m{L}^H\sbra{\m{v}_l\sbra{f} - \m{v}_l\sbra{f_d}}}
&\leq \twon{\m{L}} \twon{\m{v}_l\sbra{f} - \m{v}_l\sbra{f_d}}\\
&\leq Cn^2\abs{f-f_d}, \end{split}\notag}
since $\twon{\m{L}} \leq \twon{\m{D}^{-1}} \leq 1.568$ according to \cite{candes2013towards}.

The bound in \eqref{eq:Lvrand} can be shown using similar arguments, while the exponent for $n$ increases from 2 to 3 since $\twon{\overline{\m{L}}} \leq 2\twon{\m{D}^{-1}}p^{-1} \leq C\frac{n}{M} \leq Cn$ according to \cite{tang2012compressed}.

\subsection{Proof of Lemma \ref{lem:Qfdf}} \label{append:Qfdf}

It follows from Lemma \ref{lem:extendcontin} that
\equ{\begin{split}
&\twon{\overline{\m{L}}^H \sbra{\overline{\m{v}}_l\sbra{f_d} - \overline{\m{v}}_l\sbra{f}} + \m{L}^H \sbra{\m{v}_l\sbra{f_d} - \m{v}_l\sbra{f}}} \\
&\leq \twon{\overline{\m{L}}^H \sbra{\overline{\m{v}}_l\sbra{f_d} - \overline{\m{v}}_l\sbra{f}}} + \twon{\m{L}^H \sbra{\m{v}_l\sbra{f_d} - \m{v}_l\sbra{f}}} \\
&\leq Cn^3\abs{f-f_d}. \end{split}\notag}
We then recall \eqref{eq:Qdc} and apply Lemma \ref{lem:bernstein}, having for $\epsilon > Cn^3\abs{f-f_d}$,
\equ{\begin{split}
&\bP\Big(c_0^{-l}\twon{\sbra{\overline{Q}^{\sbra{l}}\sbra{f_d} - \overline{Q}^{\sbra{l}}\sbra{f}} + \sbra{Q^{\sbra{l}}\sbra{f_d} - Q^{\sbra{l}}\sbra{f}}} \\ &\phantom{c\twon{\sbra{\overline{Q}^{\sbra{l}}\sbra{f_d} - \overline{Q}^{\sbra{l}}\sbra{f}} + \sbra{Q^{\sbra{l}}\sbra{f_d} - Q^{\sbra{l}}\sbra{f}}}}\geq \epsilon\Big) \\
&\leq e^{-L\sbra{\frac{\epsilon^2}{Cn^6\abs{f-f_d}^2} - \log\frac{\epsilon^2}{Cn^6\abs{f-f_d}^2} - 1}} + \bP\sbra{\cE_{1,\tau}^c}. \end{split}\notag}
For the second term to be no greater than $\delta$, it suffices to take $M\geq CK\log \frac{K}{\delta}$ by Lemma \ref{lem:invertibility}. For the first term to be no greater than $\delta$, according to Lemma \ref{lem:ymlogy}, it suffices to let
\equ{\frac{\epsilon^2}{Cn^6\abs{f-f_d}^2} \geq 2\sbra{1+\frac{1}{L}\log \frac{1}{\delta}},\notag}
or equivalently
\equ{\frac{1}{\abs{f-f_d}} \geq C\frac{n^3}{\epsilon}\sqrt{1+\frac{1}{L}\log \frac{1}{\delta}}. \label{fmfdinv}}
Therefore, to guarantee, for any $f\in\bT$, that some $f_d\in \bT_{\text{grid}}$ can always be found such that \eqref{fmfdinv} holds, it suffices to let $\bT_{\text{grid}}$ be a uniform grid with
\equ{\abs{\bT_{\text{grid}}} = \left\lceil C\frac{n^3}{\epsilon}\sqrt{1+\frac{1}{L}\log \frac{1}{\delta}} \right\rceil.\notag}
Application of the union bound completes the proof.

\subsection{Proof of Corollary \ref{cor:l21}} \label{append:corl21}
Similar to atomic norm minimization in Theorem \ref{thm:noiseless}, to certify optimality for $\ell_{2,1}$ norm minimization in \eqref{eq:l21} it suffices to construct a dual certificate $\overline{Q}\sbra{f} = \m{a}^H\sbra{f}\m{V}$ satisfying (see also \cite[Theorem 3.1]{eldar2010average})
\lentwo{\equa{\overline  Q\sbra{f_k}
&=& \m{\phi}_k, \quad f_k\in\cT, \label{eq:dcons1}\\ \twon{\overline Q\sbra{f}}
&<& 1, \quad f\in\lbra{\widetilde{f}_g}_{g=1}^G\backslash\cT, \label{eq:dcons2}\\ \m{V}_j
&=& \m{0}, \quad j\in \Omega^c. \label{eq:dcons3}
}}The only difference from the dual certificate for atomic norm minimization lies in that only finitely many constraints are involved in \eqref{eq:dcons2}. Therefore, the dual certificate constructed in Section \ref{sec:proof} for atomic norm minimization is naturally a certificate for $\ell_{2,1}$ norm minimization.


\end{document}